\documentclass[12pt]{article}

\usepackage[utf8]{inputenc}   
\usepackage[T1]{fontenc}

\RequirePackage[tt=false, type1=true]{libertine} 
\RequirePackage[varqu]{zi4}

\usepackage{geometry}         
\usepackage{microtype}        

\usepackage{tikz}
\usepackage{standalone}
\usepackage{subfig}

\usepackage{graphicx}%
\usepackage{multirow}%
\usepackage{amsmath,amsfonts}%
\usepackage{amssymb}
\usepackage{xcolor}%
\usepackage{textcomp}%
\usepackage{manyfoot}%
\usepackage{booktabs}%
\usepackage{listings}
\usepackage{tabularx} 
\usepackage{caption}
\usepackage{array}

\usepackage[ruled, vlined,linesnumbered]{algorithm2e}
\usepackage{xspace}
\usepackage{multirow}
\usepackage{pifont}
\usepackage{balance}
\usepackage{comment}
\usepackage{graphics}
\usepackage[normalem]{ulem}
\usepackage{float}

\usepackage{amsthm}

\RequirePackage[libertine]{newtxmath}

\usepackage[bookmarks]{hyperref}
\hypersetup{
	colorlinks=true,
	linkcolor=myblue,
	citecolor=colorcite,
	urlcolor=colorcite
}

\usepackage[nameinlink,capitalize]{cleveref}
\usepackage[sort&compress]{natbib}

\newtheorem{problem}{Problem}
\newtheorem{lemma}{Lemma}
\newtheorem{theorem}{Theorem}

\newtheorem{example}{Example}
\newtheorem{proposition}{Proposition}

\usepackage{xparse}
\usepackage{pifont}
\usepackage{bm}
\usepackage{paralist}
\usepackage{nccmath}
\usepackage{mathtools}

\definecolor{myblue}{RGB}{80,80,160}
\definecolor{mygreen}{RGB}{80,160,80}
\definecolor{mygrey}{HTML}{706968}
\definecolor{mydarkgrey}{HTML}{43110a}
\definecolor{mydarkred}{HTML}{e12800}

\newcommand{\vmark}{\ding{51}\xspace}
\newcommand{\xmark}{\ding{55}\xspace}

\newcommand{\bigO}{\ensuremath{\mathcal{O}}\xspace}
\newcommand{\bigT}{\ensuremath{\Theta}\xspace}

\newcommand{\Exp}{\ensuremath{\mathbb{E}}\xspace}
\newcommand{\Prob}{\ensuremath{\mathbb{P}}\xspace}

\newcommand{\clust}{\ensuremath{\mathcal{C}}\xspace}

\newcommand{\weightEl}[1]{\ensuremath{W_{C_{#1}}(e)}\xspace}

\newcommand{\weightClustEl}[2]{\ensuremath{W_{C_{#1}}(#2)}\xspace}
\newcommand{\weightElEst}[1]{\ensuremath{\widehat{W}_{C_{#1}}(e)}\xspace}
\newcommand{\weightElEstUni}[1]{\ensuremath{\widehat{W}_{C_{#1}}^{\mathtt{U}}(e)}\xspace}
\newcommand{\C}{\ensuremath{C}\xspace}
\newcommand{\F}{\ensuremath{F}\xspace}

\newcommand{\PPSsample}{\ensuremath{\F_{\C_j}}}
\newcommand{\PPSinitSample}{\ensuremath{\F^0_{\C_j}}}

\newcommand{\sEl}[1]{\ensuremath{s(#1)}\xspace}
\newcommand{\sElEst}[1]{\ensuremath{\widehat{s}\,(#1)}\xspace}
\newcommand{\sElEstUni}[1]{\ensuremath{\widehat{s}_{\mathtt{U}}(#1)}\xspace}
\newcommand{\aEl}[1]{\ensuremath{a(#1)}\xspace}
\newcommand{\aElEst}[1]{\ensuremath{\widehat{a}(#1)}\xspace}
\newcommand{\bEl}[1]{\ensuremath{b(#1)}\xspace}
\newcommand{\bElEst}[1]{\ensuremath{\widehat{b}(#1)}\xspace}
\newcommand{\dist}[2]{\ensuremath{d(#1,#2)}\xspace}

\newcommand{\kmeans}{\ensuremath{k}-means\xspace}
\newcommand{\kmed}{\ensuremath{k}-medoids\xspace}

\newcommand{\deucl}{\ensuremath{d_{\mathtt{eucl}}}\xspace}
\newcommand{\dcos}{\ensuremath{d_{\mathtt{cos}}}\xspace}
\newcommand{\dmanh}{\ensuremath{d_{\mathtt{manh}}}\xspace}
\newcommand{\dcanb}{\ensuremath{d_{\mathtt{can}}}\xspace}
\newcommand{\fsbl}{{\texttt{FS}}\xspace}
\newcommand{\simplbl}{{\texttt{SIMPL}}\xspace}
\newcommand{\kvecEx}{\ensuremath{\bm{k}^{\mathsf{exact}}}\xspace}
\newcommand{\kvecPPS}{\ensuremath{\bm{k}^{\PPS}}\xspace}
\newcommand{\kvecUNI}{\ensuremath{\bm{k}^{\UNIbuck}}\xspace}

\newcommand{\globOne}[1]{\ensuremath{\widehat{s}_1(#1)}\xspace}
\newcommand{\globTwo}[1]{\ensuremath{\widehat{s}_2(#1)}\xspace}
\newcommand{\globHyb}[1]{\ensuremath{\bar{s}(#1)}\xspace}

\newcommand{\brdata}{\texttt{BR}\xspace}
\newcommand{\mtdata}{\texttt{MT}\xspace}

\newcommand{\gowdata}{\texttt{GOW}\xspace}
\newcommand{\powdata}{\texttt{PH}\xspace}
\newcommand{\shudata}{\texttt{SHU}\xspace}
\newcommand{\crdata}{\texttt{CR}\xspace}
\newcommand{\iotdata}{\texttt{IOT}\xspace}
\newcommand{\winedata}{\texttt{WI}\xspace}

\newcommand{\Qone}{\textbf{I1}\xspace}
\newcommand{\Qtwo}{\textbf{I2}\xspace}
\newcommand{\Qthr}{\textbf{I3}\xspace}
\newcommand{\Qfou}{\textbf{I4}\xspace}
\newcommand{\BO}[1]{\ensuremath{\mathcal{O}\!\left(#1\right)}\xspace}
\newcommand{\BT}[1]{\ensuremath{\Theta\!\left(#1\right)}}
\newcommand{\BOM}[1]{\ensuremath{\Omega\!\left(#1\right)}}
\newcommand{\clustC}{\ensuremath{\mathcal{C}}\xspace}
\newcommand{\algLocal}{\textsc{silh-pps-all}\xspace} 
\newcommand{\algGlobOne}{\textsc{gl-s}\xspace} 
\newcommand{\algGlobPPS}{\textsc{gl-pps-f}\xspace} 
\newcommand{\algGlobSub}{\textsc{gl-pps-s}\xspace} 
\newcommand{\PPS}{\textsc{pps}\xspace}
\newcommand{\Binomial}{\ensuremath{\text{Bin}}\xspace}

\newcommand{\algGlobUNI}{\textsc{gl-uni-f}\xspace} 
\newcommand{\algGlobUNISub}{\textsc{gl-uni-s}\xspace} 

\newcommand{\PPSbuck}{\textsc{pps}\xspace}
\newcommand{\UNIbuck}{\textsc{uni}\xspace}

\newcommand{\codeRepo}{\url{https://github.com/iliesarpe/ScalableSilhouetteComputation}}

\crefname{problem}{problem}{problems}
\Crefname{problem}{Problem}{Problems}

\definecolor{vir1}{HTML}{440154}
\definecolor{vir2}{HTML}{30678d}
\definecolor{vir3}{HTML}{35b778}
\definecolor{vir4}{HTML}{fde724}

\definecolor{myblue}{HTML}{4278f5}
\definecolor{colorcite}{HTML}{3b4670}
\usetikzlibrary{arrows,decorations.pathmorphing,backgrounds,positioning,fit,petri,calc,positioning,shapes.geometric}
\usetikzlibrary{shapes.geometric}
\usetikzlibrary{shapes, shapes.misc}
\usepackage{comment}
\usetikzlibrary{positioning,chains,fit,shapes,calc}
\usepackage{tkz-berge}
\usepackage{pgfplots}
\pgfplotsset{compat=1.17}

\let\oldnl\nl
\newcommand{\nonl}{\renewcommand{\nl}{\let\nl\oldnl}}

\DeclarePairedDelimiter{\ceil}{\lceil}{\rceil}

\title{Scalable and Distributed Silhouette Approximation}

\author{
	\centering
	\begin{tabular}{cc}
		\parbox[t]{0.45\linewidth}{\centering
			Ilie Sarpe\\[-2pt]
			\url{ilsarpe@kth.se}\\[-2pt]
			KTH Royal Institute of Technology
		} &
		\parbox[t]{0.45\linewidth}{\centering
			Federico Altieri\\[-2pt]
			\url{altierif87@gmail.com}\\[-2pt]
			University of Padova\vspace{7pt}
		} \\ 
		\parbox[t]{0.45\linewidth}{\centering
			Andrea Pietracaprina\\[-2pt]
			\url{capri@dei.unipd.it}\\[-2pt]
			University of Padova
		} &
		\parbox[t]{0.45\linewidth}{\centering
			Geppino Pucci\\[-2pt]
			\url{geppo@dei.unipd.it}\\[-2pt]
			University of Padova\vspace{7pt}
		} \\[2em] 
		\multicolumn{2}{c}{%
			\parbox[t]{0.6\linewidth}{\centering
				Fabio Vandin\\[-2pt]
				\url{fabio.vandin@unipd.it}\\[-2pt]
				University of Padova
			}%
		}
	\end{tabular}
}

\date{}

\begin{document}

\maketitle

\begin{abstract}
The \emph{silhouette} is one of the most widely used measures to assess the quality of a $k$-clustering of a dataset of $n$ elements. 
Its popularity stems from the fact that its evaluation requires no information beyond the clustering assignment. 
In addition, the silhouette is extremely easy to interpret, with a plethora of applications in various domains. 
The silhouette provides a score to measure the quality of a clustering as a whole or for each individual element.
However, the exact computation of the: 
\begin{inparaenum}[($i$)]
	\item silhouette of each element of a dataset; and 
	\item the global silhouette of the entire clustering;
\end{inparaenum}
require $\Theta(n^2)$ distance calculations, under general metrics. 
The quadratic complexity $\Theta(n^2)$ is extremely prohibitive, especially on massive modern datasets. 
Surprisingly, existing approximate methods using $\bigO{(n^2)}$ distance calculations are heuristics and do not offer \emph{provable} and \emph{controllable} guarantees on the quality of their results.

In this work, we introduce the first rigorous and efficient algorithms to estimate: 
\begin{inparaenum}[($i$)]
	\item the (local) silhouette of each element of a dataset; and
	\item the (global) silhouette;
\end{inparaenum}
of any metric $k$-clustering.
Our methods are based on sampling, performing $\bigO(nk\varepsilon^{-2}\ln (nk/\delta))$
distance computations, and providing estimates with additive error $\BO{\varepsilon}$ with probability at least $1-\delta$. 
That is, the user-defined parameters $\varepsilon$ and $\delta$ in $(0,1)$ control the trade-off between accuracy and efficiency. 
Furthermore, we introduce a scalable and \emph{distributed} design of our methods for the MapReduce and Massively Parallel Computing (MPC) frameworks.
Our distributed algorithms require a constant number of rounds and sublinear local memory, under practical parameter setting. Finally, we perform extensive experiments to compare our methods with state-of-the-art approaches. 
The results show that our new techniques yield the best trade-off between accuracy and efficiency for both local and global silhouette estimation. 
In addition, our methods scale efficiently to massive datasets for which an exact computation of the silhouette is not practical. 
\end{abstract}

\section{Introduction}
\label{sec:intro}

Clustering is of fundamental importance for data analysis, with ubiquitous applications in various areas including pattern recognition~\citep{Xu2015Survey}, bioinformatics and biomedicine~\citep{Ikotun2023Surv}, and data management~\citep{AggarwalR13,Karypis1998metis}. 
Broadly, clustering  a dataset requires grouping elements with high similarity, and separating dissimilar elements~\citep{kleinberg2002impossibility}.
The clustering of a dataset is often viewed as an optimization problem, where the objective varies according to the desired properties of the optimal clustering. 
A large number of algorithms have been designed to cluster a dataset according to various objectives~\citep{HennigMMR15}. 
In particular, recent research for clustering methods has its focus on the design of rigorous and scalable approaches to cluster massive datasets~\citep{AwasthiB15,MalkomesKCWM15,BargerF16,CeccarelloFPPV17,CeccarelloPP19,MazzettoPP19}.

The vast number of clustering objectives and methods, introduce the need for rigorous methods to identify ``good quality'' clusterings of a dataset~\citep{Schubert2023StopElbow, Liu2010Internal}.
\emph{Clustering evaluation} (or simply \emph{validation}) assesses the quality of a given clustering relying on an evaluation measure. 
An evaluation measure is either \emph{external} (i.e., \emph{supervised}) or \emph{internal} (i.e., \emph{unsupervised}). 
Supervised measures rely on external knowledge about the data, e.g., ground truth labels---rarely available \citep{TanSK06}. 
In contrast, internal measures evaluate the quality of a clustering relying uniquely on the dataset. 
Due to their broad applicability, internal measures are widely used in practice to identify high-quality clusterings~\citep{Hassan2024Metrics}.

The \emph{silhouette} is one of the most commonly used internal measures for clustering evaluation~\citep{Rousseeuw1987,Hassan2024Metrics}.
The silhouette $\sEl{e}$ of an element $e$ assigned to a cluster
$C$ is defined as the ratio $(\bEl{e}-\aEl{e})/\max\{\aEl{e},\allowbreak\bEl{e}\}$ where $\aEl{e}$ is the average distance of $e$ from the other elements of $C$,
and $\bEl{e}$ is the minimum average distance of $e$ from the elements of a cluster $C'$ different from $C$. 
In other words, $\sEl{e}$ provides a simple to interpret score between -1 and 1, which captures the quality of the assignment of $e$ to its group $C$.
That is, if $e$ is closer (on average) to the elements in its cluster
$C$ than to the elements in the ``closest'' cluster $C'\neq C$, then $\sEl{e}$ will be close to 1. In contrast, $\sEl{e}$ becomes negative when the element $e$ is closer to the elements in $C'$. 
As an example, consider elements $e_{13}$ and $e_{14}$ in~\Cref{subfig:dataset} (where the distance is the Euclidean distance): $i$) the silhouette $\sEl{e_{13}}$ is positive since $e_{13}$ is closer to the elements of its cluster ($e_{11}$ and $e_{12}$) compared to its average distance from the elements of other clusters; $ii$) the silhouette $\sEl{e_{14}}$ is close to -1 since $e_{14}$ is very close to all elements of cluster $C_3$ and distant from the other elements of its cluster $C_1$.
Hence, the silhouette $\sEl{e}$ of an element $e$ (or \emph{local silhouette}) provides an interpretable and reliable score 
to assess the quality of the assignment of each element $e$ to its cluster~\citep{Lovmar2005SilhLoc,Dinh2019SilhPlots}.

\begin{figure}[t]
	\centering
	\subfloat[\label{subfig:dataset}]{
		\begin{minipage}{0.49\textwidth}
			\centering
			\begin{tikzpicture}[scale=0.7, thick,
		green place/.style={circle,draw=mydarkred!80,fill=white!60,thick, inner sep=0pt,minimum size=6.5mm},
		square/.style={regular polygon,regular polygon sides=4}
		every node/.style={draw,circle},
		fsnode/.style={fill=mygrey},
		pre/.style={->,shorten <=0.pt,>=stealth',semithick},
		post/.style={<-,shorten >=0.pt,>=stealth',semithick},
		ssnode/.style={fill=mygreen},
		every fit/.style={ellipse,draw,inner sep=-2pt,text width=2cm},
		]
		\draw[help lines, color=gray!30, dashed] (-3.9,-3.9) grid (3.9,3.9);
		\draw[pre] (-4,0)--(4,0) node[right]{};
		\draw[pre] (0,-4)--(0,4) node[above]{};
		
		\node[draw, cross out, color=vir1, minimum size=4pt, inner sep=4pt, thick, label={[xshift=-3.5mm, yshift=-3mm]right:{$e_1$}}] at (-2.5,1) {};
		\node[draw, cross out,color=vir1, minimum size=4pt, inner sep=4pt, thick, label={[xshift=-0.5mm, yshift=-0.5mm]right:{$e_2$}}] at (-2,1.5) {};
		\node[draw, cross out,color=vir1, minimum size=4pt, inner sep=4pt, thick,  label={[xshift=0.5mm, yshift=-0.5mm]above:{$e_3$}}] at (-3,2) {};
		\node[draw, cross out,color=vir1, minimum size=4pt, inner sep=4pt, thick, label={[xshift=-3.5mm, yshift=-3mm]right:{$e_{14}$}}] at (2.5,-3) {};
		
		\node[draw, rectangle, minimum size=3pt, inner sep=4pt, thick, fill=vir2,  label={[xshift=1.5mm, yshift=-0.5mm]above:{$e_{10}$}}] at (3,3) {};
		\node[draw, rectangle, minimum size=3pt, inner sep=4pt, thick, fill=vir2, label={[xshift=-1.5mm, yshift=0.2mm]below:{$e_{5}$}}] at (0,0) {};
		\node[draw, rectangle, minimum size=3pt, inner sep=4pt, thick, fill=vir2, label={[xshift=2mm, yshift=-0.5mm]above:{$e_9$}}] at (1,1) {};
		\node[draw, rectangle, minimum size=3pt, inner sep=4pt, thick, fill=vir2, label={[xshift=0mm, yshift=-0.5mm]above:{$e_8$}}] at (0.5,1.1) {};
		\node[draw, rectangle, minimum size=3pt, inner sep=4pt, thick, fill=vir2, label={[xshift=-0.mm, yshift=-0.5mm]above:{$e_7$}}] at (1.1,2) {};
		\node[draw, rectangle, minimum size=3pt, inner sep=4pt, thick, fill=vir2, label={[xshift=-1.5mm, yshift=-0.5mm]above:{$e_6$}}] at (0,1.1) {};
		\node[draw, rectangle, minimum size=3pt, inner sep=4pt, thick, fill=vir2, label={[xshift=0.5mm, yshift=0.2mm]below:{$e_4$}}] at (0.9,0) {};
		
		\node[draw, regular polygon, fill=vir3, regular polygon sides=3, minimum size=2pt, inner sep=2pt, thick, label={[xshift=0mm, yshift=-0.5mm]above:{$e_{11}$}}] at (2.5,-1) {};
		\node[draw, regular polygon, fill=vir3,regular polygon sides=3, minimum size=3pt, inner sep=2pt, thick, label={[xshift=0mm, yshift=-0.5mm]left:{$e_{12}$}}] at (2,-1.5) {};
		\node[draw, regular polygon, fill=vir3, regular polygon sides=3, minimum size=3pt, inner sep=2pt, thick,  label={[xshift=0.5mm, yshift=-0.5mm]right:{$e_{13}$}}] at (3,-2) {};
		
		\node[draw,  diamond, minimum size=2.8pt, fill=vir4, inner sep=2pt, thick,  label={[xshift=0.5mm, yshift=-0.5mm]above:{$e_{15}$}}] at (-3,-3) {};
		
	\end{tikzpicture}
		\end{minipage}
	}
	\subfloat[\label{subfig:silhPlot}]{
		\begin{minipage}{0.49\textwidth}
			\centering
			\begin{tikzpicture}[scale=2.3, thick,
		green place/.style={circle,draw=mydarkred!80,fill=white!60,thick, inner sep=0pt,minimum size=6.5mm},
		square/.style={regular polygon,regular polygon sides=4}
		every node/.style={draw,circle},
		fsnode/.style={fill=mygrey},
		pre/.style={->,shorten <=0.pt,>=stealth',semithick},
		post/.style={<-,shorten >=0.pt,>=stealth',semithick},
		ssnode/.style={fill=mygreen},
		every fit/.style={ellipse,draw,inner sep=-2pt,text width=2cm},
		]
  \draw[help lines, color=gray!30, dashed] (-1,-1) grid (1,1);

	\draw[->, thick] (-1.1,-1) -- (1.1,-1) node[right] {};
	
	\draw (-1,-1.02) -- (-1,-0.98) node[below=1pt] {\small $-1$};
	\draw (0,-1.02) -- (0,-0.98) node[below=1pt] {\small $0$};
	\draw ( 1,-1.02) -- ( 1,-0.98) node[below=1pt] {\small $+1$};
	
	\pgfmathdeclarefunction{getxval}{1}{%
	\pgfmathparse{{-0.89, 0.32, 0.42, 0.45, 0.24, 0.28, 0.37, 0.52, 0.56, 0.56, 0.62, 0.69, 0.7, 0.72, 0.0}[#1]}%
	}
	
	\def\colors{{red, blue, green, orange, purple, cyan, magenta, brown, olive, teal, violet, lime, pink, gray, black}}
	\pgfmathdeclarefunction{getxlab}{1}{%
		\pgfmathparse{{"$e_{14}$","$e_2$","$e_3$","$e_1$","$e_4$","$e_{10}$",
				"$e_5$","$e_6$","$e_9$","$e_7$","$e_8$","$e_{11}$",
				"$e_{12}$","$e_{13}$","$e_{15}$"}[#1]}%
	}
	
\foreach \i in {0,...,14} {
	\pgfmathsetmacro{\x}{getxval(\i)}
	\pgfmathsetmacro{\y}{-1 + 0.07 + 2*\i/15}
	\pgfmathsetmacro{\xlab}{getxlab(\i)}
	\ifnum\i<4
	\def\col{vir1};
	\else\ifnum\i<11
	\def\col{vir2};
	\else\ifnum\i<14
	\def\col{vir3}
	\else
	\def\col{vir4};
	\fi\fi\fi
	\draw[thick, color=\col] (0,\y) -- (\x,\y);

	\ifnum\i<1
	 \node[anchor=west] at (0, \y) {\xlab};
	 \else
	 \node[anchor=east] at (0, \y) {\xlab};
	 \fi
	 
	\ifnum\i<4
	\def\col{vir1};
	\node[draw, cross out, thick, inner sep=2.2pt, color=\col] at (\x, \y) {};
	\else\ifnum\i<11
	\def\col{vir2};
	\node[draw, rectangle, thick, fill=\col, inner sep=2.4pt] at (\x, \y) {};
	\else\ifnum\i<14
	\def\col{vir3}
	\node[draw, regular polygon, regular polygon sides=3, fill=\col, inner sep=1.5pt] at (\x, \y) {};
	\else
	\def\col{vir4};
	\node[diamond, draw, fill=\col, inner sep=1.7pt] at (\x, \y) {};
	\fi\fi\fi
	}
	\draw[dashed, gray!80] (0.36993,-1) -- (0.36993,1.02);
	
	\node[anchor=north] at (0.36993,-0.985) {$s(\mathcal{C})$};
	\node[anchor=north] at (0, -1.13) {$s(e)$};
	\useasboundingbox (-1.1,-1.1) rectangle (1.1,1.1);
		
	\end{tikzpicture}
		\end{minipage}
	}
	\caption{(a): dataset $V=\{e_1\dots,e_n\} $ for $n=15$ with elements in the Euclidean plane $\mathbb{R}^2$. Different shapes represent $k=4$ different clusters, that is $\clust = \{\C_1=\{e_1,\dots,e_3,e_{14}\}, \C_2=\{e_4,\dots,e_{10}\},\C_3=\{e_{11},e_{12},e_{13}\}, \C_4=\{e_{15}\}\}$. (b): silhouette $s(e)$ of the elements $e\in V$ (where the distance is the Euclidean distance): values are grouped by clusters and sorted. The dashed line represents the value of the \emph{silhouette of the clustering} $\clust$, marked with $\sEl{\clust}$. Such representation is known as the \emph{silhouette plot}~\citep{Rousseeuw1987}.
	}\label{fig:workingex}
\end{figure}

Furthermore, the \emph{average} silhouette of \emph{all} elements, known as the silhouette of a
clustering (or \emph{global silhouette}), has a plethora of applications in many important areas~\citep{hossain2007gdclust, ngefficient, sellam2016blaeu, wiwie2015comparing}, and is described in widely used data mining textbooks~\citep{TanSK06,han2011data}. 
As an example, the silhouette of a clustering can be used to identify a good value for the number $k$ of clusters computed by popular clustering algorithms, e.g., \kmeans, or \kmed~\citep{Dinh2019SilhPlots,Lenssen2024MedSilh}. 

For general distance metrics, the exact
computation of the silhouette for a $k$-clustering of $n$ elements
requires $\Theta(n^2)$ distance calculations, irrespective of $k$.
The quadratic barrier constitutes a key challenge that prevents silhouette-based analyses on modern massive datasets. 
Surprisingly, for such a general setting, there are no methods to efficiently approximate the
silhouette with \emph{provably high accuracy} guarantees and breaking the $\Theta(n^2)$ barrier. 
In contrast, various scalable techniques have been proposed to efficiently cluster large datasets with rigorous guarantees, especially in distributed environments~\citep{ene2011fast,feldman2011unified,bahmani2012scalable,balcan2013distributed,MalkomesKCWM15}.

In this work, we introduce the \emph{first randomized approximation algorithms} with rigorous (probabilistic) guarantees on the quality of their estimates for two key problems: $i$) estimating the silhouette of each element $e$ of a given clustering; $ii$) estimating the global silhouette of a clustering. 
Our algorithms feature a memory-efficient distributed implementation in popular frameworks such as \emph{MapReduce}~\citep{DeanG08,LeskovecRU14,PietracaprinaPRSU12},
and the \emph{Massively Parallel Computing} (MPC) model~\citep{Im2023MPC}, in analogy with state-of-the-art scalable clustering algorithms.
Similar to other randomized algorithms, the guarantees of our methods are controlled by two easy to interpret parameters $\varepsilon$ and $\delta$, controlling the trade-off between efficiency and accuracy. 
In particular, $\varepsilon$ controls the (absolute) error between the output of our methods and the actual values being estimated, and $\delta$ controls the error probability (all our results hold with probability at least $1-\delta$).

For the estimation of the silhouette of each element $e$, we rely on the Probability Proportional to Sampling (\PPSbuck) 
technique by~\citet{Chechik2015}, which can be used to obtain highly accurate estimates of the average distance between $e$ and the elements of a given cluster. 
For a $k$-clustering, our method performs $\bigO(nk\varepsilon^{-2} \log(nk/\delta))$  
distance computations, producing estimates $\sElEst{e}$ such that $|\sEl{e}-\sElEst{e}| \le f(\varepsilon)$ with $f(\varepsilon) = \BO{\varepsilon}$ with probability at least $1-\delta$, simultaneously for \emph{all} elements $e$. 
Note the substantial improvement over the $\Theta(n^2)$ distance computations of an exact method, at the expense of a small estimation error.

For the global silhouette estimation, we design three estimators, including a simple algorithm that computes the average silhouette using a small random subset of elements, and two slightly more complex algorithms relying on the \PPSbuck approach. 
We show in our analysis that all our methods perform $o(n^2)$ distance computations, and report an estimate with bounded error from the actual global silhouette, with controlled error probability. 
Finally, our experiments show that all our methods are extremely efficient and scalable in practice, especially on large datasets, where exact approaches are infeasible and existing techniques perform poorly.

\smallskip
\textbf{Our contributions.} To summarize our contributions are as follows.

\begin{itemize}
	\item We present the first rigorous sampling-based approximation algorithms for two key problems involving the silhouette: 
	$i$) estimating the silhouette of all elements of a clustered dataset; 
	$ii$) estimating the global silhouette of a clustering. 
	Our methods require the distance between elements to be defined through a metric.
	\item We prove that all our methods require a significantly smaller number of distance computations than $\Theta(n^2)$, reporting high-quality estimates with (absolute) error controlled by parameter $\varepsilon$,  with probability controlled by parameter $\delta$. 
	\item We provide a distributed MapReduce design of our methods and discuss how our methods can be ported to the popular MPC framework. 
	Our distributed methods use sublinear local memory (at each worker) and a constant number of rounds, two features that are highly desirable when processing massive datasets.
	\item We perform an extensive experimental evaluation on medium- and large-size datasets with multiple goals: 
	$i$) assess the trade-off between the quality of the estimates and the number of distance computations of our methods for estimating the global silhouette of the entire clustering; 
	$ii$) evaluate the accuracy of our method for the estimation of the local silhouette for all elements; 
	$iii$) assess the performance of our distributed methods; 
	$iv$) illustrate applications of our new methods, leveraging both the global and local silhouette. 
\end{itemize}

\textbf{Contributions over the conference version.}
A preliminary version of this work was presented at the SIAM International Conference on Data Mining, 2021, authored by a subset of the authors~\citep{Altieri2021Silh}. This work feartures substantial modifications and additions to that preliminary work, as summarized below.
\begin{itemize}
	\item We propose two novel estimators for the global silhouette coefficient, in addition to the estimator originally devised in~\citep{Altieri2021Silh}. 
	Our new methods require considerably fewer samples compared with the method appeared in~\citep{Altieri2021Silh}, yielding a reduction in the time complexity for global estimation.
	\item We rigorously study the problem of local silhouette estimation, providing:
	\begin{inparaenum}[$i$)]
		\item a sound algorithm to solve the local silhouette estimation problem, based on techniques appeared in our previous version;
		\item rigorous bounds on the sample complexity, fixing existing issues with previous proofs in the literature~\citep{Chechik2015};
		\item we prove that the guarantees of our new method based on \PPSbuck sampling cannot be matched by a simpler and more intuitive uniform sampling-based approach---supporting the design of our method.
	\end{inparaenum}
	\item We include all proofs omitted from \citep{Altieri2021Silh} for space constraints.
	\item We show how the distributed MapReduce implementation of our methods can be ported to the popular MPC framework, which is widely adopted for large-scale data analysis.
	\item We perform a new and extensive experimental evaluation, over medium- and large-size real datasets available online, showing the effectiveness of our  approaches in various settings.
\end{itemize}

The paper is structured as
follows. In~\Cref{sec:prelims}, we introduce the necessary preliminary notions. In \Cref{sec:related}, we discuss relevant previous work, with a focus on the silhouette approximation. 
In \Cref{sec:methods}, we introduce our methods and their analysis.  
In \Cref{sec:exp}, we present the results of our extensive experimental evaluation.
Lastly, \Cref{sec:conclusions} summarizes our concluding remarks.

The code and necessary material to reproduce our results is available at the following repository~\codeRepo. 

\section{Preliminaries} 
\label{sec:prelims}
Let $U$ be a metric space equipped with a distance function $d(\cdot,\cdot)$,
and let $V = \{e_1, \dots e_n\} \subseteq U$ be a dataset of $n$
distinct elements from $U$. 
Let $\clust = \{\C_1, \dots \C_k\}$ be a
\emph{$k$-clustering}\footnote{We will drop $k$ when clear from the context.} of $V$, that is $\clust$ consist of a \emph{partition} of $V$ into $k$ disjoint and non-empty subsets called \emph{clusters}. 
Formally, the \textit{silhouette $\sEl{e}$ of an element} $e \in V$ belonging to a cluster $C\in \clust$, as introduced by~\citet{Rousseeuw1987} is defined as,
\begin{equation}\label{eq:localSilh}
	\sEl{e} \doteq \frac{\bEl{e}-\aEl{e}}{\max\{\aEl{e}, \bEl{e}\}} 
\end{equation}
where 
\begin{equation}\label{eq:AandBterms}
	\aEl{e} = \frac{\sum_{e' \in C} \dist{e}{e'}}{|C|-1} \enspace,
	\qquad
	\bEl{e} = \min_{\substack{C_j \in \clust, \\ C_j \neq C}}\frac{\sum_{e' \in C_j}\dist{e}{e'}}{|C_j|}\enspace.
\end{equation}
That is, $\aEl{e}$ is the average distance of element $e\in V$ from all other
elements of its cluster $C$; 
while $\bEl{e}$ is the minimum average distance of $e\in C$ from
the elements in some cluster $C_j\in \clust$ different from $C$.
Note that by \Cref{eq:localSilh} it holds that $\sEl{e} \in [-1,1]\subseteq \mathbb{R}$. If element $e$ is the only element in its cluster $C$ such that $C=\{e\}$, we define  $\sEl{e}=0$, as proposed by~\citet{Rousseeuw1987}. As an example, consider $e_{15}$ in~\Cref{fig:workingex}, then $s(e_{15})=0$.

The silhouette $\sEl{e}$ of an element $e\in E$ is a \emph{local measure} that evaluates the quality of a clustering $\clust$ with respect to the assignment of $e$ to its cluster $C$. 
Under a good assignment, the average distance of $e$ to the points in $C$ is much smaller than the average distance to the points of a different cluster, thus from \Cref{eq:localSilh} it holds $\sEl{e} \approx 1$. 
Hence, high-values of $\sEl{e}$ denote a ``good'' assignment of element $e$ to its cluster $C$. 
Conversely, if  $e$ is, on average, much closer to the elements of a different cluster $C_j\neq  C$ then by~\Cref{eq:localSilh} it holds $\sEl{e}\approx -1$, i.e., a ``bad'' cluster assignment for $e$.
The individual $\sEl{e}$'s are often represented through
the \emph{silhouette plot}~\citep{Rousseeuw1987}, where the abscissae range in $[-1,1]$ (possible silhouette values) and the ordinate values  correspond to the distinct elements, ordered by clusters. 

\begin{example}\label{ex:silhplots}
	Consider the dataset from~\Cref{subfig:dataset} and $d(\cdot,\cdot)$ being the Euclidean distance.\footnote{Recall for $\bm{x},\bm{y}\in \mathbb{R}^z$ the Euclidean distance $d(\bm{x},\bm{y})$ is $\sqrt{\sum_i (\bm{x}_i-\bm{y}_i)^2}$ where $\bm{x}_i, i\in[z]$ denotes the $i$-th coordinate of $\bm{x}$.} \Cref{subfig:silhPlot} shows its silhouette plot, with points grouped by cluster. We observe that, for $e_{11}\in \C_3$ it holds $s(e_{11}) > 0 $ since $e_{11}$ is much closer to $e_{12}$ and $e_{13}$ than to the points of any other cluster different from $\C_3$. 
	In contrast, consider $e_{14}$, then $b(e_{14}) = (d(e_{11},e_{14}) +d(e_{12},e_{14}) + d(e_{13},e_{14}))/3 < (d(e_{1},e_{14}) + d(e_{2},e_{14}) + d(e_{3},e_{14}))/3 = a(e_{14})$. Therefore $s(e_{14}) < 0$, capturing the intuition that assigning $e_{14}$ to $\C_3$ instead of $\C_1$ leads to a better clustering.
\end{example}

The \emph{silhouette of a clustering} $\clust$ evaluates the quality of a clustering $\clust$ as the \emph{average} silhouette of all the elements $e\in V$. Formally, the silhouette $\sEl{\clust}$ of a clustering $\clust$ is
\begin{equation}
	\label{eq:avgShil}
	\sEl{\clust} \doteq \frac{1}{n} \sum_{i = 1}^{n}\sEl{e_i}\enspace.
\end{equation}

Therefore,
$\sEl{\clust}$ provides a \emph{global measure} of the quality of the whole clustering $\clust$, where a value close to 1 denotes a high quality of the clustering $\clustC$.

\begin{example}\label{ex:globalSilh}
	Consider~\Cref{subfig:silhPlot} under the setting of~\Cref{ex:silhplots}: we observe that for the global silhouette of the clustering $\clust$ it holds $s(\clust) \approx 0.36$. That is, on average, most points are well assigned over $\clust$, as we can observe from~\Cref{subfig:dataset}. 
	Clearly, a positive value for $s(\clust)$, does not imply that all points have $s(e)>0$, e.g., $s(e_{14})<0$.
\end{example}

In practice, the score $s(\clust)$ is widely used to identify a good value for $k$ (e.g.,~\citep{Lleti2004Silh}). 
That is, analysts often compare various clusterings $\{\clust_{k_i}\}_{i\ge 1}$ obtained with different values of $k_i$, selecting the best value of $k$ as the one maximizing $s(\clust_{k_i})$, for $i\ge 1$. 
In fact, using the silhouette to select a good value of $k$ for a $k$-clustering is highly effective and reliable, in contrast to other popular methods, such as the elbow method, which often lead to poor results~\citep{Schubert2023StopElbow}.

It is clear from~\Cref{eq:localSilh} that the straightforward computation of 
$\sEl{e}$ for each element $e\in V$ and the subsequent computation of $\sEl{\clust}$ require $\Theta(n^2)$ distance calculations.
We remark that such a quadratic complexity is \emph{extremely resource demanding} even when dealing with medium-sized
datasets, i.e., impractical for modern massive data volumes.

Finally, we wish to recall that for an algorithm employing random sampling, its \emph{sample complexity} refers to the total number of samples generated throughout its execution \citep{Mitzenmacher2017Prob}. 

\section{Related work}
\label{sec:related}

Since the introduction of the silhouette metric  by~\citet{Rousseeuw1987}, its use has been ubiquitous for clustering applications \citep{MoulaviJCZS14,XiongL15,TomasiniEBM16, Arbelaitz2013ComparingIndices},
including the use of the silhouette to obtain new clustering objectives for both \kmed~\citep{Kaufmann1987Clust,Lenssen2024MedSilh} and \kmeans~\citep{Lai2024Silhkmeans}. 
However, as observed before, obtaining the exact silhouette is impractical
for large datasets, due to the $\Theta(n^2)$ distance computations required. 
Therefore, various methods
have been proposed to reduce the time complexity, 
or to
simplify the silhouette calculation under certain assumptions.
In this section, we review to the most relevant methods for silhouette approximation, 
which mostly apply to center-based clusterings. 
That is, clusterings where each cluster is associated with a distinguished center from the metric space (e.g., the centroids in the case of $k$-means clusterings).

\citet{VanderLaan2003Simpl} introduced a variant of the silhouette $\sEl{e}$, where 
the term $\bEl{e}$ is computed as the distance from the closest center, other than the cluster $e$ belongs to (while $\aEl{e}$ is computed according to~\Cref{eq:AandBterms}). 
Following such idea, \citet{Hruschka2004}  introduced the \emph{simplified silhouette}. 
The simplified silhouette for an element $e$ assigned to cluster $C$ evaluates $\aEl{e}$ as the distance between $e$ and the center of $C$, 
and $\bEl{e}$ as the distance between $e$ and the closest center of a cluster different from $C$. 
The complexity of computing the simplified silhouette reduces to $\bigO(nk)$. 
\citet{Hruschka2004}
and \citet{Wang2017}
show empirically that the simplified silhouette can be an effective evaluation measure for clusterings returned by Lloyd's
algorithm \citep{Lloyd82}. 
However, there is no evidence of the effectiveness of the simplified silhouette
for other clusterings (e.g, based on arbitrary distance functions). 
In addition, as shown in our experimental evaluation in~\Cref{sec:exp}, the difference between the values of the silhouette (\Cref{eq:localSilh}) and the simplified silhouette, can be arbitrary large, making the simplified silhouette unreliable for rigorous silhouette-based analyses.

A different heuristic for improving the computation of the exact silhouette
of center-based clusterings under Euclidean distances was introduced by \citet{FrahlingS08}.
For each element $e$ of a cluster $C$, the term $\aEl{e}$ is computed according to its definition (\Cref{eq:AandBterms}), while for $\bEl{e}$ 
the heuristic first determines the average distance $\bar{d}_{e,C'}$ between
$e$ and the elements of the cluster $C' \neq C$, whose centroid is
closest to $e$. If, for  any other cluster $C'' \not\in
\{C,C'\}$, it holds that the distance between $e$ and the centroid of $C''$ is greater than or equal to $\bar{d}_{e,C'}$, then
it sets $\bEl{e}=\bar{d}_{e,C'}$, otherwise it computes
$\bEl{e}$ according to~\Cref{eq:AandBterms}. 
For Euclidean distances, it can be shown that in the former case $\bar{d}_{e,C'}$ coincides exactly with the
value of $\bEl{e}$ defined in~\Cref{eq:AandBterms}.
The worst case complexity remains quadratic,
but the heuristic likely performs less than $\BT{n^2}$ distance computations in practice. 

The Apache Spark programming framework provides
optimized methods for computing the silhouette
of a clustering under $d$-dimensional \emph{squared Euclidean} distances and under
one formulation of cosine distance.\footnote{\url{https://spark.apache.org/}}  
For such specific distances,
simple algebra suffices to show that the pre-computation of a limited number of values dependent on the coordinates of the points in each of the $k$
clusters yields a parallelizable
algorithm performing $\bigO(nkd)$ distance computations. However, the above optimization does not apply to \emph{arbitrary} distance metrics, as of interest in our work. 

The methods presented in this paper apply to any metric distance and provide provably accurate silhouette estimations. Our methods rely on the use a \emph{Probability Proportional to Size} (\PPSbuck)
sampling scheme, where each element is sampled with a probability proportional to a certain ``size'' measure.
The use of \PPSbuck sampling has been pioneered in the context of center-based clustering algorithms, distance query processing and centrality estimation in graphs~\citep{feldman2011unified,balcan2013distributed,Chechik2015,cohen2018clustering}. 
To the best of our knowledge, prior to the conference version of the current work~\citep{Altieri2021Silh} the use of \PPSbuck for efficient clustering evaluation had not been explored.

\section{Methods} 
\label{sec:methods}

In this section we present our methods. In \Cref{subsec:localPPS} we focus on the estimation of the silhouette of all individual elements, while in \Cref{subsec:globalSilh} we focus on the estimation of the silhouette of the whole clustering. Finally, \Cref{sec:MR} describes scalable distributed implementations of our methods on the MapReduce and  MPC frameworks.

\subsection{Estimating the silhouette of each element}\label{subsec:localPPS}

In this section we address the following problem,
\begin{problem}\label{probl:LocalEst}
	Given a dataset $V$ from an arbitrary metric space, a $k$-clustering $\clust$, and two parameters $\varepsilon, \delta \in (0,1)$, obtain $\sElEst{V} = \{(e, \sElEst{e} : e \in V)\}$ such that, with probability at least $1-\delta$, 
	$|\sElEst{e} - \sEl{e}| =\BO{\varepsilon}$ simultaneously for each element $e\in V$, minimizing the number of distance computations performed.
\end{problem}

Our main result, addressing~\Cref{probl:LocalEst} is the  following.

\begin{theorem}\label{theo:localAlg}
	Consider a dataset $V$, a $k$-clustering $\clust$, and two parameters $\varepsilon, \delta \in (0,1)$. 
	There exists a randomized algorithm based on \PPSbuck sampling with sample complexity $\bigO(k\varepsilon^{-2}\log(nk/\delta))$ that, with probability $1-\delta$:
	\begin{compactenum}
		\item returns a set $\sElEst{V} = \{(e, \sElEst{e} : e \in V)\}$ such that, for all $e\in V$, $|\sElEst{e} -\sEl{e}|\le 4\varepsilon/(1-\varepsilon)$;
		\item performs $\bigO(nk\varepsilon^{-2} \log(nk/\delta))$ distance computations.
	\end{compactenum}
\end{theorem}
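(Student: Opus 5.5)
For each cluster $C_j$ we build a \PPSbuck sample in the style of \citet{Chechik2015}. First pick an arbitrary (or random) pivot $p_j\in C_j$ and compute $d(p_j,x)$ for all $x\in C_j$; this costs $\bigO(n)$ distance computations over all clusters. Define sampling weights
\[
\gamma_j(x)=\max\Bigl\{\tfrac{1}{|C_j|},\ \tfrac{d(p_j,x)}{\sum_{y\in C_j}d(p_j,y)}\Bigr\}.
\]
Draw a sample $\PPSsample$ in which each $x\in C_j$ is included independently with probability $q_x=\min\{1,t\,\gamma_j(x)\}$, where $t=c\,\varepsilon^{-2}\ln(nk/\delta)$. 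Then $\Exp|\PPSsample|\le 2t$, and a Chernoff bound together with a union bound over the $k$ clusters gives total sample size $\bigO(kt)$. This is the claimed sample complexity.

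For every $e\in V$ and every cluster $C_j$ we set
\[
\weightElEst{j}=\sum_{x\in\PPSsample}\frac{d(e,x)}{q_x},
\]
which is an unbiased (Horvitz--Thompson) estimator of $\weightEl{j}=\sum_{x\in C_j}d(e,x)$. Computing it for all pairs costs $\bigO(n\cdot kt)$ distance computations. From these we form $\aElEst{e}$, dividing by $|C|-1$ (and excluding $e$ itself, which only changes the sum by $d(e,e)=0$), and $\bElEst{e}$ as the minimum over $j$ of $\weightElEst{j}/|C_j|$. Finally $\sElEst{e}$ is obtained by plugging these into \Cref{eq:localSilh}.

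The core step is a concentration lemma: for fixed $e$ and $j$, with probability $1-\delta/(nk)$,
\[
|\weightElEst{j}-\weightEl{j}|\le\varepsilon\,\weightEl{j}.
\]
The key metric fact is that each term $d(e,x)/q_x$ is bounded relative to $\weightEl{j}$ whenever $q_x<1$. By the triangle inequality, $d(e,x)\le d(e,p_j)+d(p_j,x)$. Moreover $|C_j|\,d(e,p_j)\le \weightEl{j}+\sum_y d(p_j,y)$ and $\sum_y d(p_j,y)\le |C_j|\,d(e,p_j)+\weightEl{j}$. From these one gets
\[
\frac{d(e,x)}{q_x}\le \frac{1}{t}\Bigl(|C_j|\,d(e,p_j)+\sum_y d(p_j,y)\Bigr),
\]
and this in turn is $\bigO(\weightEl{j}/t)$ as long as $|C_j|\,d(e,p_j)=\bigO(\weightEl{j})$. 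Once each summand is bounded by $M=\bigO(\weightEl{j}/t)$, the variance is at most $M\,\weightEl{j}$. Bernstein's (or a Chernoff--Hoeffding bound for bounded independent summands) then yields relative error $\varepsilon$ with failure probability $\exp(-\Omega(\varepsilon^2 t))$, and the choice of $t$ with a union bound over all $nk$ pairs gives the simultaneous guarantee. The case $|C|=1$ is trivial, since $\sEl{e}=0$ is output exactly.

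The main obstacle is exactly the bound $|C_j|\,d(e,p_j)=\bigO(\weightEl{j})$. It fails when $e$ is close to almost all of $C_j$ but $p_j$ is an outlier. This is presumably the flaw in \citet{Chechik2015} alluded to in the introduction. My first attempt would be to handle it by taking a median-like pivot: choose a small uniform sample, pick the point minimising the sum of distances to the sample, and argue that then $\sum_y d(p_j,y)\le \bigO(1)\cdot\min_z\sum_y d(z,y)\le \bigO(\weightEl{j})$. With that bound, the triangle inequality gives $|C_j|\,d(e,p_j)\le \weightEl{j}+\sum_y d(p_j,y)=\bigO(\weightEl{j})$. If that fails, I would fall back on splitting $C_j$ into points near $e$ and points far from $e$, and bounding the two contributions separately.

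Given relative $\varepsilon$-accuracy of all the $W$'s, we have $\aElEst{e}\in[(1-\varepsilon)\aEl{e},(1+\varepsilon)\aEl{e}]$, and likewise $\bElEst{e}$, since the minimum of relatively accurate estimates is relatively accurate. A short case analysis on which of $a,b$ is the maximum then bounds $|\sElEst{e}-\sEl{e}|$. For example, with $b\ge a$ we have $\sElEst{e}=1-\aElEst{e}/\bElEst{e}$, and the ratio is within factor $(1\pm\varepsilon)/(1\mp\varepsilon)$ of $\aEl{e}/\bEl{e}\le 1$. This gives error at most $2\varepsilon/(1-\varepsilon)$ in each regime. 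Accounting for possible switching of which term is the maximum costs another factor of $2$, giving the stated $4\varepsilon/(1-\varepsilon)$.
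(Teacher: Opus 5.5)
Your route differs from the paper's and can be made to work, but the step you flag as the main obstacle is only announced, and everything rests on it. You need a pivot with $W_{C_j}(p_j)\le\lambda W^*$, where $W^*=\min_{z\in C_j}W_{C_j}(z)$, for an absolute constant $\lambda$. This must hold with failure probability about $\delta/k$ per cluster, so that a union bound over clusters goes through.

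\textbf{Why the obvious choices fail.}
\begin{itemize}
\item An outlier pivot turns your weights into uniform sampling, which fails on the instance of \Cref{theo:unifails}.
\item A random pivot is not enough either. Since $\mathbb{E}_p[W_{C_j}(p)]\le 2W^*$, Markov only guarantees a $\lambda$-approximation with probability $1-2/\lambda$. Confidence $1-\delta/k$ would then force $\lambda\approx 2k/\delta$ and inflate $t$ by that factor.
\item For the sample medoid you propose, the natural argument via concentration of the sums $\sum_{s\in S}d(p,s)$ breaks down. These sums are heavy-tailed, a single sampled outlier inflates every candidate's sum, and Markov again gives only constant confidence.
\end{itemize}

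\textbf{How to close the gap.} The claim is true and can be proved in either of two ways.
\begin{itemize}
\item[(i)] Draw $L=O(\log(k/\delta))$ uniform candidates and compute their exact $W_{C_j}$. This costs $O(n\log(k/\delta))$ overall, which is within budget. Keep the best candidate. Each candidate is a $4$-approximation with probability at least $1/2$, so all of them fail with probability at most $2^{-L}$.
\item[(ii)] Keep the sample medoid but argue robustly. The ball of radius $\rho=4W^*/|C_j|$ around the optimum $z^*$ contains more than $3|C_j|/4$ points. Hence, except with probability $e^{-\Omega(|S|)}$, at least $0.6|S|$ draws land in it. Compare any candidate $p$ with $D=d(p,z^*)$ against a draw $s_0$ in the ball. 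Each draw inside the ball favours $s_0$ by at least $D-3\rho$, while each draw outside can favour $p$ by at most $D+\rho$. This forces $D\le 11\rho$ for the winner, so $W_{C_j}(p)\le|C_j|D+W^*\le 45W^*$.
\end{itemize}

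\textbf{Finishing with your bound.} With either fix, your estimate $d(e,x)/q_x\le\frac1t\bigl(|C_j|\,d(e,p_j)+W_{C_j}(p_j)\bigr)\le\frac1t\bigl(W_{C_j}(e)+2W_{C_j}(p_j)\bigr)$ is $O(W_{C_j}(e)/t)$ for every $e$ in the metric space. For $e\notin C_j$, the point of $C_j$ nearest to $e$ shows $W^*\le 2W_{C_j}(e)$. Bernstein's inequality, applied after discarding the deterministic terms with $q_x=1$, then finishes.

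\textbf{Comparison with the paper.} The paper never selects a pivot. It computes exact weights for all points of a Poisson sample $F^0_{C_j}$ of size $O(\log(k/\delta))$ and sets $\gamma_e$ to the maximum of $d(e,\bar e)/W_{C_j}(\bar e)$ over the whole sample. It then needs only one ``well-positioned'' point to be present. Lemma~9 of \citet{Chechik2015} gives $\gamma_e\ge\frac1c\max_{e'\in C_j}d(e,e')/W_{C_j}(e')$, so normalized summands are at most $1$ and multiplicative Chernoff applies. The price is twofold:
\begin{itemize}
\item $\sum_e\gamma_e$ is a priori only bounded by $1+|F^0_{C_j}|$, and the paper needs a separate comparison argument to reduce this to $O(1)$.
\item The normalization $Y_e(e')\le 1$ is derived by taking $e''=e$ inside a maximum over $C_j$. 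As written it therefore covers only $e\in C_j$, whereas $\widehat b(e)$ needs $e\notin C_j$.
\end{itemize}

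What each route buys: your approximate-median pivot makes $\mathbb{E}|F_{C_j}|\le 2t$ immediate, replaces the black-box lemma by a triangle-inequality argument, and handles queries outside $C_j$ at a constant-factor cost. The paper's max-over-sample trick needs no selection step and no robustness argument. Your error propagation through $r=\widehat a(e)/\widehat b(e)$ is correct and yields the stated $4\varepsilon/(1-\varepsilon)$: the map from $r$ to $s$ is monotone and equals $0$ at $r=1$, so a switch of regime at most doubles the one-regime bound $2\varepsilon/(1-\varepsilon)$.
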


\Cref{theo:localAlg} shows that we can obtain a very accurate probabilistic approximation of each silhouette value $\sEl{e}, e\in V$. 
For practical values of $k\ll n$, and constant $\varepsilon$ and $\delta$, the result of~\Cref{theo:localAlg} breaks
the quadratic barrier of $\Theta(n^2)$ distance computations, enabling the efficient and practical computation of silhouette values over large and massive datasets. 
Note that the $\BO{\varepsilon^{-2}}$ factor in the sample complexity and in the number of distance computations from~\Cref{theo:localAlg}, might become large for small $\varepsilon$.
While a quadratic dependence on $\varepsilon$ cannot be avoided in a worst-case analysis~\citep{Mitzenmacher2017Prob}, in practice, 
even large values of $\varepsilon$ (e.g., close to 1) yield highly accurate estimates for \emph{all} values in $\sElEst{V}$.

In this section, we also provide evidence that \PPSbuck is crucial to achieve our result. Namely, we argue that, under the same sample complexity of~\Cref{theo:localAlg}, a more intuitive \emph{uniform sampling} approach performs arbitrarily poorly.

\begin{theorem}\label{theo:unifails}
	{
		\sloppy
		There exists a large dataset $V$, and a $2$-clustering ${\clust}$ of\hspace{0.1cm}$V$, such that the same algorithm as the one of~\hspace{0.1cm}\Cref{theo:localAlg}, modified replacing \PPSbuck sampling with \emph{uniform sampling}, returns a set of estimates $\sElEst{V} = \{(e, \sElEst{e} : e \in V)\}$, with arbitrarily large probability, such that:
		$|\sElEst{e'}-\sEl{e'}| \ge 1$ for at least $n/2-1$ elements $e'\in V$.
	}
\end{theorem}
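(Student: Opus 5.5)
We construct a two-cluster instance where uniform sampling with the budget $t=\bigO(\varepsilon^{-2}\log(nk/\delta))$ of \Cref{theo:localAlg} misses, with high probability, the few elements that dominate an average distance.

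\textbf{Algorithm being analysed.} For each cluster $C_j$, the algorithm draws a sample $S_j$ of $t$ elements uniformly at random, independently of $n$. For every $e$ it estimates the sum $W_{C_j}(e)=\sum_{e'\in C_j}d(e,e')$ by $\widehat{W}^{\mathtt U}_{C_j}(e)=\frac{|C_j|}{t}\sum_{e'\in S_j}d(e,e')$. It then forms $\widehat a(e)$ and $\widehat b(e)$ as in \Cref{eq:AandBterms}, and plugs them into \Cref{eq:localSilh}.

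\textbf{Construction.} Work on the real line, so the distance is a metric. Let $C_1$ consist of $m=n/2$ points. Of these, $m-1$ sit at $0$ (or in a tiny interval, to keep points distinct) and one outlier $x$ sits at a huge coordinate $L$. Let $C_2$ consist of $n/2$ points clustered near $D$, with $1\ll D\ll L$; for instance $D$ is a constant and $L\gg nD$.

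\textbf{Exact values for $e\in C_1\setminus\{x\}$.}
\begin{itemize}
\item $a(e)\approx L/(m-1)$, which we choose to be much larger than $D$.
\item $b(e)\approx D$.
\end{itemize}
Hence $s(e)=(b-a)/a\approx D(m-1)/L-1\approx -1$. Tuning $L$ so that $L/(m-1)\gg D$ makes $s(e)$ as close to $-1$ as we like.

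\textbf{Estimates.} Whenever $x\notin S_1$, every $e\in C_1\setminus\{x\}$ gets $\widehat a(e)\approx 0$, up to the tiny spread, while $\widehat b(e)\approx D$. Thus $\widehat s(e)\approx 1$, and $|\widehat s(e)-s(e)|\approx 2\ge 1$ holds for all $m-1=n/2-1$ such elements simultaneously. The error is the same for all of them because they share the same sample $S_1$.

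\textbf{Failure probability.} The event $x\notin S_1$ has probability $(1-1/m)^t\ge 1-t/m$, using with-replacement draws; the without-replacement case is similar. Since $t$ is polylogarithmic in $n$, this tends to $1$ as $n\to\infty$. Choosing $n$ large enough relative to $t$ therefore makes the probability arbitrarily close to $1$, which is the sense of ``arbitrarily large probability''.

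\textbf{Main obstacle.} The estimate of $a(e)$ collapses, so care is needed with the degenerate case $\widehat a=0$ and with keeping points distinct. We place the $m-1$ points at $i\eta$ for a tiny $\eta$. This gives $\widehat a(e)=\bigO(m\eta)\ll D$ and $\widehat s(e)\ge 1-\bigO(m\eta/D)$. We must also verify the exact bound $s(e)\le -1+\bigO(mD/L)$ with explicit constants, so that the gap is at least $1$ and not merely approximately $2$.
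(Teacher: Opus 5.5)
Your proposal is correct and follows essentially the paper's argument. A single far-away outlier in $C_1$ dominates $a(e)$ for the other $n/2-1$ points. Uniform sampling with a budget $t$ that is only logarithmic in $n$ misses it with probability about $1-t/m$, so $\widehat{a}(e)$ collapses and $\widehat{s}(e)\approx 1$ while $s(e)\approx -1$. Your instance is a slightly leaner version of the paper's, which also places an outlier at $-R_2$ with $R_2=\Theta(m)$ in $C_2$ and uses Poisson rather than with-replacement sampling. Neither difference matters for the claimed count of $n/2-1$ badly estimated elements.
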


\Cref{theo:unifails} indicates that uniform sampling cannot guarantee an (arbitrarily small) estimation error $\BO{\varepsilon}$ using a sublinear sample size, in contrast with \Cref{theo:localAlg}.

In the rest of this section, we describe our main algorithm (\Cref{sec:alg_desc}) and prove the above two key results
(in \Cref{sec:analysis,subsubsec:uniFails}, respectively).

\subsubsection{Algorithm}
\label{sec:alg_desc}
Consider the setting of~\Cref{probl:LocalEst}. For each element $e \in V$ and cluster $C_j \in
\mathcal{C}$, let
\begin{equation}\label{eq:weights}
	\weightEl{j} = \sum_{e' \in \C_j}d(e, e')\enspace.
\end{equation}
Given the definition of $\weightEl{j}$ in \Cref{eq:weights}, for an element $e$ of a cluster $C$
we can express the quantities
$a(e)$ and $b(e)$ in the definition of the silhouette $s(e)$ (from Equation~\eqref{eq:AandBterms}) as
\[
a(e) = \frac{\weightEl{}}{|C|-1} \qquad \text{and} \qquad b(e) = \min_{C_j \neq C}\frac{\weightEl{j}}{|C_j|} \enspace. 
\]

Building on the above formulation, we rely on the accurate estimation of the terms
$\weightEl{j}$ to approximate 
the silhouette $s(e)$ for each element $e\in V$. 

A first attempt to obtain an estimate of $\weightEl{j}$ would be to sample \emph{uniformly} at random a set of $t\ge 1$ points from each cluster $C_j\in \clust$ to be used to estimate $\weightEl{j}$. 
Unfortunately, in~\Cref{subsubsec:uniFails} we show an instance of $V$ and $\clust$ where such an approach fails with large probability for the sample complexity guarantees of~\Cref{theo:localAlg} over a large fraction of points (see~\Cref{theo:unifails}). 
That is, with a uniform sampling approach we \emph{cannot} obtain the accurate approximation of the silhouette values as captured in~\Cref{probl:LocalEst} with the guarantees provided by~\Cref{theo:localAlg}. 

Therefore, to achieve an accurate estimation of the weights $\weightEl{j}$, we exploit the more complex but effective  \emph{Probability Proportional to Size} (\PPS) sampling strategy proposed by
\citet{Chechik2015}.  
The \PPS sampling strategy allows us to collect a suitable \emph{small} sample $\F_{\C_j}$ from each cluster $\C_j$. 
The sample $\F_{\C_j}$ can be then used to approximate the value of $\weightEl{j}$  of every element $e \in V$, within user-defined error bounds and with high
probability, using only \emph{weighted} distances between $e$ and elements in
$\F_{\C_j}$. 
The selection of the elements of $\F_{\C_j}$ is \emph{not} uniformly at random. Instead, it involves
carefully designed sampling probabilities that favor the selection of distant elements from a suitable ``central'' element of each cluster $C_j$ for $j\in[k]$.

\begin{algorithm}[t]
	\KwIn{Clustering $\mathcal{C}=\{C_1,\dots,C_k\}$ of $V=\{e_1,\dots,e_n\}$ and $\varepsilon, \delta \in (0,1), t\ge 2$.}
	\KwOut{$\hat{s}(V) =\{(e_1, \sElEst{e_1},\dots, (e_n,\sElEst{e_n})\}$.}
	\nonl $\triangleright$ Phase 1: \emph{PPS sampling}\\
	$n \leftarrow |V|$; $t\gets \ceil{(c\varepsilon^{-2})\ln{(5nk/\delta)}}$\;
	\For{\upshape\textbf{each} cluster $\C_j\in \clust$} {
		\lIf{$|C_j| \leq t$}{$\F_{\C_j} \leftarrow \C_j$}
		\Else {
			$\PPSinitSample \leftarrow$ Poisson sampling of $\C_j$
			with probability $(2/|\C_j|)\ln (5k/\delta)$\;
			\lFor{\upshape\textbf{each} $\bar{e} \in \PPSinitSample$} {
				$\weightClustEl{j}{\bar{e}} \leftarrow \sum_{e' \in \C_j} d(\bar{e},e')$\label{line:initSampleW}}
			\For{\upshape\textbf{each} $e \in \C_j$\label{line:forLoopPPS}} {
				$\gamma_e \leftarrow \max \{d(e,\bar{e})/\weightClustEl{j}{\bar{e}} : \bar{e} \in \PPSinitSample\}$\;
				\label{line:num8}
				$\gamma_e \leftarrow \max \{1/|C_j|,\gamma_e\}$\;
				$p_e \leftarrow \min\{1, t\gamma_e\}$\;
				\label{line:num10}
			}
			$\F_{\C_j} \leftarrow$ Poisson sampling of $\C_j$ with probabilities 
			$\{p_e : e \in \C_j\}$\;
		}
	}
	\nonl $\triangleright$ Phase 2: \emph{Silhouette estimation} \\
	\For{\upshape\textbf{each} $e \in V$\label{line:forLoopAllElements}} {
		Let $e$ belong to cluster $\C$\;
		\For{\upshape\textbf{each} cluster $\C_j$} {
			$\weightElEst{j} = \sum_{e' \in \F_{\C_j}} d(e, e')/p_{e'}$\;
		}
		$\aElEst{e} \leftarrow \weightElEst{}/(|\C|-1)$\;
		$\bElEst{e} \leftarrow \min \{\weightElEst{j}/|\C_j| : \C_j \neq \C\}$\; 
		$\sElEst{e} \leftarrow 
		(\bElEst{e}-\aElEst{e})/(\max\{\aElEst{e}, \bElEst{e}\})$\;
	}
	\KwRet{$\sElEst{V}=\{(e,\sElEst{e}) : e \in V\}$;}
	\caption{\algLocal}\label{code:algorithm}
\end{algorithm}

At high level, our algorithm
\algLocal (\Cref{code:algorithm}), consists of two \emph{phases}:  
\emph{1)} For each cluster
$\C_j \in\clust$, the algorithm computes a sample $\F_{\C_j}$ of
\emph{expected} size $t= \ceil{(c\varepsilon^{-2})\ln{(5nk/\delta)}}$
for a suitably small constant $c$; 
\emph{2)} each sample $\F_{\C_j}$ is then used to estimate the weights $\weightEl{j}$ for $j\in [k]$, and such weights are combined to obtain $\sElEst{e}$, i.e., the estimates of $s(e)$ for each $e\in V$. More in detail.
\smallskip

\textbf{Phase 1.} 
Each cluster $\C_j, j\in [k]$ is
processed independently.  
If $|\C_j| \leq t$, then $\F_{\C_j}$ is set
to $\C_j$. 
Otherwise, we perform \emph{Poisson sampling} over $\C_j$ where each element $e \in \C_j$ is included in $\F_{\C_j}$ independently with
a suitable probability $p_e$. 
For $e \in \C_j$, probability $p_e$ is determined as follows:

\begin{compactitem}
	\item First, we select an initial sample $\PPSinitSample$, by 
	Poisson sampling. That is, each $e \in \C_j$ is included
	in $\PPSinitSample$ independently with fixed probability 
	$(2/|C_j|)\ln (5k/\delta)$. 
	Each sample $\PPSinitSample$ will contain,
	with sufficiently high probability, an aforementioned ``central''
	element of $C_j$.\footnote{A \emph{central} element, is an element close to a majority of the elements of $\C_j$, see the proof of~\Cref{lem:mainerror} for a formal definition.}
	\item We then compute the value $
	\weightClustEl{j}{\bar{e}}$ for each sampled element $\bar{e} \in \PPSinitSample$. Then, for each $e \in C_j$ we set 
	$p_e=\min\{1, t\gamma_e \}$,
	where $\gamma_e$ is computed as 
	\begin{equation*}
		\gamma_e = \max \left\{\underbrace{\frac{1}{|C_j|}}_{T_1}, \underbrace{\max_{\bar{e} \in \PPSinitSample}\frac{d(e,\bar{e})}{\weightClustEl{j}{\bar{e}}}}_{T_2}\right\}\enspace.
	\end{equation*}
	For $e\in \C_j$,  $\gamma_e$ is the maximum between two terms $T_1$ and $T_2$. Where, $T_1$ corresponds to the probability of selecting $e$ uniformly among all elements of $\C_j$, ensuring that all points have a sufficiently large probability of being sampled; while $T_2$ corresponds to the maximum relative contribution of $e$ to the weights $\weightClustEl{j}{\bar{e}}$ of elements $\bar{e}\in \PPSinitSample$. 
	The intuition is that a higher relative contribution of element $e\in \C_j$ to $\weightClustEl{j}{\bar{e}}$ implies that element $e\in\C_j$ is extremely important for the accurate estimation of $\weightClustEl{j}{\bar{e}}$ with $\bar{e}\in \PPSinitSample$---captured by the value $\gamma_e$ through $T_2$, which yields a high sampling probability $p_e$ for $e\in \C_j$. 
\end{compactitem}

\smallskip

\textbf{Phase 2.} 
We now have access to a sample $\F_{\C_j}$ obtained through Poisson sampling with individual probabilities $p_e$ for each $e\in V$ (computed over \textbf{Phase 1}), over each cluster $C_j$.
Hence, we use each $\F_{\C_j}$ to obtain the estimates
\[
\weightElEst{j} = \sum_{e' \in \F_{\C_j}}\frac{d(e, e')}{p_{e'}} \enspace,
\]
which are accurate estimators of $W_{C_j}(e)$,
as we prove in our analysis (Section~\ref{sec:analysis}). We then use values $\weightElEst{j}, j\in[k], e\in V$ 
to compute $\sElEst{e}, e\in V$ as follows.
Let $e \in V$ with $e\in \C, \C\in \clust$, we compute estimates
$\aElEst{e} = \tfrac{\weightElEst{}}{(|\C|-1)}$ and 
$\bElEst{e} = \min_{\C_j \neq \C} \left\{ \tfrac{\weightElEst{j}}{|\C_j|}\right\}$. 
Finally we obtain the estimates $\sElEst{e}$ of the silhouette of each element $e\in V$ as
\begin{equation}\label{eq:localSilhEst}
	\sElEst{e} = 
	\frac{\bElEst{e}-\aElEst{e}}{\max\left\{\aElEst{e}, \bElEst{e}\right\}} \enspace.
\end{equation}

\Cref{code:algorithm} (\algLocal) reports all details of the resulting procedure that combines the phases described above.

\subsubsection{Proof of \protect\Cref{theo:localAlg}}
\label{sec:analysis}
We prove~\Cref{theo:localAlg} in two steps. 
First, we show that, with probability at least $1 - 3\delta/5$,
for each element $e\in V$ the value
$\sElEst{e}$ computed by \algLocal
approximates the actual value $s(e)$ within a small absolute error expressed as a function $f(\varepsilon)=\Theta(\varepsilon)$, with $\varepsilon\in (0,1)$. Second, we upper bound the sample complexity and the number of distance
computations with probability at least $1 - 2\delta/5$.

Our key ingredient is the next lemma that provides a probabilistic
upper bound on the \emph{relative} approximation error of the estimates $\weightElEst{j}$
with respect to the actual values $\weightEl{j}$ for an arbitrary element $e\in V$, and $j\in[k]$.

\begin{lemma} \label{lem:mainerror}
	Let $c=18$.
	If $t = \ceil{(3c/\varepsilon^2)\ln{(5nk/\delta)}}$, then with probability at least $1 -3\delta/5$, for every element
	$e\in V$ and every cluster $\C_j$, the estimate $\weightElEst{j}$ computed by~\algLocal is such that
	\[
	\left|\frac{\weightElEst{j} - \weightEl{j}}{\weightEl{j}} \right| 
	\leq \varepsilon\enspace.
	\]
\end{lemma}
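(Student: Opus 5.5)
The proof has three steps. First, show that $\PPSinitSample$ contains a ``central'' element with good probability. Second, show that this central element makes every sampling probability $p_{e'}$ large relative to the contribution $d(e,e')/\weightEl{j}$ of $e'$, for every $e\in V$. Third, apply a multiplicative Chernoff bound to the Horvitz--Thompson estimator $\weightElEst{j}$ and take a union bound over all $nk$ pairs $(e,j)$.

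Clusters with $|\C_j|\le t$ are trivial, since then $\F_{\C_j}=\C_j$ and $\weightElEst{j}=\weightEl{j}$ exactly (here $p_{e'}=1$). So fix a cluster $\C_j$ with $|\C_j|>t$. Call $\bar e\in\C_j$ \emph{central} if $d(\bar e,m)$ is at most the median, over $x\in\C_j$, of $\{d(x,m)\}$, where $m$ is a $1$-median of $\C_j$ (the minimizer of $\weightClustEl{j}{m}$). At least $|\C_j|/2$ elements are central. Since each element enters $\PPSinitSample$ independently with probability $(2/|\C_j|)\ln(5k/\delta)$, the probability that no central element is sampled is at most $(1-2\ln(5k/\delta)/|\C_j|)^{|\C_j|/2}\le \delta/(5k)$. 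A union bound over the $k$ clusters then gives failure probability at most $\delta/5$.

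The key deterministic step is the following. Let $\bar e$ be central. By Markov's inequality, at least half of $\C_j$ lies within distance $2W_{C_j}(m)/|\C_j|$ of $m$, so $d(\bar e,m)\le 2W_{C_j}(m)/|\C_j|$. The triangle inequality then gives $\weightClustEl{j}{\bar e}\le W_{C_j}(m)+|\C_j|d(\bar e,m)\le 3W_{C_j}(m)$. Now fix any $e\in V$ (possibly outside $\C_j$) and any $e'\in\C_j$. The triangle inequality gives $d(e,e')\le d(e,\bar e)+d(\bar e,e')$. Summing over the cluster gives $\weightEl{j}\ge |\C_j|\, d(e,\bar e)-\weightClustEl{j}{\bar e}$, and also $\weightEl{j}\ge W_{C_j}(m)\cdot$const via the median property. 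Combining these, I aim for the bound
\[
\frac{d(e,e')}{\weightEl{j}}\le \frac{d(e,\bar e)}{\weightEl{j}}+\frac{d(\bar e,e')}{\weightEl{j}}\le c'\max\left\{\frac{1}{|\C_j|},\frac{d(\bar e,e')}{\weightClustEl{j}{\bar e}}\right\}\le c'\gamma_{e'}
\]
for a small absolute constant $c'$ (I expect $c'\le 6$). To get it, bound $d(e,\bar e)/\weightEl{j}$ by $O(1/|\C_j|)$ by splitting into two cases. In the first case $d(e,\bar e)\ge 2\weightClustEl{j}{\bar e}/|\C_j|$; then $\weightEl{j}\ge |\C_j|d(e,\bar e)/2$. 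In the second case $d(e,\bar e)$ is small; then one uses $\weightEl{j}\gtrsim W_{C_j}(m)\ge \weightClustEl{j}{\bar e}/3$. The lower bound $\weightEl{j}\ge \weightClustEl{j}{\bar e}/\mathrm{const}$ needed in the second case must be checked carefully. This step, getting a clean constant valid uniformly for all $e$, is the main obstacle, and it is where I expect care about the claimed errors in prior proofs to matter.

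Given the bound, write $\weightElEst{j}/\weightEl{j}=\sum_{e'\in\C_j} X_{e'}$ with $X_{e'}=\mathbb{1}[e'\in\F_{\C_j}]\,d(e,e')/(p_{e'}\weightEl{j})$. These are independent, have total mean $1$, and are $0$ whenever $p_{e'}=1$ contributes deterministically. Each $X_{e'}$ is bounded by $d(e,e')/(t\gamma_{e'}\weightEl{j})\le c'/t$ whenever $p_{e'}<1$. Scaling by $t/c'$ and applying the multiplicative Chernoff bound for bounded independent variables, $\Pr(|\sum X_{e'}-1|>\varepsilon)\le 2\exp(-\varepsilon^2 t/(3c'))$. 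With $t=\lceil (3c/\varepsilon^2)\ln(5nk/\delta)\rceil$ and $c=18\ge c'$, this is at most $2\delta/(5nk)$. A union bound over the $nk$ pairs $(e,j)$ gives $2\delta/5$, and adding the $\delta/5$ from the first step yields total failure probability $3\delta/5$. This conditioning is valid because $\F_{\C_j}$ is independent of $\PPSinitSample$ once the $p_{e'}$ are fixed.
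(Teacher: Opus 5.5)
Your argument is correct, and it takes a genuinely different route from the paper at the structural step.

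\textbf{The paper's route.} It uses the well-positioned notion of Chechik et al.\ (median distances, $m(e)\le 2m(e_{\rm min})$) and imports their Lemma~9 as a black box. This gives $\gamma_{e'}\ge \frac{1}{18}\max_{e''\in C_j} d(e',e'')/W_{C_j}(e'')$. It then gets $Y_e(e')\le 1$ by taking $e''=e$ in that maximum, which is only justified for $e\in C_j$.

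\textbf{Your route.} You center at a $1$-median $m$ and use Markov plus the triangle inequality to get $W_{C_j}(\bar e)\le 3W_{C_j}(m)$. You then bound $d(e,e')/W_{C_j}(e)$ against $\gamma_{e'}$ directly, for an arbitrary query point $e$. This includes $e\notin C_j$, which is exactly the case $\widehat b(e)$ needs. The paper's normalization step does not literally cover that case, and routing it through Lemma~9 via a nearest point of $C_j$ would cost a constant factor. Your argument is self-contained and covers all query points with the same constant.

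\textbf{Closing the step you left open.} For any $e$, let $x\in C_j$ be a point of $C_j$ nearest to $e$. Then $|C_j|\,d(e,x)\le W_{C_j}(e)$, so
\[
W_{C_j}(m)\le W_{C_j}(x)\le W_{C_j}(e)+|C_j|\,d(e,x)\le 2W_{C_j}(e),
\]
hence $W_{C_j}(\bar e)\le 6W_{C_j}(e)$. Your two cases then give $d(e,\bar e)/W_{C_j}(e)\le 12/|C_j|$. The second term gives $d(\bar e,e')/W_{C_j}(e)\le 6\,d(\bar e,e')/W_{C_j}(\bar e)$. Together, $d(e,e')/W_{C_j}(e)\le 18\gamma_{e'}$.

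So you get $c'=18$, not the hoped-for $c'\le 6$. A sharper split, using $W_{C_j}(e)\ge\max\{|C_j|d(e,\bar e)-W_{C_j}(\bar e),\,W_{C_j}(\bar e)/6\}$, gives $13$. Either way $c'\le c=18$, which is all the Chernoff step needs, so the stated $t$ suffices.

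\textbf{Small correction.} Terms with $p_{e'}=1$ are deterministic, not zero. Apply the tail bound only to the random part, whose scaled mean is at most $t/c'$. The Bernstein form $\exp(-a^2/(2\mu+2a/3))$ handles this, since it is monotone in $\mu$. The paper glosses over this point in the same way.
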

\begin{proof}
	Our proof expands the poorly sketched argument in the proof of~\cite[Lemma 12]{Chechik2015}. Consider
	an arbitrary cluster $C_j$. If $|C_j| \leq t$ the statement follows trivially, since $\weightElEst{j}= \weightEl{j}$. Thus, we focus on the case $|C_j| > t$.
	For an element
	$e \in C_j$, let $m(e)$ denote the median of the distances from $e$ to all other
	elements of $C_j$, that is, the distance between $e$ and the $\ceil{|C_j|/2+1}$-th closest element to $e$ in $\C_j$. Let $e_{\rm min} = \arg\min_{e' \in C_j} m(e')$.  
	The element $e$ is called \emph{well positioned} if
	$m(e) \leq 2m(e_{\rm min})$. 
	Using the triangle inequality, it is easy to see that the $\lceil |C_j|/2 \rceil$ 
	elements of $C_j$ closest to $e_{\rm min}$ are well positioned. Hence, the initial
	random sample $\PPSinitSample$ will contain a well positioned 
	element with probability
	at least 
	\[
	1-\left(1-{2 \over |C_j|} \ln \left(\frac{5k}{\delta}\right) \right)^{|C_j| \over 2} \geq 1-{\delta \over 5k}\enspace,
	\]
	since $(1+q/x)^x \le \exp(q)$, for $x\ge1, |q|\le x$, and 
	$|C_j|/2> t/2 > \ln (5k/\delta)$.
	As proved in \citet[Lemma 9]{Chechik2015}, if $\PPSinitSample$ contains a well positioned element, then for $c=18$ we have that for all $e \in C_j$, the values $\gamma_e$ computed by \Cref{code:algorithm} satisfy
	\begin{equation}
		\label{eq:gamma}
		\gamma_e \geq {1 \over c} \max_{e' \in C_j} {d(e,e') \over \weightClustEl{j}{e'}}\enspace.
	\end{equation}
	To obtain a worst-case bound, 
	we assume that for each $e \in C_j$ it holds that $p_e = t \gamma_e$ where $p_e \le1$.
	In fact, when it holds $p_e=1$, 
	the contribution of element $e$ to the weights $\widehat{W}_{\C_j}(e')$ of all elements $e'\in V$ is always computed \emph{exactly}, implying no error over the estimates $\widehat{W}_{\C_j}(e')$. 
	From \Cref{eq:gamma}, we have that $p_e \geq (t/c) \max_{e' \in C_j} (d(e,e')/\weightClustEl{j}{e'})$.
	Fix an arbitrary element $e \in C_j$, and for every $e' \in C_j$ let $X_e({e'})$ be the random variable taking value $d(e,e')/p_{e'}$, with probability $p_{e'}$, and 0 otherwise. We have that
	\[
	\weightElEst{j}= \sum_{e' \in C_j} X_e({e'})\enspace.
	\]
	It is immediate to see that $\mathbb{E}[\weightElEst{j}] = \weightEl{j}$ by the linearity of expectation and the definition of variables $X_e({e'})$. 
	Let $\tau_e = \weightEl{j}/(t/c)$ and, for every $e' \in C_j$, let $Y_e({e'}) = X_e({e'})/\tau_e$ be a normalized variable associated to each $X_e({e'})$. Note that $Y_e({e'}) \le (d(e,e')/\weightEl{j})/\allowbreak(\max_{e''\in \C_j}\{d(e'',e')/\weightClustEl{j}{e''}\}) \le 1$ and therefore $ Y_e({e'})\in [0,1]$, hence 
	\[
	\mu_e \doteq
	\mathbb{E}\left[\sum_{e' \in C_j} Y_e({e'})\right] = {\weightEl{j} \over \tau_e} = {t \over c} \enspace.
	\]
	The above discussion implies that
	\begin{eqnarray*}
		\Prob
		\left(\weightElEst{j} \leq (1+\varepsilon) \weightEl{j}\right)
		= 
		\Prob
		\left(\sum_{e' \in C_j} Y_e({e'}) \leq (1+\varepsilon) {t \over c} \right) 
		= 
		\Prob
		\left(\sum_{e' \in C_j} Y_e({e'}) \leq (1+\varepsilon) \mu_e \right) \enspace.
	\end{eqnarray*}
	Since variables $Y_e({e'})$'s are independent random variables,
	and each $Y_e({e'})$ takes either value 
	$(1/\tau_e) d(e,e')/p_{e'} \leq 1$ with probability $p_{e'}$, or 0 otherwise, by the application of~\Cref{th:CHbounds} (see \Cref{appsec:tools})
	we conclude that
	\[
	\Prob\,\left(\sum_{e' \in C_j} Y_e({e'}) \geq (1+\varepsilon) \mu_e \right)
	\leq \mbox{exp}\left(-{\varepsilon^2 \over 3} {t \over c}\right)
	\leq 
	{\delta \over 5nk} \enspace ,
	\]
	where the last inequality follows from the choice of $t$ as from the statement.
	Therefore, we have that 
	\[
	\Prob\!\left(\weightElEst{j} \leq (1+\varepsilon) \weightEl{j}\right)
	\geq 1-{\delta \over 5nk}.
	\]
	A symmetrical argument can be used to also show that
	\[
	\Prob\!\left(\weightElEst{j} \geq (1-\varepsilon) \weightEl{j}\right)
	\geq 1-{\delta \over 5nk}\enspace.
	\]
	
	By the union bound, the probability that there exists a cluster $C_j$ such that the initial sample $\PPSinitSample$ does \emph{not} contain a well positioned element is bounded by $k\delta/(5k)=\delta/5$. Now, by conditioning on the event that for all clusters $C_j$ for $j\in[k]$ the initial sample $\PPSinitSample$
	contains a well positioned element, we proceed with a second application of the union bound to prove our claim. 
	That is, the probability that there exists an element $e\in V$ and a
	cluster $\C_j$ for which $|(\weightElEst{j} - \weightEl{j})/\weightEl{j}|>\varepsilon$ is at most $2nk \delta/(5nk)
	= 2\delta/5$. 
	Therefore, the probability that
	$|(\weightElEst{j} - \weightEl{j})/\weightEl{j}|\leq \varepsilon$ for every $e \in V$ is at least $1-\delta/5-2\delta/5 = 1-3\delta/5$, yielding the statement and concluding the proof.
\end{proof}

Let  \emph{$E$} be the event ``the relative error guarantees stated in~\Cref{lem:mainerror} hold for every element $e \in V$ and every
cluster $C_j$''. 
From now on our analysis conditions on the realization of event $E$, even when not explicitly stated.
Consider an arbitrary element $e\in V$, and let $\sElEst{e}$ be the estimate of the silhouette $s(e)$ computed by
\algLocal. 
The following key technical lemma,
establishes a bound on the absolute
error of the estimates for each element $e\in V$. That is, we show $|\sElEst{e} - \sEl{e}| \le f(\varepsilon)$ as follows.

\begin{lemma}
	\label{lem:singlesilh}
	Under event $E$, \algLocal outputs estimates $\sElEst{e}$ for each element $e\in V$ such that, 
	\[| \sElEst{e}-\sEl{e}| \le \frac{4\varepsilon}{1-\varepsilon}\enspace.
	\]
\end{lemma}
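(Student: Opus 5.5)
Under event $E$, \Cref{lem:mainerror} gives $(1-\varepsilon)\weightEl{j}\le \weightElEst{j}\le (1+\varepsilon)\weightEl{j}$ for every $e$ and $j$. Dividing by the deterministic normalisers $|\C|-1$ and $|\C_j|$ yields $(1-\varepsilon)\aEl{e}\le\aElEst{e}\le(1+\varepsilon)\aEl{e}$. Taking the minimum over $\C_j\neq\C$ preserves multiplicative bounds, so also $(1-\varepsilon)\bEl{e}\le \bElEst{e}\le (1+\varepsilon)\bEl{e}$. (Singleton clusters, where $s(e)=0$ by convention, are handled directly: the algorithm assigns $0$ there as well.) So the whole proof reduces to a deterministic statement: if $\hat a\in[(1-\varepsilon)a,(1+\varepsilon)a]$ and $\hat b\in[(1-\varepsilon)b,(1+\varepsilon)b]$ with $a,b\ge 0$ not both zero, then $|f(\hat a,\hat b)-f(a,b)|\le 4\varepsilon/(1-\varepsilon)$, where $f(a,b)=(b-a)/\max\{a,b\}$.

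I would write $f(a,b)=1-a/b$ if $a\le b$ and $f(a,b)=b/a-1$ if $a> b$. Equivalently, $f=g(r)$ with $r=a/b$, where $g(r)=1-r$ for $r\le1$ and $g(r)=1/r-1$ for $r>1$. Then $\hat r=\hat a/\hat b$ satisfies $r\frac{1-\varepsilon}{1+\varepsilon}\le \hat r\le r\frac{1+\varepsilon}{1-\varepsilon}$. Hence $\hat r=r\lambda$ with $|\lambda-1|\le 2\varepsilon/(1-\varepsilon)$, since $\frac{1+\varepsilon}{1-\varepsilon}-1=\frac{2\varepsilon}{1-\varepsilon}$ and $1-\frac{1-\varepsilon}{1+\varepsilon}=\frac{2\varepsilon}{1+\varepsilon}$. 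The function $g$ is continuous and monotone decreasing, so I bound $|g(r\lambda)-g(r)|$ case by case.

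There are three cases. If both $r,\hat r\le1$, then $|g(\hat r)-g(r)|=r|\lambda-1|\le 2\varepsilon/(1-\varepsilon)$. If both exceed $1$, then $|1/\hat r-1/r|=(1/r)|1/\lambda-1|$ with $1/r<1$ and $|1/\lambda-1|\le 2\varepsilon/(1-\varepsilon)$ by the symmetric bound. In the mixed case, I would split the interval between $r$ and $\hat r$ at $1$, giving two pieces each of the previous type (e.g., $r\le1<\hat r$ gives $|g(r)-g(1)|+|g(1)-g(\hat r)|$). Each piece is at most $2\varepsilon/(1-\varepsilon)$ because the multiplicative distance to $1$ is at most that of $\lambda$, so the total is at most $4\varepsilon/(1-\varepsilon)$. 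This explains the constant $4$.

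The main obstacle is the mixed case, together with the degenerate zero values. If $b=0$ then $\hat b=0$, and if $a=0$ then $\hat a=0$, by the multiplicative bounds, so $r$ lies in $\{0,\infty\}$ exactly when $\hat r$ does and the cases are consistent. The mixed-case accounting must be done carefully, checking that crossing the threshold $a=b$ costs at most twice the one-sided bound. A cruder but direct alternative is to bound $|\hat b-\hat a-(b-a)|\le\varepsilon(a+b)\le 2\varepsilon\max\{a,b\}$ and $\max\{\hat a,\hat b\}\ge(1-\varepsilon)\max\{a,b\}$, then combine them. I would use this only as a fallback, verifying that it also yields at most $4\varepsilon/(1-\varepsilon)$.
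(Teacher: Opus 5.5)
Your proof is correct. Its first step, the $1\pm\varepsilon$ multiplicative bounds on $\aElEst{e}$ and $\bElEst{e}$ with the minimum preserving them, is the same as the paper's. The core deterministic step, however, takes a different route.

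The paper proves an extra bound, $|\widehat M(e)-M(e)|\le\varepsilon M(e)$ for $M(e)=\max\{\aEl{e},\bEl{e}\}$. It then works in the case $M(e)=\aEl{e}$ (calling the other case analogous) and bounds numerator and denominator separately. This gives $\sEl{e}-\frac{2\varepsilon}{1+\varepsilon}(\sEl{e}+1)\le\sElEst{e}\le\sEl{e}+\frac{2\varepsilon}{1-\varepsilon}(\sEl{e}+1)$. So the constant $4$ comes from $\sEl{e}+1\le 2$, not from crossing the threshold $a=b$ as you suggest.

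Your reparametrization $s=g(a/b)$, with $g$ decreasing, reduces everything to one multiplicative perturbation $\lambda\in[\frac{1-\varepsilon}{1+\varepsilon},\frac{1+\varepsilon}{1-\varepsilon}]$ of a single variable. You need no bound on $\widehat M(e)$, and the coupling between numerator and denominator is handled automatically.

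This is a real gain. In the paper's lower bound with $M(e)=\aEl{e}$, the numerator bound $(1-\varepsilon)\bEl{e}-(1+\varepsilon)\aEl{e}$ is negative. Dividing it by the upper bound $(1+\varepsilon)M(e)$ therefore does not follow from separate bounds on numerator and denominator. The inequality holds only because the extreme is attained jointly at $\hat a=(1+\varepsilon)a$, $\hat b=(1-\varepsilon)b$, which is exactly your monotonicity in $\hat a/\hat b$.

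Your reduction also shows the constant is loose. Since $r\,|g'(r)|\le 1$, $g$ is $1$-Lipschitz in $\ln r$. This gives $|\sElEst{e}-\sEl{e}|\le\ln\frac{1+\varepsilon}{1-\varepsilon}\le\frac{2\varepsilon}{1-\varepsilon}$ with no case split at all.

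The paper's route, for its part, gives finer information: when $\aEl{e}\ge\bEl{e}$, the error scales with $\sEl{e}+1$. Your fallback (numerator error at most $2\varepsilon M(e)$, plus the two-sided bound on $\widehat M(e)$) is closer in spirit to the paper's argument and yields $3\varepsilon/(1-\varepsilon)$.
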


\begin{proof}
	First note that if $\aEl{e}= 0$ (that is, $\{e\}$ is a singleton cluster), then it follows that 
	$\aElEst{e}=0$ and $\sEl{e}=\sElEst{e}=0$, and the claim trivially holds. 
	We thus consider the case $\aEl{e}> 0$ (also, observe that $\bEl{e}>0$ by definition). 
	When event $E$ holds, the relative errors of terms: ($I$)
	$|(\aElEst{e}-\aEl{e})/\aEl{e}|$; and ($II$) $|(\bElEst{e}-\bEl{e})/\bEl{e}|$, 
	are both upper bounded by $\varepsilon>0$.
	That is, the bound on term ($I$) follows immediately from the definition of $\aElEst{e}$ and
	the relative error bound for $\weightElEst{j}$ provided by~\Cref{lem:mainerror}.
	Concerning the bound on the error of term ($II$), consider an arbitrary cluster $C_j$ and let $b_{C_j}(e) =
	\weightEl{j}/|C_j|$ and $\widehat{b}_{C_j}(e) = \weightElEst{j}/|C_j|$. 
	Similar to the argument for term ($I$), it is easy to show that 
	$|(\widehat{b}_{C_j}(e)-b_{C_j}(e))/b_{C_j}(e)| \leq \varepsilon$.
	Recall that $\bEl{e} = \min_{C_j \neq C} b_{C_j}(e)$ and that
	$\bElEst{e} = \min_{C_j \neq C} \widehat{b}_{C_j}(e)$. 
	Suppose that $b(e)=b_{C'}(e)$ and $\bElEst{e}=\widehat{b}_{C''}(e)$,
	for some, possibly different, clusters $C'$ and $C''$. We have:
	\[
	(1-\varepsilon)b(e) = (1-\varepsilon)b_{C'}(e) \leq (1-\varepsilon)b_{C''}(e) 
	\leq \widehat{b}_{C''}(e) = \widehat{b}(e)
	\]
	and
	\[
	\useshortskip
	\bElEst{e} = \widehat{b}_{C''}(e) 
	\leq \widehat{b}_{C'}(e) \leq (1+\varepsilon)b_{C'}(e)= (1+\varepsilon)\bEl{e} \enspace,
	\]
	and the desired bound follows.
	
	Now we establish a bound on the relative error for the the denominator of the estimator $\sElEst{e}$ (see~\Cref{eq:localSilhEst}). Define $M(e)
	\doteq \max\{\aEl{e}, \bEl{e}\}$ and $\widehat{M}(e) \doteq
	\max\{\aElEst{e},\bElEst{e}\}$. We show that
	$|(\widehat{M}(e)-M(e))/M(e)| \leq \varepsilon$.
	Suppose that $M(e) = \aEl{e}$, hence $\aEl{e} \geq \bEl{e}$ (the case $M(e) = \bEl{e}$ is analogous).  If $\widehat{M}(e) = \aElEst{e}$, the bound
	is immediate.
	Instead, if $\widehat{M}(e) = \bElEst{e}$, hence $\bElEst{e} \geq\aElEst{e}$, the bound follows since
	\begin{equation*}
		\widehat{M}(e) = \bElEst{e}  \leq 
		(1+\varepsilon)\bEl{e}
		\leq  (1+\varepsilon)\aEl{e} = (1+\varepsilon)M(e)\enspace,
	\end{equation*}
	and 
	\begin{equation*}
		\useshortskip
		\widehat{M}(e) = \bElEst{e} \geq  \aElEst{e}
		\geq  (1-\varepsilon)\aEl{e} 
		=  (1-\varepsilon)M(e)\enspace.
	\end{equation*}
	We are now ready to obtain a bound on the absolute difference
	$|\sElEst{e} - \sEl{e})|$. Using the relative error bounds established 
	above, we have that
	\[
	\sElEst{e} = \frac{\bElEst{e}-\aElEst{e}}{\widehat{M}(e)} 
	\leq
	\frac{(1+\varepsilon)\bEl{e}-(1-\varepsilon)\aEl{e}}{(1-\varepsilon)M(e)}=
	\frac{(1+\varepsilon)\bEl{e}-(1-\varepsilon)\aEl{e}}{(1-\varepsilon)\aEl{e}}
	\enspace.
	\]
	Simple algebraic manipulations show that
	\[
	\frac{(1+\varepsilon)\bEl{e}-(1-\varepsilon)\aEl{e}}{(1-\varepsilon)\aEl{e}}
	=
	\sEl{e}+\frac{2\varepsilon}{1-\varepsilon}
	\left(\sEl{e}+1\right) 
	\leq \sEl{e}+\frac{4\varepsilon}{1-\varepsilon} \enspace,
	\]
	where the last inequality follows since $\sEl{e} \leq 1$.
	With analogous calculations, we obtain that
	\begin{align*}
		\sElEst{e} &=  
		\frac{\bElEst{e}-\aElEst{e}}{\widehat{M}(e)} 
		\geq 
		\frac{(1-\varepsilon)\bEl{e}-(1+\varepsilon)\aEl{e}}{(1+\varepsilon)M(e)} 
		=
		\frac{(1-\varepsilon)\bEl{e}-(1+\varepsilon)\aEl{e}}{(1+\varepsilon)\aEl{e}} \\
		& =
		\sEl{e}-\frac{2\varepsilon}{1+\varepsilon}
		\left(\sEl{e}+1\right) 
		\geq 
		\sEl{e}-\frac{4\varepsilon}{1+\varepsilon}
		\geq 
		\sEl{e}-\frac{4\varepsilon}{1-\varepsilon}\enspace,
	\end{align*}
	which concludes the proof. 
\end{proof}

We observe that \Cref{lem:singlesilh} provides rigorous probabilistic guarantees (holding with probability at least $1-3\delta/5$) on the approximation error of each element in the set $\sElEst{V}$ returned by \algLocal, holding simultaneously for all the elements in $V$.  Note that, for example, when $\varepsilon\le 1/2$ then $4\varepsilon/(1-\varepsilon) \le 8 \varepsilon = \BO{\varepsilon}$.

To complete the proof of~\Cref{theo:localAlg}, we now bound the sample complexity and the number of distance computations performed by \Cref{code:algorithm}, as follows. 
\begin{lemma}\label{lem:samples}
	With probability at least $1 -2\delta/5$, over the execution of~\algLocal,
	for every cluster $\C_j$ with $|C_j| > t$ it holds that,
	\begin{eqnarray*}
		\useshortskip
		\quad |F^0_{C_j}|  = \BO{\ln\left(\frac{k}{\delta}\right)} \quad  \text{ and } \quad 
		|F_{C_j}| = \BO{\varepsilon^{-2}\ln\left(\frac{nk}{\delta}\right)} 
		\enspace.
	\end{eqnarray*}
\end{lemma}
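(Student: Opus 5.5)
Both bounds follow from Chernoff bounds on sums of independent Bernoulli indicators (Poisson sampling), plus a union bound over the at most $k$ clusters with $|\C_j|>t$. I allot failure probability $\delta/(5k)$ to each of the two events for each cluster, for a total of $2\delta/5$.

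\textbf{Initial sample.} For a fixed cluster $\C_j$ with $|\C_j|>t$, $|\PPSinitSample|$ is a sum of $|\C_j|$ independent indicators, each with success probability $(2/|\C_j|)\ln(5k/\delta)$. This probability is at most $1$, since $|\C_j|>t>2\ln(5k/\delta)$. The mean is therefore $\mu_0=2\ln(5k/\delta)$. I apply the multiplicative Chernoff upper tail of \Cref{th:CHbounds} with deviation parameter $2$ (any constant works), e.g. in the form $\Prob(X\ge(1+\eta)\mu)\le \exp(-\eta\mu/3)$ for $\eta\ge1$. This gives $\Prob(|\PPSinitSample|\ge 3\mu_0)\le \exp(-2\mu_0/3)=(\delta/(5k))^{4/3}\le \delta/(5k)$. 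Hence $|\PPSinitSample|=\BO{\ln(k/\delta)}$ with probability at least $1-\delta/(5k)$.

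\textbf{Final sample.} This is the main step, because the probabilities $p_e$ depend on $\PPSinitSample$. I condition on the realization of $\PPSinitSample$, which fixes the $p_e$'s; the second Poisson sampling is then independent given them. The conditional mean of $|\F_{\C_j}|$ is $\sum_e p_e\le t\sum_{e\in\C_j}\gamma_e$. Since $\gamma_e$ is a maximum of nonnegative terms, I bound it by their sum:
\[
\sum_{e\in \C_j}\gamma_e\le \sum_{e\in\C_j}\frac{1}{|\C_j|}+\sum_{\bar e\in \PPSinitSample}\sum_{e\in\C_j}\frac{d(e,\bar e)}{\weightClustEl{j}{\bar e}} = 1+|\PPSinitSample|,
\]
because $\sum_{e}d(e,\bar e)=\weightClustEl{j}{\bar e}$ for each $\bar e$. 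So, on the event of the first step, $\Exp[|\F_{\C_j}|\mid \PPSinitSample]\le t(1+|\PPSinitSample|)=\BO{t\ln(k/\delta)}$. This carries an extra $\ln(k/\delta)$ factor relative to the claimed $\BO{\varepsilon^{-2}\ln(nk/\delta)}$.

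To remove that factor, I plan to use the structure from the proof of \Cref{lem:mainerror}. Either I bound $\sum_e\max_{\bar e}d(e,\bar e)/\weightClustEl{j}{\bar e}$ by a constant using the well-positioned element (the argument of \citet[Lemma 9]{Chechik2015} compares all $\bar e$ to a central one via the triangle inequality), or I accept a constant bound by treating $\ln(k/\delta)$ as absorbed into $\ln(nk/\delta)$ under the paper's conventions. I expect this to be the main obstacle. My first attempt is to show that for each $\bar e$, $d(e,\bar e)/\weightClustEl{j}{\bar e}\le \BO{(d(e,e_{\rm min})+m(e_{\rm min}))/(|\C_j|\,m(e_{\rm min}))}$, whose sum over $e$ is $\BO{1}$. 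If that fails, I fall back to the $1+|\PPSinitSample|$ bound.

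\textbf{Concentration and union bound.} With the conditional mean $\mu\le \BO{t}$, i.e. at least of order $\varepsilon^{-2}\ln(5nk/\delta)$, a Chernoff bound gives $|\F_{\C_j}|\le 2\max\{\mu,t\}$ except with probability $\exp(-\Omega(t))\le\delta/(5k)$. If $\mu<t$, I apply the bound to a dominating sum with mean $t$. A union bound over all clusters and both events then yields total failure probability at most $2\delta/5$, which is the claim.
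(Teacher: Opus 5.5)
\textbf{Gap: the constant bound on $\sum_e\gamma_e$ is not established.} Your bound on $|\PPSinitSample|$, your conditioning on $\PPSinitSample$, and your final Chernoff/union-bound step are fine. The step everything hinges on is $\sum_{e\in\C_j}\gamma_e=\BO{1}$ for every realization of $\PPSinitSample$, and you do not prove it; the route you sketch fails.

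Grant your per-ratio inequality $d(e,\bar e)/\weightClustEl{j}{\bar e}=\BO{(d(e,e_{\rm min})+m(e_{\rm min}))/(|\C_j|\,m(e_{\rm min}))}$. Its right-hand side sums over $e\in\C_j$ to $\weightClustEl{j}{e_{\rm min}}/(|\C_j|\,m(e_{\rm min}))+\BO{1}$, and the first term is unbounded. In cluster $\C_1$ of the instance of \Cref{fig:instance}, $m(e_{\rm min})\le 2r$ while $\weightClustEl{1}{e_{\rm min}}\ge R_1-r$. The sum is therefore at least $(R_1-r)/(2r(m+1))$, which is arbitrarily large since $r=o(1/m)$. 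The median is too weak a lower bound on $\weightClustEl{j}{\bar e}$ when a few points are far away, which is exactly the situation \PPS sampling is built for.

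Neither fallback proves the lemma. The quantity $t(1+|\PPSinitSample|)$ is of order $\varepsilon^{-2}\ln(nk/\delta)\ln(k/\delta)$. That is a \emph{product} of logarithms and cannot be absorbed into $\BO{\varepsilon^{-2}\ln(nk/\delta)}$.

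\textbf{Missing idea: monotonicity in the initial sample.} Set $\bar\gamma_e=\max\{1/|\C_j|,\max_{e'\in\C_j}d(e,e')/\weightClustEl{j}{e'}\}$. Since $\PPSinitSample\subseteq\C_j$ and the algorithm uses exact weights $\weightClustEl{j}{\bar e}$, you have $\gamma_e\le\bar\gamma_e$ for every realization. It therefore suffices to prove the deterministic bound $\sum_e\bar\gamma_e=\BO{1}$.

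The paper gets this from two ingredients you already have. Let $\gamma^{\{w\}}_e$ be the value the algorithm would compute from $\PPSinitSample=\{w\}$, with $w$ well positioned (e.g.\ $w=e_{\rm min}$). Applying \Cref{eq:gamma} to this hypothetical sample gives $\bar\gamma_e\le c\,\gamma^{\{w\}}_e$. Your sum-swap bound with a singleton sample gives $\sum_e\gamma^{\{w\}}_e\le 2$. Hence $\sum_e\gamma_e\le 2c$.

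A self-contained alternative uses the mean instead of the median. Let $w$ minimize $\weightClustEl{j}{\cdot}$ over $\C_j$. Then $|\C_j|\,d(w,\bar e)\le\weightClustEl{j}{w}+\weightClustEl{j}{\bar e}\le 2\weightClustEl{j}{\bar e}$. So $d(e,\bar e)/\weightClustEl{j}{\bar e}\le d(e,w)/\weightClustEl{j}{w}+2/|\C_j|$, and summing over $e$ yields $\sum_e\bar\gamma_e\le 4$.

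Once you have $\mathbb{E}[|\F_{\C_j}|\mid\PPSinitSample]=\BO{t}$, your concentration and union-bound step completes the proof.
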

\begin{proof}
	Consider a cluster $|C_j|$ with $|C_j|>t$. $|F^0_{C_j}|$ corresponds to a binomial random variable $\Binomial(|\C_j|, 2/|\C_j|\ln(5k/\delta))$, 
	implying that $\Exp[|F^0_{C_j}|] = 2\ln(5k/\delta))$.
	Therefore, a simple application of the Chernoff bound suffices to show that $|F^0_{C_j}| = \BO{\ln(k/\delta)} $
	with probability at least $1-\delta/(5k)$, hence the bound
	holds for all $F^0_{C_j}$'s with probability at least $1-\delta/5$. 
	Let us now bound the size of $|F_{C_j}|$ (our result expands on the argument of~\citet{Chechik2015}). $|F_{C_j}|$ is a Poisson binomial random variable 
	with expectation
	\[
	\useshortskip
	\mathbb{E}\left[ |F_{C_j}| \right]
	= \sum_{e \in C_j} p_e 
	\leq 
	t \sum_{e \in C_j} \gamma_e \enspace.
	\]
	Observe that
	\begin{equation*}
		\hspace*{3em} 
		\begin{aligned}
			\useshortskip
			\sum_{e \in C_j} \gamma_e &=
			\sum_{e \in C_j} 
			\max \left\{\frac{1}{|C_j|},\max_{\bar{e} \in \PPSinitSample}\frac{d(e,\bar{e})}{\weightClustEl{j}{\bar{e}}} \right\} &&\qquad \text{(by definition of $\gamma_e, e\in V$)}\\
			& \leq 1+\sum_{e \in C_j} \max_{\bar{e} \in \PPSinitSample}\frac{d(e,\bar{e})}{\weightClustEl{j}{\bar{e}}} &&\qquad \text{(for $A,B\ge 0$ then $\max\{A,B\} \le A+B$)}\\
			& \leq 1+\sum_{e \in C_j} \sum_{\bar{e} \in \PPSinitSample} \frac{d(e,\bar{e})}{\weightClustEl{j}{\bar{e}}} &&\qquad \text{($\max\{A_i\} \le \sum A_i, A_i\ge 0$)} \\
			& = 1+\sum_{\bar{e} \in \PPSinitSample} \sum_{e \in C_j} \frac{d(e,\bar{e})}{\weightClustEl{j}{\bar{e}}} &&\qquad \text{(swapping finite and bounded sums)}\\
			& = 1+|\PPSinitSample| &&\qquad \text{(definition of $\weightClustEl{j}{\bar{e}}$)} \enspace.
		\end{aligned}
	\end{equation*}
	Now consider the following cases: 
	\begin{inparaenum}[(I)]
		\item\label{caseproof:C1} $|\PPSinitSample| = 1$ consists of a \emph{single well-positioned element};
		\item\label{caseproof:C2} $|\PPSinitSample| = |C_j|$, i.e., the initial sample size consists of \emph{all} elements of $C_j$;
		\item\label{caseproof:C3} $1<|\PPSinitSample|<|\C_j|$, accounting for the remaining cases.
	\end{inparaenum}
	Clearly, by the above analysis, case (\ref{caseproof:C1}) yields $\sum_e{\gamma_e} \le 2$. 
	Now let us define $\bar{\gamma}_e = \max\{1/|\C_j|,\max_{e'\in C_j} d(e,e')/\weightEl{j}\}$, that is $\bar{\gamma}_e$ is computed through the actual weights $\weightEl{j}$ over all elements of $\C_j$. 
	For case (\ref{caseproof:C2}) when $|\PPSinitSample| =|C_j|$ the algorithm~\algLocal relies on the values $\bar{\gamma}_e$. Let $\gamma_e, e\in \C_j$ be the coefficients obtained when $\PPSinitSample$ contains a single well-positioned element, then by~\Cref{eq:gamma} there exists a constant $c$ such that
	\[
	\sum_{e\in \C_j} \bar{\gamma}_e \le c \sum_{e\in \C_j} \gamma_e \le 2c \enspace.
	\]
	Therefore even when $|\PPSinitSample| = |C_j|$ we have that $\sum_e\gamma_e \le \bigO(1)$. 
	Note that case (\ref{caseproof:C2}) implies that even in the highly unlikely case of $|\PPSinitSample| = |C_j|$, the final sample $\F_{\C_j}$ will be such that $\mathbb{E}[|\F_{\C_j}|] = \bigO(t)$. 
	Finally, consider case~(\ref{caseproof:C3}), when $1 < |\PPSinitSample| < |\C_j|$, and let $\gamma_e$ be the coefficients obtained by~\algLocal with $\PPSinitSample$. 
	By contradiction assume that $\sum_e\gamma_e > \sum_e\bar{\gamma}_e$, for $\bar{\gamma}_e$ defined as in case~(\ref{caseproof:C2}), which implies that $\sum_e \left[\max\left\{\tfrac{1}{|C_j|}, \max_{\bar{e} \in \PPSinitSample}\frac{d(e,\bar{e})}{\weightClustEl{j}{\bar{e}}}\right\} - \max\left\{\tfrac{1}{|C_j|}, \max_{\bar{e} \in \C_j}\frac{d(e,\bar{e})}{\weightClustEl{j}{\bar{e}}}\right\}\right] > 0$. 
	It is simple to observe that none of the terms in the summation can be larger than $0$ since $\PPSinitSample\subset \C_j$, yielding the contradiction. 
	Hence, we have that for each set $\PPSinitSample, |\PPSinitSample| \in (1,|\C_j|)$ it holds that $\sum_e \gamma_e \le \sum_e\bar{\gamma}_e \le \BO{1}$, and in turn $\mathbb{E}[|\F_{\C_j}|] = \bigO(t)$.
	Combining the above three cases we have that any (random) choice of set $\PPSinitSample$ will yield $\mathbb{E}[|\F_{\C_j}|] = \bigO(t)$.
	Applying the Chernoff bound, we obtain that  $|F_{C_j}| = \BO{t}$ with probability $1-\delta/(5k)$ by the choice of $t$ as from~\Cref{lem:mainerror}, proving our statement. 
\end{proof}

Observe that the number of distance computations performed by~\algLocal is bounded by
\begin{equation}\label{eq:distanceComp}
	\useshortskip
	\sum_{j=1}^k |C_j| (|\PPSinitSample|+|\PPSsample|)\enspace, 
\end{equation}
where for ease of notation $|\PPSinitSample| = 0$ when $|C_j| \leq t$. 
By combining~\Cref{eq:distanceComp}, together with~\Cref{lem:samples}, we obtain the bound on the sample size and the bound on the number of distance computations claimed in~\Cref{theo:localAlg}, which
hold with probability at least $1-2\delta/5$---yielding \Cref{theo:localAlg} by \Cref{lem:mainerror,lem:singlesilh}.

\begin{figure}[t]
	\centering
	\begin{tikzpicture}[scale=1, thick,
		green place/.style={circle,draw=mydarkred!80,fill=white!60,thick, inner sep=0pt,minimum size=6.5mm},
		square/.style={regular polygon,regular polygon sides=4}
		every node/.style={draw,circle},
		fsnode/.style={fill=mygrey},
		pre/.style={->,shorten <=0.pt,>=stealth',semithick},
		post/.style={<-,shorten >=0.pt,>=stealth',semithick},
		ssnode/.style={fill=mygreen},
		every fit/.style={ellipse,draw,inner sep=-2pt,text width=2cm},
		]
  \draw[->] (-6,0) -- (6,0) node[below,yshift=-2pt] {$\mathbb{R}$};

  \node[draw, cross out, thick, inner sep=2pt, color=vir1] (R1) at (5.5, 0) {} node[below=0.5pt of R1] {$R_1$} node[above=0.5pt of R1] {$e_{m+1}$};
  
  
  \draw (1,0.1) -- (1,-0.1) node[below] {$0$};
  \node[draw, cross out, thick, inner sep=2pt, color=vir1] (C1M) at (2, 0) {} node[below=0.5pt of C1M] {$r$} node[above=0.5pt of C1M] {$e_{m}$};

  \node[draw, cross out, thick, inner sep=2pt, color=vir1] (C11) at (0, 0) {} node[below=0.5pt of C11,xshift=-0.1cm] {$-r$} node[above=0.5pt of C11] {$e_{1}$};
  \node[draw, cross out, thick, inner sep=2.2pt, color=vir1] (C12) at (0.3, 0) {} node[below=0.5pt of C12] {} node[above=0.5pt of C12] {$e_{2}$};
  \node[right=0.35 of C12, yshift=0.3cm] {$\dots$};

 \node[draw, regular polygon, regular polygon sides=3, fill=vir3, inner sep=1.5pt] (R2) at (-5, 0) {} node[below=0.5pt of R2] {$-R_2$} node[above=0.5pt of R2] {$e'_{m+1}$};

  \draw (-2,0.1) -- (-2,-0.1) node[below] (C2) {$-\tfrac{1}{m}$};
  \node[draw, regular polygon, regular polygon sides=3, fill=vir3, inner sep=1.5pt] (P21) at (-1, 0) {} node[below=0.5pt of P21] {$-\tfrac{1}{m}+r$} node[above=0.5pt of P21] {$e_1'$};
  
  \node[draw, regular polygon, regular polygon sides=3, fill=vir3, inner sep=1.5pt] (P22) at (-3, 0) {} node[below=0.5pt of P22,xshift=-0.1cm] {$-\tfrac{1}{m}-r$} node[above=0.5pt of P22] {$e_m'$};

  \node[above=0.2 of C2] {$\dots$};
  
  
  
  
  
\end{tikzpicture}
	\caption{Instance used to prove~\Cref{theo:unifails} (see~\Cref{subsubsec:uniFails}). $C_{1}=\{e_1,\dots,e_{m+1}\}$ and $C_{2}=\{e_1',\dots,e_{m+1}'\}$. 
	}
	\label{fig:instance}
\end{figure}

\subsubsection{Proof of \protect\Cref{theo:unifails}}\label{subsubsec:uniFails}
We now show that simple Poisson sampling over each cluster $\C_j, j\in[k]$, with an expected sample size equal to~\algLocal, is \emph{not} sufficient to obtain an accurate estimate of weights~$\weightEl{j}, j\in [k], e\in V$. 
That is, we will provide an instance of~\Cref{probl:LocalEst} where a \emph{uniform} sampling technique with a sample complexity of $\bigT(tk)$, i.e., roughly the same sample size of~\algLocal, outputs estimates of $\sEl{e}, e\in V$ with constant error $\Omega(1)$ for a large fraction of elements in $V$, proving~\Cref{theo:unifails}. Our result shows that non-uniform sampling (e.g., \PPS sampling) is necessary to get the rigorous approximation guarantees provided by our algorithm~\algLocal.

\textbf{Building the instance.} Let $V=\{e_1,\dots,e_{m+1},e_1',\dots,e_{m}',e_{m+1}' \}$, $|V|=2m+2$ for fixed and very large $m$ and $\clust=\{\C_1,\C_2\}$ such that $\C_1= \{e_1,\dots, e_{m+1}\}, \C_2= \{e_1',\dots, e_{m+1}'\}$ and fix $d$ to be the Euclidean distance over $\mathbb{R}\supset V$.\footnote{Recall for $a,b\in \mathbb{R}$ the Euclidean distance corresponds to $d(a,b) = |a-b|$, denoted also as \emph{length}.}
For ease of presentation, and without loss of generality, we assume that the elements of each cluster are ordered according to their indexing, that is $e_{i}< e_j$ (resp. $e_i'> e_j'$) if it holds $1\le i< j\le m+1$. 
The elements of $\C_1$ are placed as follows: ($i$) the $m$ elements $e_i, 1\le i \le m$, span an interval of length $2r$,  $r=o(1/m), r >0$ centered at the origin (i.e., $d(e_i,0) \le r$);
($ii$) $e_{m_1+1} \ge  0$ is such that $d(e_{m_1+1},0)\ge R_1$, for some large value $R_1$ (that  we will specify later). 
The structure of $\C_2$ is similar to the structure of $\C_1$: the $m$ elements $e_i', 1\le i \le m$ are within an interval of length $r$ centered at $-1/m$, and $e_{m+1}'<0$ at distance $R_2>0, R_2=\Theta(m)$ from $0$. 
Furthermore, let the points $e_i$ (resp. $e_i'$) for $i\in [m]$ be equally spaced over the intervals containing them, that is $d(e_i,e_{i+1})$ (resp. $d(e_i',e_{i+1}')$) equals $\beta r$ with $ \beta \doteq 2/(m-1)$, for $i\in [m-1]$. We display the geometry of the resulting dataset in~\Cref{fig:instance}. Finally let $R_1\ge L'(R_2+1)$ for arbitrary large $L'$.

\begin{proposition}\label{prop:exSilhBalls1D}
	Consider dataset $V$ and its clustering $\clustC$ as constructed above. Then, for each element $e_i, i\in[m]$ there exists a large constant $L$ such that $\sEl{e_i} =\tfrac{(1-L)}{L} \approx -1$. 
\end{proposition}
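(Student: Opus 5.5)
Fix $e_i$ with $i\in[m]$. The plan is to compute $\aEl{e_i}$ and $\bEl{e_i}$ from \Cref{eq:AandBterms}, show that $\aEl{e_i}\gg\bEl{e_i}$, and read off $\sEl{e_i}$ from \Cref{eq:localSilh}.

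\textbf{Step 1: bound $\aEl{e_i}$.} Since $\C$ has only two clusters, $\bEl{e_i}$ is the average distance to $\C_2$. For $\aEl{e_i}$, the other $m-1$ points $e_\ell$ ($\ell\in[m]$, $\ell\neq i$) lie within distance $2r$ of $e_i$. The far point $e_{m+1}$ satisfies $d(e_i,e_{m+1})\in[R_1-r,\,R_1+r]$. Hence
\[
\aEl{e_i}=\frac{d(e_i,e_{m+1})+\sum_{\ell\le m,\ell\ne i}d(e_i,e_\ell)}{m}\in\left[\frac{R_1-r}{m},\,\frac{R_1+r}{m}+2r\right].
\]
Since $r=o(1/m)$, this gives $\aEl{e_i}=R_1/m+o(1/m)$.

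\textbf{Step 2: bound $\bEl{e_i}$.} Each $e'_\ell$ with $\ell\le m$ lies within $r$ of $-1/m$, so $d(e_i,e'_\ell)\le 1/m+2r$. The point $e'_{m+1}$ is at distance at most $R_2+r$ from $e_i$. Hence
\[
\bEl{e_i}\le\frac{m(1/m+2r)+R_2+r}{m+1}\le\frac{R_2+1}{m}+o(1/m).
\]
A matching lower bound of the same order, roughly $R_2/(m+1)$, follows identically; only the upper bound is really needed.

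\textbf{Step 3: evaluate $\sEl{e_i}$.} Since $R_1\ge L'(R_2+1)$ with $L'$ large, we get $\aEl{e_i}>\bEl{e_i}$, so $\max\{\aEl{e_i},\bEl{e_i}\}=\aEl{e_i}$. Then
\[
\sEl{e_i}=\frac{\bEl{e_i}}{\aEl{e_i}}-1=\frac{1-L}{L},\qquad\text{where } L\doteq\frac{\aEl{e_i}}{\bEl{e_i}}.
\]
The estimates of Steps 1 and 2 give $L\ge \frac{R_1-o(1)}{R_2+1+o(1)}\gtrsim L'$, which is an arbitrarily large constant. Therefore $\sEl{e_i}=-1+1/L\approx-1$.

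\textbf{Main obstacle.} The delicate part is bookkeeping. The exact value of $L$ depends on $i$ through $\sum_\ell d(e_i,e_\ell)$, computable via the equal spacing $\beta r$. So I will read the statement as ``for each $i$ there is some $L=L_i$'', and I only need the uniform lower bound $L_i\ge L'(1-o(1))$. I also need to ensure the $o(1/m)$ error terms are negligible against $R_2/m=\Theta(1)$. This holds since $R_2=\Theta(m)$, so all error terms are lower order.
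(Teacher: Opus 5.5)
Your proposal is correct and follows the same route as the paper: triangle-inequality bounds give $a(e_i)=R_1/m+o(1/m)$ and $b(e_i)\approx (R_2+1)/(m+1)$, and comparing these dominant terms yields $s(e_i)\approx -1$. The only difference is the last step. The paper fixes $R_1=L(R_2+1)m/(m+1)$ and reads off $(1-L)/L$ from the dominant terms, so its equality holds only up to lower-order terms but uses a single $L$ for all $i$. You instead set $L_i=a(e_i)/b(e_i)$, which makes $s(e_i)=(1-L_i)/L_i$ exact and reduces the work to the one-sided bound $L_i\ge L'(1-o(1))$.
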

\begin{proof}
	Fix an element from $C_1$ in the closed interval of radius $r$ centered at the origin, i.e., $e_i, i\in[m]$. Then consider the value of $\aEl{e_i}, i\in [m]$ in the definition of $\sEl{e_i}$, we have that 
	\begin{align*}
		\aEl{e_i} &= \frac{\weightClustEl{}{e_i}}{m} =  \frac{1}{m}\sum_{j=1, j\neq i}^m d(e_j, e_i) + \frac{d(e_{m+1},e_i) }{m} \le \frac{m-1}{m} 2r + \frac{(R_1 + r)}{m} \le \frac{R_1}{m}+3r \enspace,
	\end{align*}
	by using the triangle inequality. We also have the following lower bound on $\aEl{e_i}$,
	\begin{align*}
		\aEl{e_i}&= \frac{\weightClustEl{}{e_i}}{m} =  \frac{1}{m}\sum_{j=1, j\neq i}^m d(e_j, e_i) + \frac{d(e_{m+1},e_i) }{m} \ge \frac{m-1}{m} \beta r + \frac{(R_1 - r)}{m}\ge \frac{R_1+r}{m} \enspace ,
	\end{align*}
	where in the first inequality we used the fact that the distance between two elements $e_i,e_j$ with $1\le i \neq j \le m$ of $\C_1$ is $\beta r,\beta=2/(m-1)$ and that  $d(e_{m+1}, e_i) \ge R_1-r$.  Hence $\aEl{e_i} = R_1/m + o(1/m)$ by the fact that $r=o(1/m)$. 
	Next for the term $\bEl{e_i}, i\in[m]$ with a similar argument we have that
	\begin{align*}
		\bEl{e_i}&= \frac{\weightClustEl{2}{e_i}}{m+1} =  \frac{1}{m+1}\sum_{j=1}^m d(e_j', e_i) + \frac{d(e_{m+1}',e_i) }{m+1} \le \\
		&\le \frac{m}{m+1} \left(\frac{1}{m}+2r\right) + \frac{(R_2 + r)}{m+1} \le \frac{r+1}{m+1}+2r+\frac{R_2}{m+1}\enspace.
	\end{align*}
	(e.g., by using the triangle inequality). Finally, for the lower bound,
	\begin{align*}
		\bEl{e_i}&= \frac{\weightClustEl{2}{e_i}}{m+1} =  \frac{1}{m+1}\sum_{j=1}^m d(e_j', e_i) + \frac{d(e_{m+1}',e_i) }{m+1} \ge\\
		&\ge \frac{m}{m+1} \left(\frac{1}{m}-2r\right) + \frac{(R_2 - r)}{m+1} \ge \frac{1-2mr-r}{m+1}+\frac{R_2}{m+1}\enspace.
	\end{align*}
	Hence $\bEl{e_i} = (R_2+1)/(m+1) + o(1/m)$. Focusing on the dominating terms of $\aEl{e_i}$ and $\bEl{e_i}$ and setting $R_1 = L(R_2+1)m/(m+1)$,  we have that
	\[
	\sEl{e_i} = \frac{(R_2+1)/(m+1) - R_1/m}{\max\{(R_2+1)/(m+1),R_1/m\}}= {\frac{(1-L)(R_2+1)/(m+1)}{L(R_2+1)/(m+1)}} = {\frac{(1-L)}{L}} = -1+{\frac{1}{L}} \approx -1\enspace .
	\]
\end{proof}

\textbf{Comparing uniform sampling with our algorithm~\algLocal.} Consider now a \emph{uniform Poisson sampling strategy} (denoted as \emph{uniform sampling} for short), that estimates $\weightEl{j}, j=1,2$ as follows:
\begin{compactenum}
	\item Collect a sample $S_{C_j}, j=1,2$ of points from each cluster, including each point with probability $p = t/|C_j| = t/(m+1), p>0$ for $t=o(m)$ (i.e., the same $t$ used by~\algLocal);
	\item Estimate $\weightEl{j}, j=1,2$ as:
	\[
	\weightElEstUni{j} = \sum_{e'\in S_{C_j}}  \frac{d(e,e')}{p} = \sum_{e'\in S_{C_j}}  \frac{d(e,e')(m+1)}{t}\enspace.
	\]
\end{compactenum}

It is easy to show that the estimators $\weightElEstUni{j}$ are unbiased. 
We can also use the estimators $\weightElEstUni{j}$ to approximate the silhouette $\sEl{e}$ in an analogous way as done in~\Cref{eq:localSilhEst}.
That is for each $e\in C_1$, we consider the estimate
\[ \sElEstUni{e} = \frac{\weightElEstUni{2}/(m+1) - \weightElEstUni{1}/m}{\max\left\{\weightElEstUni{2}/(m+1),\weightElEstUni{1}/m\right\}} \enspace.
\] 
In the next lemma, we show that for a fixed sample size $t=o(m)$ and with arbitrary large probability close to 1, it holds that $|\sElEstUni{e}-\sEl{e}|\approx 2$. That is, the estimation error of the uniform sampling strategy corresponds to the \emph{maximum error attainable}, \emph{for all $e\in C_1\setminus\{e_{m+1}\}$}. 
In contrast, the estimates provided by~\algLocal, using the same value of $t$ achieve, for $e\in C_1$, $|\sElEst{e}-\sEl{e}|\approx 0$ with large probability, that can be made arbitrarily close to 1. 
A similar argument can be used to show a large constant estimation error for $\sElEstUni{e}$ over all elements in $e\in C_2$ .

\begin{proposition}\label{prop:badUniAppx1D}
	Let $V$ and $\clustC=\{\C_1,\C_2\}$ be the instance to~\Cref{probl:LocalEst} constructed above. Then, for each element in $e_i \in C_1$ with $i\in[m]$ the value of the uniform Poisson sampling strategy yields $\sElEstUni{e_i}\approx 1 $ while~\algLocal yields $\sElEst{e_i}\approx -1$ with arbitrarily large probability. Therefore, $|\sEl{e_i} -\sElEstUni{e_i}|\approx 2$, while $|\sEl{e_i} -\sElEst{e_i}|\approx 0$.
\end{proposition}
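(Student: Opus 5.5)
The plan has two independent parts: first the uniform estimator, then \algLocal via the earlier lemmas.

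\textbf{Uniform sampling.} The key event is that the far outlier $e_{m+1}$ is \emph{not} sampled into $S_{C_1}$. It is included with probability $p=t/(m+1)$, so $\Prob(e_{m+1}\notin S_{C_1}) = 1-t/(m+1)$. Since $t=o(m)$, this tends to $1$ and can be made arbitrarily close to $1$ by taking $m$ large.

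On this event, every sampled point of $C_1$ lies within $2r$ of $e_i$. Hence
\[
\weightElEstUni{1}\le |S_{C_1}|\,2r(m+1)/t .
\]
Apply a Chernoff bound to $|S_{C_1}|$, whose mean is at most $t$, so that $|S_{C_1}|\le 2t$ with large probability. This gives $\weightElEstUni{1}/m = \BO{r}=o(1/m)$.

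For $C_2$, we may simply lower bound the estimator using the sampled points among $e'_1,\dots,e'_m$. Each is at distance at least $1/m-2r$ from $e_i$. By Chernoff, $S_{C_2}$ contains at least $t/2$ such points with large probability, since $t=\Omega(\ln(nk/\delta))$ grows. Therefore
\[
\weightElEstUni{2}/(m+1) \ge (t/2)(1/m-2r)/t=\Omega(1/m).
\]
Combining the two bounds, $\widehat b$ dominates $\widehat a$ by a factor $\omega(1)$. So $\sElEstUni{e_i}=1-\widehat a/\widehat b = 1-o(1)\approx 1$, simultaneously for all $i\in[m]$, because the event does not depend on $i$.

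Proposition~\ref{prop:exSilhBalls1D} gives $\sEl{e_i}=-1+1/L\approx -1$. The difference is therefore approximately $2$.

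\textbf{\algLocal.} Invoke Lemma~\ref{lem:mainerror} and Lemma~\ref{lem:singlesilh}. With probability at least $1-3\delta/5$, they give $|\sElEst{e_i}-\sEl{e_i}|\le 4\varepsilon/(1-\varepsilon)$ for all $e_i$. Choosing $\varepsilon,\delta$ small yields $\sElEst{e_i}\approx -1$ with arbitrarily large probability.

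It is instructive, though not needed, to note why PPS succeeds. For $\bar e$ central, $d(e_{m+1},\bar e)/W_{C_1}(\bar e)\approx R_1/(R_1+o(1))\approx 1$, so $p_{e_{m+1}}=1$ and the outlier is always sampled. This intuition is optional because the lemmas already cover the case.

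\textbf{Main obstacle.} The delicate step is making ``$\approx$'' quantitative while respecting the order of limits. One must fix $t=o(m)$ and $L$ large, then let $m\to\infty$, so that three things hold at once: $t/(m+1)\to 0$, the Chernoff failure probabilities vanish, and $r=o(1/m)$ terms are negligible against $1/m$.

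I would bound the failure probability explicitly by $t/(m+1)+2e^{-\Omega(t)}$. I would then check that $\widehat a/\widehat b = \BO{rm}\to 0$.
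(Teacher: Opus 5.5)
Your proof is correct. For the uniform half it follows the paper's route: condition on the outlier $e_{m+1}$ not entering $S_{C_1}$, which happens with probability $1-t/(m+1)$, and then compare the two estimated averages.

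You also tighten that half in two ways.
\begin{itemize}
\item The paper assumes $|S_{C_j}|$ equals a constant multiple of $t$ exactly; you replace this with Chernoff bounds, using that $t\to\infty$.
\item The paper additionally conditions on $e'_{m+1}\notin S_{C_2}$ to get two-sided bounds on $\weightElEstUni{2}$. You observe that only a lower bound on $\widehat b$ is needed, because $\sElEstUni{e_i}=1-\widehat a/\widehat b$ once $\widehat b\ge\widehat a$.
\end{itemize}

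For the \algLocal half your route is genuinely different. The paper argues on the instance directly. Conditioned on the outliers not lying in the initial samples $F^0_{C_j}$, it gets $\gamma_{e_{m+1}}\ge (R_1-r)/(R_1+r+2mr)=1-o(1)$. Hence $p_{e_{m+1}}=1$ and the outlier always enters $F_{C_1}$; likewise $e'_{m+1}$ enters $F_{C_2}$. This gives $\widehat W_{C_1}(e_i)\approx R_1$ and $\widehat W_{C_2}(e_i)\approx R_2$, so $\sElEst{e_i}\approx -1$.

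You instead black-box \Cref{lem:mainerror,lem:singlesilh}. Your version is shorter and fully rigorous given those lemmas, whereas the paper's is a sketch. However, it only places $\sElEst{e_i}$ within $4\varepsilon/(1-\varepsilon)$ of $\sEl{e_i}$. So ``$\approx 0$'' forces you to shrink $\varepsilon$ and $\delta$, making $t$ grow like $\varepsilon^{-2}$. This is harmless: $t=o(m)$ and $t\to\infty$ still hold if you fix $\varepsilon,\delta,L$ first and then let $m\to\infty$. That is the precise order of quantifiers, rather than ``fix $t=o(m)$'' as in your last paragraph.

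The paper's instance-specific argument buys something yours does not. It exhibits the actual mechanism separating \PPS from uniform sampling: the outlier's inclusion probability is forced to $1$. It also gives vanishing error at the same $t$ without needing $\varepsilon\to 0$, which is closer to the spirit of a same-sample-size comparison.
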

\begin{proof}
	Let $a$ be a small constant. 
	For ease of presentation and given that $m$ is very large, our analysis considers the following case: ``the uniform Poisson sampling strategy over $C_j$ yields $a t$ total sampled elements''. 
	Note that since the number of samples corresponds to a Binomial distribution Bin$(m+1,t/(m+1))$ for large $m$, this is often the case, up to a very small constant. 
	A similar result holds for the size of $\F_{\C_j}, j\in[k]$ of our algorithm~\algLocal (see the proof of~\Cref{lem:samples}).
	Therefore, for simplicity, in our analysis we assume that both $|S_{C_j}|=|\F_{C_j}|=a t$, \emph{exactly}. 
	Consider the uniform sampling strategy and define the following two events $E_1 \doteq$ ``$e_{m+1} \text{ is not sampled in } S_{C_1}$'' and $E_2 \doteq$ ``$e_{m+1}' \text{ is not sampled in } S_{C_2}$''.
	We have that $\mathbb{P}[E_1] = \mathbb{P}[E_2] = (1-t/(m+1)) = 1 - o(1)$ for arbitrary large $m$, by the choice of $t$. 
	Hence, conditioning on $E_1$ and $E_2$, we obtain that for $e\in C_1\setminus \{e_{m+1}\}$,
	\begin{align*}
		\weightElEstUni{1} \ge \frac{m+1}{t} \beta r a t \ge \frac{(m+1)2ra}{(m-1)} \enspace, \quad \weightElEstUni{1} \le \frac{m+1}{t} 2rat  = (m+1)2ra  
	\end{align*}
	and
	\begin{align*}
		&\weightElEstUni{2} \le \frac{m+1}{t} \left(\frac{1}{m} +2r\right) at \le \frac{a(m+1)}{m} + (m+1)2ra \enspace,\\ 
		&\weightElEstUni{2} \ge \frac{m+1}{t} \left(\frac{1}{m}-2r\right) at \ge \frac{a(m+1)}{m} - a(m+1)2r \enspace .
	\end{align*}
	Hence, for $m$ large and $r=o(1/m)$ we have that $\weightElEstUni{1} $ approaches $ 0$ while $\weightElEstUni{2} = a(m+1)/m$. Therefore, focusing on the dominant terms we have that
	\[
	\sElEstUni{e} = \frac{1/m}{1/m} + o(1) \approx 1\enspace.
	\]
	Next consider the estimates $\weightElEst{j}, j=1,2$ computed by~\algLocal. Where we condition on the events $E_1$ and $E_2$ defined over $F_{C_j}^0$. Then we have that $\gamma_e \ge \max_{\bar{e}\in \F_{\C_1}^0} d(\bar{e}, e)/\widehat{W}_{C_1}(\bar{e})$ for $e=e_{m+1}$ that is $\gamma_e \ge (R_1-r)/(R_1+r+2mr) = 1-o(1)$. 
	Therefore over $t$ points selected in the final sample $\F_{\C_1}$, point $e_{m+1}$ will be selected with probability arbitrary close to 1 by~\algLocal, by the choice of $r$. 
	A similar argument applies to $e_{m+1}'$.
	We therefore have that for $e\in C_1\setminus\{e_{m+1}\}$ it holds
	\begin{align*}
		&\weightElEst{1} = d(e,e_{m+1}) + \sum_{e' \in \F_{\C_1}} \frac{d(e,e')}{p_e'} = R_1 + o(R_1)\\ 
		&\weightElEst{2} = d(e,e_{m+1}') + \sum_{e' \in \F_{\C_2}} \frac{d(e,e')}{p_e'} = R_2 + o(R_2)\enspace 
	\end{align*}
	by the facts that $r=o(1/m)$, $t=o(m), R_2=\Theta(m)$ and that $C_2$ is centered at $-1/m$. Hence, by the choice of $R_1 = L(R_2+1)$, for a suitably large $L$, focusing on the dominant terms
	\begin{align*}
		\sElEst{e} = \frac{R_2/(m+1) - R_1/m}{\max\{R_2/(m+1),R_1/m\}} \approx -1\enspace.
	\end{align*}
	The final claim follows by computing the distance of $\sElEstUni{e}$ and $\sElEst{e}$ using the exact silhouette values of $e_i, i\in[m]$ from~\Cref{prop:exSilhBalls1D}, i.e., $\sEl{e}\approx-1$.
\end{proof}
\textbf{Proving our result.} The final claim of~\Cref{theo:unifails} follows as a corollary of~\Cref{prop:badUniAppx1D}, and noting that $|V|=n=2m+2$ and $|\C_1| = m+1$. We remark that our result provides evidence of the superior accuracy achieved by~\algLocal over a naive uniform Poisson sampling approach. 
That is, our~\algLocal provably reports much more accurate estimates of the values of $\sEl{e}, e\in V$ that cannot be matched by a uniform sampling strategy, used to approximate the values of $\weightEl{j},j\in[k]$, when both approaches retain the same sample complexity.
We further complement our theoretical insights with our extensive experimental evaluation in~\Cref{sec:exp}.

\subsection{Estimating the silhouette of a clustering}\label{subsec:globalSilh}

We now study the problem of obtaining a tight and accurate estimate of the silhouette of a clustering $\clustC$, as captured in our next statement.

\begin{problem}\label{probl:GlobEst}
	Given a set of elements $V$ and a $k$-clustering $\clustC$ of $V$ and two parameters $\varepsilon,\delta \in (0,1)$, obtain an estimator $s'(\clust)$ such that $|s'(\clust) - \sEl{\clust}| \le f(\varepsilon)$ with probability at least $1-\delta$, minimizing the number of distance computations required to compute $s'(\clustC)$.
\end{problem}

Recall that $\sEl{\clust}$ is largely used in practice for evaluating the quality of a clustering $\clust$---where \clust is often obtained optimizing popular clustering metrics such as the $k$-means or $k$-medoids objective functions~\citep{Dudek2020Silh}. Hence, obtaining an accurate approximation of $\sEl{\clust}$ is of high practical utility, e.g., for identifying a good value for the parameter $k$ on large datasets where $\sEl{\clust}$ cannot be computed exactly.

In the next sections we will obtain several estimators for the value $\sEl{\clust}$, solving~\Cref{probl:GlobEst}, where, in analogy with~\Cref{probl:LocalEst}, we want $f(\varepsilon) = \Theta(\varepsilon)$. 
In addition, we want to output our highly accurate estimate $s'(\clustC)$ performing significantly less distance computations required by the exact algorithm, i.e., $\Theta(n^2)$.

\subsubsection{Estimator based on the silhouette definition} 
Our first estimator (\textsc{gl}obal-\textsc{s}ampler, \algGlobOne) is based on the definition of $s(\clustC)$ itself (see Equation~\ref{eq:avgShil}). 
That is, $\sEl{\clustC}$ corresponds to the average of the silhouette of all the elements $e\in V$, where each term $\sEl{e}$ contributing to the average takes a value ranging from $-1$ to $1$.
Therefore, our first estimator is based on \emph{sampling} a sufficiently small number of elements from $V$, and using the average of their \emph{exact} silhouette value (i.e., $\sEl{e}$) as proxy for the overall silhouette $s(\clustC)$.

The estimator is computed as follows: we sample $\left\{ e_1,\dots,e_m \right\}$ \emph{random} elements from $V$ where each element $e_\ell, \ell\in[m]$ is sampled uniformly at random from the set $V$. The resulting estimator can be expressed as
\begin{equation}\label{eq:est1}
	\widehat{s}_1(\mathcal{C}) = \frac{1}{m}\sum_{\ell=1}^{m} s(e_{\ell}) \enspace.
\end{equation} 

The next result quantifies the guarantees offered by estimator $\widehat{s}_1(\mathcal{C})$.

\begin{theorem}
	\label{th:allsilh_alg1}
	Let $V$ be a dataset of $n$ elements, and let
	$\mathcal{C}$ be a $k$-clustering of $V$.  If $m \ge \frac{2}{\varepsilon^2} \ln \frac{2}{\delta}$, then 
	$|\widehat{s}_1(\mathcal{C})-s(\mathcal{C})| \leq \varepsilon$
	with probability at least $1-\delta$.
\end{theorem}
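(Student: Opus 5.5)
The plan is a direct application of Hoeffding's inequality to the i.i.d.\ sample defining $\widehat{s}_1(\mathcal{C})$ in \Cref{eq:est1}.

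First, model the $m$ sampled elements as independent draws, each uniform over $V$ (sampling with replacement). Define $X_\ell \doteq s(e_\ell)$ for $\ell\in[m]$. These are i.i.d.\ random variables. Since $e_\ell$ is uniform over $V$, we get
\[
\mathbb{E}[X_\ell]=\frac{1}{n}\sum_{i=1}^n s(e_i)=s(\mathcal{C})
\]
by \Cref{eq:avgShil}. By linearity, $\mathbb{E}[\widehat{s}_1(\mathcal{C})]=s(\mathcal{C})$, so the estimator is unbiased.

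Next, note that each $X_\ell$ takes values in $[-1,1]$, an interval of width $2$. This follows from the remark after \Cref{eq:localSilh}, together with the convention $s(e)=0$ for singleton clusters. Hoeffding's inequality for the mean of $m$ independent variables with range $[a,b]$ then gives
\[
\Prob\left(|\widehat{s}_1(\mathcal{C})-s(\mathcal{C})|\ge\varepsilon\right)\le 2\exp\left(-\frac{2m^2\varepsilon^2}{m(b-a)^2}\right)=2\exp\left(-\frac{m\varepsilon^2}{2}\right).
\]
If the appendix does not already state Hoeffding's inequality next to \Cref{th:CHbounds}, I would cite the standard version.

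Finally, substituting $m\ge \frac{2}{\varepsilon^2}\ln\frac{2}{\delta}$ makes the right-hand side at most $2\cdot \delta/2=\delta$, which gives the claim.

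I do not expect a real obstacle. The only point needing care is making independence explicit: sampling must be with replacement, or else one invokes Hoeffding's observation that sampling without replacement satisfies the same bound. The other point is using the correct range width $2$ rather than $1$, which is exactly what produces the constant $2$ in the required sample size.
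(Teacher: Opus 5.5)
Your proposal is correct and matches the paper's proof: the paper likewise notes that each $s(e_\ell)$ lies in $[-1,1]$ and has expectation $s(\mathcal{C})$, then applies Hoeffding's bound (stated in the appendix as \Cref{th:Hoeffding}), which with range width $2$ yields $2\exp(-m\varepsilon^2/2)\le\delta$. Your explicit remark that the $m$ draws must be independent (uniform, with replacement) makes precise a point the paper leaves implicit.
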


\begin{proof}
	Note that $s(e_{\ell}), \ell=1,\dots,m$ are random variables taking values in the interval $[-1, 1]$. Moreover, for each $\ell=1,\dots,m$, we have that 
	$\mathbb{E}\left[s(e_{\ell})\right] = 
	\sum_{e\in V} s(e)/n =  s(\mathcal{C})$. 
	The result follows directly from the application of Hoeffding's bound (\Cref{th:Hoeffding}).
\end{proof}

The time complexity (i.e., number of distance computations) required to compute $\widehat{s}_1(\clust)$ is $\bigO(mn)$, for fixed $m$.

\subsubsection{Estimator using \algLocal} Recall that in Section~\ref{subsec:localPPS} we developed our algorithm~\algLocal that given a dataset $V$, a clustering $\clustC$ of $V$ and two parameters $\varepsilon, \delta\in (0,1)$ computes a set $\sElEst{V}$ such that for each $e\in V$ it holds $|\sElEst{e}-s(e)|\le (4\varepsilon)/(1-\varepsilon)$ with probability at least $1-\delta$. 
Our second estimator (\textsc{gl}obal-\textsc{pps}-\textsc{f}ull-sampler, \algGlobPPS) approximates the silhouette  $s(\clustC)$ using the estimates $\sElEst{e}$ for each element $e\in V$ as obtained by \algLocal, that is

\begin{equation}\label{eq:est2}
	\widehat{s}_2(\mathcal{C}) = \frac{1}{n}\sum_{e \in V} \sElEst{e}\enspace,
\end{equation}
where the values $\sElEst{e}, e \in V$ are computed by \algLocal. The next theorem establishes an upper bound to the absolute error of $\widehat{s}_2(\mathcal{C})$ with respect to $s(\mathcal{C})$.
\begin{theorem}
	\label{th:allsilh}
	Let $V$ be a dataset of $n$ elements, and let
	$\mathcal{C}$ be a $k$-clustering of $V$. Let $\widehat{s}_2(\mathcal{C})$
	be the estimate of the silhouette of the clustering $s(\mathcal{C})$
	computed as in \Cref{eq:est2} by running \algLocal for given parameters
	$\varepsilon$ and $\delta$, with $0 < \varepsilon, \delta < 1$, and
	for a suitable choice of the
	sample size $t$. Then, 	with probability at least $1-\delta$, it holds that
	\[
	|\widehat{s}_2(\mathcal{C})-s(\mathcal{C})|
	\leq \frac{4\varepsilon}{1-\varepsilon} \enspace .
	\]
\end{theorem}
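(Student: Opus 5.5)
The plan is to derive \Cref{th:allsilh} directly from the guarantees already established for \algLocal, with no new probabilistic analysis. We will run \algLocal with the same parameters $\varepsilon,\delta$ and with the sample size $t = \ceil{(3c/\varepsilon^2)\ln(5nk/\delta)}$, $c=18$, as prescribed by \Cref{lem:mainerror}; this is the ``suitable choice'' of $t$ in the statement. By \Cref{lem:mainerror}, the event $E$ (all relative errors $|(\weightElEst{j}-\weightEl{j})/\weightEl{j}|\le\varepsilon$ for every $e\in V$ and every $C_j$) holds with probability at least $1-3\delta/5\ge 1-\delta$.

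Conditioning on $E$, \Cref{lem:singlesilh} gives, simultaneously for all $e\in V$, the bound $|\sElEst{e}-\sEl{e}|\le 4\varepsilon/(1-\varepsilon)$. The key point is that this bound holds for every element on the same event, so no further union bound over the elements is needed. The global bound then follows from the triangle inequality applied to the averages:
\[
|\widehat{s}_2(\clust)-\sEl{\clust}| = \Bigl|\frac1n\sum_{e\in V}\bigl(\sElEst{e}-\sEl{e}\bigr)\Bigr| \le \frac1n\sum_{e\in V}|\sElEst{e}-\sEl{e}| \le \frac{4\varepsilon}{1-\varepsilon}.
\]
This holds with probability at least $1-3\delta/5$, which is at least $1-\delta$.

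Singleton clusters need no separate treatment. For such elements both $\sEl{e}=0$ and $\sElEst{e}=0$, and \Cref{lem:singlesilh} already covers this case.

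The only step requiring care is making sure the probability accounting matches the statement. \Cref{th:allsilh} asks only for the accuracy guarantee, not for the bound on distance computations. Hence we need only the $1-3\delta/5$ event from \Cref{lem:mainerror}, and not the additional $2\delta/5$ failure probability of \Cref{lem:samples}. If one also wants to report the cost, the sample-size bound of \Cref{lem:samples} can be included, and a union bound over the two events gives total probability at least $1-\delta$, as in \Cref{theo:localAlg}. I do not expect a genuine obstacle, since the argument is essentially a corollary of \Cref{lem:mainerror,lem:singlesilh}.
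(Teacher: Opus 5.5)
Your proposal is correct and matches the paper's argument. The paper also derives \Cref{th:allsilh} as an immediate consequence of the definitions of $s(\mathcal{C})$ and $\widehat{s}_2(\mathcal{C})$ together with \Cref{lem:mainerror,lem:singlesilh}: on the event $E$, the per-element bound holds for every element at once, and averaging with the triangle inequality gives the global bound. Your remark that only the $1-3\delta/5$ accuracy event is needed here, and not the sample-size event of \Cref{lem:samples}, is a valid and slightly sharper accounting of the failure probability.
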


The proof is an immediate consequence of the definition
of $s(\mathcal{C})$ (\Cref{eq:avgShil}) and $\widehat{s}_2(\mathcal{C})$ (\Cref{eq:est2}),
and of the guarantees provided by \Cref{lem:mainerror,lem:singlesilh}.

For a given (expected) sample size $t$,\footnote{Recall that the number of samples in each cluster in~\algLocal is $\bigO(t)$ with probability at least $1-2\delta/5$.} the time complexity to compute $\widehat{s}_2(\clust)$ is the same required by \algLocal, that is $\bigO(nkt)$ with probability at least $1-\delta$ (see~\Cref{lem:samples}, and the related discussion).

\subsubsection{Estimator using two-phase sampling}
Let us consider, for ease of notation, the set $ \sElEst{V} = \{\sElEst{e_1},\dots,\sElEst{e_n}\}$ of estimates computed with~\algLocal. 
Note that the values in the set $\sElEst{V}$ are \emph{fixed} once the \PPS sampling {phase} of Algorithm~\ref{code:algorithm} (i.e, \emph{Phase 1}) has been performed, even if the values in $\sElEst{V}$ have not explicitly been computed yet.\footnote{\emph{After} Phase 1, the values $\sElEst{e_1},\dots,\sElEst{e_n}$ are determined, i.e., there is no randomness once the samples $\F_{\C_{j}}$ for $j\in[k]$ are fixed.} We therefore approximate the estimator in Equation~\eqref{eq:est2} as follows:
\begin{enumerate}
	\item Execute \emph{Phase 1} of Algorithm~\ref{code:algorithm} and obtain all samples $\F_{\C_{j}}, j\in[k]$, so that the values of $\sElEst{e_i}, i\in[n]$ are determined (i.e., the set $\sElEst{V}$), even if not computed yet;
	\item Select uniformly at random $m\in\mathbb{N}, m\geq2$ indices $i_1,\dots,i_m$ in the set $\{1,\dots,n\}$;
	\item For each $i_j, j \in \left\{1,\dots,m\right\}$ evaluate $\sElEst{e_{i_j}}$;
	\item Return the average $\bar{s}(\mathcal{C})$ over the $m$ estimates $\sElEst{e_{i_j}}$ obtained above.
\end{enumerate}
The resulting estimator (\textsc{gl}obal-\textsc{pps}-\textsc{s}ub-sampler, \algGlobSub) can be expressed as follows,
\begin{equation}\label{eq:est3}
	\bar{s}(\clustC)=\frac{1}{m}\sum_{j=1}^m \sElEst{e_{i_j}}\enspace.
\end{equation}

The following result shows that  $\bar{s}(\clustC)$ is an unbiased estimator of $\widehat{s}_2(\mathcal{C})$.

\begin{lemma}\label{lemma:unbiasedness}
	$\mathbb{E}\left[\bar{s}(\mathcal{C})\right] =\widehat{s}_2(\mathcal{C})$.
\end{lemma}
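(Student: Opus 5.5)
The plan is to condition on the outcome of Phase 1 and then apply linearity of expectation together with the uniformity of the index selection. Let $\mathcal{F}=\{\F_{\C_j}: j\in[k]\}$ (together with the initial samples $\PPSinitSample$ and probabilities $p_e$) denote the randomness of Phase 1. Once $\mathcal{F}$ is fixed, each $\sElEst{e_i}$, $i\in[n]$, is a deterministic function of $\mathcal{F}$, and so is $\widehat{s}_2(\clustC)=\frac1n\sum_{i=1}^n \sElEst{e_i}$. The indices $i_1,\dots,i_m$ are drawn uniformly from $\{1,\dots,n\}$ independently of $\mathcal{F}$. The statement is therefore read as an identity between random variables, with the expectation on the left-hand side taken over the index selection given Phase 1: $\mathbb{E}[\bar{s}(\clustC)\mid \mathcal{F}]=\widehat{s}_2(\clustC)$.

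The key steps are as follows. First, for each fixed $j\in[m]$, write $\sElEst{e_{i_j}}=\sum_{i=1}^n \sElEst{e_i}\,\mathbb{1}[i_j=i]$. Taking the conditional expectation given $\mathcal{F}$ and using $\Prob(i_j=i\mid\mathcal{F})=1/n$ yields $\mathbb{E}[\sElEst{e_{i_j}}\mid\mathcal{F}]=\frac1n\sum_{i}\sElEst{e_i}=\widehat{s}_2(\clustC)$. Second, apply linearity of expectation to the average in \Cref{eq:est3}: $\mathbb{E}[\bar{s}(\clustC)\mid\mathcal{F}]=\frac1m\sum_{j=1}^m \widehat{s}_2(\clustC)=\widehat{s}_2(\clustC)$. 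This step needs only identical marginals, so it holds whether the $m$ indices are drawn with or without replacement. Optionally, a third step takes the expectation over $\mathcal{F}$ and uses the tower property to obtain the unconditional version $\mathbb{E}[\bar{s}(\clustC)]=\mathbb{E}[\widehat{s}_2(\clustC)]$.

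There is no real technical obstacle. The only point needing care is making precise which randomness the expectation integrates over, since $\widehat{s}_2(\clustC)$ is itself random. I would state explicitly that the lemma is meant conditionally on Phase 1, which is exactly the interpretation suggested by the footnote preceding the definition of \algGlobSub. I would also note that each $\sElEst{e_i}\in[-1,1]$ is bounded, so all expectations are well defined.
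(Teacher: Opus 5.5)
Your proof is correct and follows the same route as the paper. The paper sets $X_j=\widehat{s}(e_{i_j})$, uses that each $i_j$ is uniform on $\{1,\dots,n\}$ to get $\mathbb{E}[X_j]=\frac{1}{n}\sum_{i}\widehat{s}(e_i)$, and concludes by linearity of expectation. Your explicit conditioning on the Phase~1 outcome only makes precise what the paper leaves implicit, namely that the values $\widehat{s}(e_i)$ are treated as fixed once Phase~1 has run.
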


\begin{proof}
	For each $j \in \left\{1,\dots,m\right\}$, let $X_j$ be a random variable with value $\hat{s}(e_{i_j})$ (observe that $i_j$ is itself a random variable). Note that $\bar{s}(\clustC)=\frac{1}{m}\sum_{j=1}^m X_j$.  By taking expectations, using linearity, and the fact that r.v.'s ${i_j}$'s are taken uniformly at random from the set $\left\{1,\dots,n\right\}$ we have
	\[
	\mathbb{E}[\bar{s}(\clust)] =  \frac{1}{m}\sum_{j=1}^m \mathbb{E}\left[X_j\right] = \frac{1}{m}\sum_{j=1}^m \sum_{i=1}^n \sElEst{e_i}\frac{1}{n} = \frac{1}{n}\sum_{i=1}^n \sElEst{e_i} = \widehat{s}_2(\clust).
	\]
\end{proof}

The following result provides the guarantees on the approximation of $\widehat{s}_2(\mathcal{C})$ provided by its estimator $\bar{s}(\mathcal{C})$.
\begin{lemma}\label{lemma:sshybr}
	Fix $(\varepsilon_2,\delta_2)\in(0,1)^2$ then if $m\geq \frac{2}{\varepsilon^2_2} \ln\left(\frac{2}{\delta_2}\right)$ then $\mathbb{P}[|\widehat{s}_2(\mathcal{C}) - \bar{s}(\mathcal{S})| \ge \varepsilon_2] \le \delta_2$.
\end{lemma}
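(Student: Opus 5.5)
The plan is to condition on the outcome of Phase~1 of \Cref{code:algorithm} and then apply Hoeffding's bound (\Cref{th:Hoeffding}) to the second-stage uniform sampling, exactly as in the proof of \Cref{th:allsilh_alg1}.

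\textbf{Step 1 (conditioning).} Fix an arbitrary realization of the samples $\F_{\C_j}$, $j\in[k]$, produced in Phase~1. As observed above, this fixes every value $\sElEst{e_i}$, $i\in[n]$, and therefore also $\widehat{s}_2(\clust)$. After this conditioning, the only remaining randomness comes from the indices $i_1,\dots,i_m$. These are drawn uniformly and independently from $\{1,\dots,n\}$ (with replacement), and independently of Phase~1.

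\textbf{Step 2 (bounded i.i.d.\ variables).} Let $X_j=\sElEst{e_{i_j}}$, as in the proof of \Cref{lemma:unbiasedness}. Conditionally on Phase~1, the $X_j$ are i.i.d.\ and $\bar{s}(\clust)=\frac1m\sum_j X_j$. By \Cref{lemma:unbiasedness}, applied conditionally, $\mathbb{E}[X_j]=\widehat{s}_2(\clust)$.

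The one point needing care is the range of each $X_j$. We claim $\sElEst{e}\in[-1,1]$ deterministically. The estimates $\aElEst{e}$ and $\bElEst{e}$ are nonnegative, since they are sums of nonnegative terms $d(e,e')/p_{e'}$. Hence $|\bElEst{e}-\aElEst{e}|\le\max\{\aElEst{e},\bElEst{e}\}$, which gives the ratio in~\Cref{eq:localSilhEst} absolute value at most $1$. The degenerate case where the denominator vanishes (for example a singleton cluster) gets value $0$ by the same convention as for $\sEl{e}$. So each $X_j$ lies in an interval of length $2$.

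\textbf{Step 3 (Hoeffding and deconditioning).} Hoeffding's bound for $m$ independent variables with range length $2$ gives
\[
\mathbb{P}\bigl[|\bar{s}(\clust)-\widehat{s}_2(\clust)|\ge\varepsilon_2 \,\big|\, \text{Phase 1}\bigr]\le 2\exp\!\left(-\frac{2m^2\varepsilon_2^2}{4m}\right)=2\exp\!\left(-\frac{m\varepsilon_2^2}{2}\right).
\]
This is at most $\delta_2$ whenever $m\ge \frac{2}{\varepsilon_2^2}\ln\frac{2}{\delta_2}$. The bound is uniform over all Phase~1 outcomes, so taking expectation over Phase~1 (law of total probability) yields the unconditional statement.

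The main obstacle is only to make the conditioning rigorous. One must argue that the $i_j$ are independent of the Phase~1 samples, so that $\widehat{s}_2(\clust)$ is a constant once we condition. The mean and boundedness then follow from \Cref{lemma:unbiasedness} and Step~2.
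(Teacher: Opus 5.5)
Your proposal is correct and follows the paper's proof. Both bound $X_j=\sElEst{e_{i_j}}$ in $[-1,1]$, take the mean $\widehat{s}_2(\mathcal{C})$ from \Cref{lemma:unbiasedness}, and apply Hoeffding's bound to obtain $2\exp(-m\varepsilon_2^2/2)\le\delta_2$. Your explicit conditioning on Phase~1, and your justification that $\sElEst{e}\in[-1,1]$ (nonnegativity of $\aElEst{e}$ and $\bElEst{e}$, plus a convention for a vanishing denominator), make rigorous what the paper leaves implicit.
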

\begin{proof}
	First recall variables $X_j, j\in[1,m]$ are bounded in the range $[-1,1]$. Then combining the proof of~\Cref{lemma:unbiasedness} showing that for each $j\in[m]$, $\Exp[X_j] = \widehat{s}(\mathcal{C})$ and Hoeffding's bound (\Cref{th:Hoeffding}) we have that 
	\[
	\mathbb{P}\left[\left|\bar{s}(\mathcal{C}) - \widehat{s}_2(\mathcal{C}) \right|\ge \varepsilon_2 \right] \le 2 \exp\left( -\frac{2m\varepsilon_2^2}{4}\right) \le \delta_2 \enspace ,
	\]
	where the last step is by the choice of $m$ as in statement.
\end{proof}

We now prove the resulting error obtained from approximating the silhouette  $s(\mathcal{C})$ with $\bar{s}(\mathcal{C})$. 
\begin{lemma}
	Fix $(\delta_1,\delta_2)\in(0,1)^2$ then $|s(\mathcal{C}) - \bar{s}(\mathcal{C})|\leq \varepsilon_1+\varepsilon_2$ with probability at least $(1-\delta_1)(1-\delta_2)$, where $\varepsilon_1$ is the bound on the error obtained in \Cref{th:allsilh} on $\widehat{s}_2(\mathcal{C})$ and $s(\mathcal{C})$ w.p.\ $1-\delta_1$, and $\varepsilon_2$ is a bound on the error between $\bar{s}(\mathcal{C})$ and $\widehat{s}(\mathcal{C})$ obtained with Hoeffding's inequality as in~\Cref{lemma:unbiasedness} w.p.\ at least $1-\delta_2$.
\end{lemma}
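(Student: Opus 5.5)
The plan is to apply the triangle inequality to the decomposition
\[
|s(\mathcal{C}) - \bar{s}(\mathcal{C})| \le |s(\mathcal{C}) - \widehat{s}_2(\mathcal{C})| + |\widehat{s}_2(\mathcal{C}) - \bar{s}(\mathcal{C})|,
\]
and to bound the two terms separately. Let $A$ be the event that $|s(\mathcal{C}) - \widehat{s}_2(\mathcal{C})| \le \varepsilon_1$, and let $B$ be the event that $|\widehat{s}_2(\mathcal{C}) - \bar{s}(\mathcal{C})| \le \varepsilon_2$. On $A \cap B$ the claimed bound $\varepsilon_1+\varepsilon_2$ follows at once. It therefore suffices to show that $\mathbb{P}[A\cap B] \ge (1-\delta_1)(1-\delta_2)$.

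To do this, I will use the two-stage structure of the randomness. Event $A$ depends only on the randomness of Phase 1 of \algLocal, namely the samples $\F^0_{\C_j}$ and $\F_{\C_j}$. Once these samples are fixed, the whole set $\sElEst{V}$ and the value $\widehat{s}_2(\mathcal{C})$ are deterministic, as the paper's footnote observes. The indices $i_1,\dots,i_m$ are drawn uniformly and independently of Phase 1. Hence $\mathbb{P}[A\cap B] = \mathbb{E}\big[\mathbf{1}_A\,\mathbb{P}[B \mid \text{Phase 1}]\big]$.

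The key step is to condition on an arbitrary realization of Phase 1. Conditionally, $\widehat{s}_2(\mathcal{C})$ is a fixed number, and the variables $X_j=\sElEst{e_{i_j}}$ are i.i.d. with values in $[-1,1]$ (each $\sElEst{e}$ is a ratio of the form $(\hat b-\hat a)/\max\{\hat a,\hat b\}$ with $\hat a,\hat b\ge 0$). By the proof of \Cref{lemma:unbiasedness}, their conditional mean is $\widehat{s}_2(\mathcal{C})$. So \Cref{lemma:sshybr}, applied conditionally with $m \ge \frac{2}{\varepsilon_2^2}\ln(2/\delta_2)$, gives $\mathbb{P}[B\mid\text{Phase 1}]\ge 1-\delta_2$ uniformly over all Phase 1 outcomes. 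Then $\mathbb{P}[A\cap B]\ge (1-\delta_2)\mathbb{P}[A]\ge(1-\delta_1)(1-\delta_2)$, where $\mathbb{P}[A]\ge1-\delta_1$ is \Cref{th:allsilh} run with parameters $(\varepsilon,\delta_1)$ and $\varepsilon_1=4\varepsilon/(1-\varepsilon)$.

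The main obstacle is only a matter of care. \Cref{lemma:sshybr} is stated unconditionally, so I must argue that its Hoeffding argument holds conditionally on Phase 1, treating $\widehat{s}_2(\mathcal{C})$ as a constant. This is valid because the sub-sampling indices are independent of Phase 1 and the bound does not depend on the realized values. With that in place, the product form of the probability follows directly, and a plain union bound would also give the weaker guarantee $1-\delta_1-\delta_2$.
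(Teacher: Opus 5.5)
Your proof is correct. It uses the same triangle-inequality split as the paper; the difference lies in how the two probabilities are combined.

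The paper simply asserts that the events $\{|s(\mathcal{C})-\widehat{s}_2(\mathcal{C})|\le\varepsilon_1\}$ and $\{|\widehat{s}_2(\mathcal{C})-\bar{s}(\mathcal{C})|\le\varepsilon_2\}$ are independent, and multiplies their probabilities. Strictly speaking, these events are not independent in general. The second one depends on the Phase~1 outcome through the values $\widehat{s}(e)$, both their spread and $\widehat{s}_2(\mathcal{C})$ itself. So its conditional probability varies with the Phase~1 realization, and that variation can be correlated with whether the first event occurs.

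What is genuinely independent of Phase~1 is the sub-sampling randomness, and your argument uses exactly that. You condition on Phase~1 and apply Hoeffding with $\widehat{s}_2(\mathcal{C})$ treated as a constant. You then observe that the bound $1-\delta_2$ holds uniformly over all Phase~1 realizations, which gives $\mathbb{P}[A\cap B]=\mathbb{E}[\mathbf{1}_A\,\mathbb{P}[B\mid\text{Phase 1}]]\ge(1-\delta_2)\,\mathbb{P}[A]$.

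Your route is therefore the rigorous justification of the product form $(1-\delta_1)(1-\delta_2)$. The paper relies on this form when it sets $\delta_1=\delta/2$ and $\delta_2=\delta/(2-\delta)$ to reach confidence exactly $1-\delta$. The paper's version buys only brevity. Your closing remark is also right: a union bound gives the slightly weaker $1-\delta_1-\delta_2$ with no independence or conditioning considerations at all.
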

\begin{proof}
	Note that $|s(\mathcal{C}) - \widehat{s}_2(\mathcal{C})| \le \varepsilon_1$ and $|\widehat{s}_2(\mathcal{C}) -\bar{s}(\mathcal{C})| \le \varepsilon_2$ hold respectively w.p.\ $(1-\delta_1)$  and $(1-\delta_2)$, and that such events are independent. Therefore by the triangle inequality,
	\[
	|s(\mathcal{C}) - \bar{s}(\mathcal{C})| = |s(\mathcal{C}) \pm \widehat{s}_2(\mathcal{C})- \bar{s}(\mathcal{C})| \leq |s(\mathcal{C}) - \widehat{s}_2(\mathcal{C})| + |\widehat{s}_2(\mathcal{C}) -\bar{s}(\mathcal{C})| \leq \varepsilon_1+\varepsilon_2 \enspace,
	\] 
	and such event holds with probability at least $(1-\delta_1)(1-\delta_2)$.
\end{proof}

\begin{table}[t]
	\caption{Summary of the various estimators to approximate $\sEl{\clust}$, the silhouette of a clustering \clustC, as captured by~\Cref{probl:GlobEst}. We label with \emph{algorithm} the method used to evaluate each estimator. The sample size $m(\varepsilon_2,\delta_2)$ is detailed in~\Cref{lemma:sshybr}. 
	}
	\label{tab:estAlgs}
	\scalebox{0.87}{
		\begin{tabular}{lcccc}
			\toprule
			Algorithm & Estimator & Guarantees & Distance-Computations  & Probability\\
			\midrule
			\algGlobOne & $\globOne{\clust}$ & $|\globOne{\clust}-s(\clust)| \le \varepsilon $ & $\bigO\left(\frac{n}{\varepsilon^2}\log \left(\frac{1}{\delta}\right)\right)$ & $>1-\delta$\\
			\algGlobPPS & $\globTwo{\clust}$ & $|\globTwo{\clust}-s(\clust)| \le  \frac{4\varepsilon}{1-\varepsilon} $ & $\bigO\left(\frac{nk}{\varepsilon^2} \log\left(\frac{nk}{\delta}\right)
			\right)$ & $>1-\delta$\\
			\algGlobSub & $\globHyb{\clust}$ & $|\globHyb{\clust}-s(\clust)| \le  \frac{4\varepsilon_1}{1-\varepsilon_1} + \varepsilon_2 $ & $\bigO\left(
			\left(n+\frac{m(\varepsilon_2,\delta_2)k}{\varepsilon_1^2}\right)
			\left( 
			\log\left(\frac{nk}{\delta_1}\right) 
			\right) 
			\right)$ & $> (1-\delta_1)(1-\delta_2)$\\
			\bottomrule
		\end{tabular}
	}
\end{table}

It now remains to control $|s(\mathcal{C}) - \bar{s}(\mathcal{C})|\leq \varepsilon_1+\varepsilon_2$ to hold with probability at least $1-\delta$ within user defined accuracy and confidence $\varepsilon, \delta\in(0,1)$. Since the final error is bounded by $\varepsilon_1+\varepsilon_2$ we can set $ \varepsilon_2 = \varepsilon - \varepsilon_1 $. Noting that $\varepsilon_1 \le 4\varepsilon'/(1-\varepsilon')$ it is sufficient to set $\varepsilon'=\varepsilon/q, q> 4$. Finally to control the confidence we can set $\delta_1=\delta/2$ and $\delta_2 = \delta/(2-\delta)$, yielding to the desired confidence to be at least $1-\delta$---solving~\Cref{probl:GlobEst} as desired.

When performing Phase 1 of \algLocal for a fixed $t$, the time complexity to compute $\bar{s}(\mathcal{C})$ is at most $\bigO(n\log(nk/\delta)+mkt)$. That is the sum of the time complexity of Phase 1 (i.e., $\bigO(n\log(nk/\delta)$) and the time complexity of running Phase 2 of \algLocal only for the $m$ elements $e_{i_j}, j\in\left\{ 1,\dots,m\right\}$ (i.e., $\bigO(mkt)$).

We provide a summary of our sampling techniques designed to solve~\Cref{probl:GlobEst} in~\Cref{tab:estAlgs}.

\subsection{Distributed algorithms}
\label{sec:MR}

In this section, we present a distributed design of our silhouette approximation algorithms (i.e., both for local and global estimation) using the MapReduce framework \citep{DeanG08,LeskovecRU14,PietracaprinaPRSU12}. 
At the end of the section, we will briefly discuss how to simply adapt our design to the popular MPC model
\citep{KarloffSV10,BeameKS17,Im2023MPC}.  

\subsubsection{Computational model}\label{subsubsec:MRmodel} MapReduce adopts a functional programming model where an algorithm is specified as a sequence of \emph{rounds}. 
In each round:
\begin{compactenum}
	\item a multiset $X$ of key-value pairs is first transformed into a new multiset $X'$ of key-value pairs by applying a \emph{map function} independently to each pair (possibly in parallel);
	\item\label{item:MRstep2} the multiset $X'$ is then transformed into a multiset $Y$ of pairs by applying a \emph{reduce function} independently to each subset of key-value pairs of $X'$ sharing the same key (possibly in parallel);
	\item the multiset of key-value pairs $Y$ is then used as an input to the subsequent round, if any.
\end{compactenum}
Throughout the description of our algorithms, a key-value pair will be denoted as
$(\mbox{\it key} \; | \; \mbox{\it value}_1,\allowbreak \ldots, \mbox{\it value}_r)$, 
where $\langle \mbox{\it value}_1,\allowbreak \dots,\allowbreak \mbox{\it value}_r \rangle$ corresponds to an $r$-dimensional \emph{value}, for (a small) constant $r$. 
As customary in the MapReduce literature \cite{LeskovecRU14}, we use the term \emph{reducer} to refer to the application of a reduce function to a set of pairs with the same key, i.e., see step~(\ref{item:MRstep2}) in the above discussion.

When a MapReduce algorithm is executed in a distributed environment, the instances of the map (resp., reduce) functions of a round are assigned by the underlying system to the available computing nodes. 
That is, a MapReduce algorithm is agnostic to the actual number of computing nodes. 
Therefore, important performance indicators to be minimized for a MapReduce algorithm are: 
\begin{inparaenum}[$i$)]
	\item the number of rounds;
	\item the maximum amount $M_L$ of \emph{local memory} required by any instance of the map or reduce functions;
	\item the maximum amount $M_A$ of \emph{aggregate
		memory} required at \emph{any} round to store the input, output and intermediate data.
\end{inparaenum}

In our setting, 
consider a
$k$-clustering $\mathcal{C} = \{C_1, \dots, C_k\}$ of a dataset $V =
\{e_0, \dots e_{n-1}\}$ of $n$ elements, and assume that each element $e\in V$ has additional information $C(e)$ denoting the cluster to which $e\in V$ is assigned to over $\clustC$.  
Initially, we represent the input
clustering by the following set of key-value pairs:
\[
\{ (i | e_i) : 0 \leq i < n \}\enspace.
\]
Reasonably, we assume that the values $n$, $k$, and $t$ used in the MapReduce algorithm, and the size $|C_j|$ for each $j\in[k]$, are known globally.  
A key ingredient of our design is that in each
round, the elements of $V$, and the newly created intermediate data, 
will be partitioned into $w$ subsets of size $n/w$ each,
where $w \in [0,n-1]$ is a suitable design parameter. 
To simplify the description and the analysis of our methods, we assume that $w \in \BO{n^\alpha}$, for some (small) $\alpha <1/2$.
While our methods can be generalized to larger values of $w$ (i.e., those corresponding to $\alpha\geq 1/2$) with minimal modifications, we highlight that $w$ constitutes an upper bound to the maximum \emph{parallelism} exploitable by the algorithm. Thus, assuming that $w\in \BO{\sqrt{n}}$ 
does not constitute a limitation in practice, especially for modern datasets of massive size $n$.

\subsubsection{Approximating the silhouette of each element}\label{subsubsec:MRlocal} 

We now present how to adapt our algorithm \algLocal to the MapReduce framework.
Our method requires 3 rounds: the first two rounds obtain the sample $\PPSsample$, for every cluster $C_j$ and $j\in[k]$, while the last round computes the estimates $\sElEst{e}$ of $\sEl{e}$ of each element $e\in V$. 

\noindent
{\bf Round 1.}
\begin{compactitem}
	\item{\textbf{Map.}}
	Map each pair $(i | e_i)$ into the pair $(i \bmod w |
	e_i,0)$. In addition, if 
	$|C(e_i)| > t$ then, with probability $(2/|C(e_i)|)\ln (2k/\delta)$,
	select  $e_i$ to be part of the initial Poisson sample
	$F^{(0)}_{C(e_i)}$, producing $w$ additional pairs $(\ell | e_i,1)$,	with $0 \leq \ell < w$. 
	\item{\textbf{Reduce.}}
	For $0 \leq \ell < w$, the reducer associated with key $\ell$
	has a copy of all initial samples $F^{(0)}_{C_j}$ (represented by the pairs $(\ell | e_i,1)$) and a subset $V_{\ell}$ of $V$ (represented by the pairs $(\ell | e_i,0)$). 
	Observe that the sets $V_{\ell}, \ell\in[0,w)$ form a
	balanced partition of $V$ into $w$ subsets.
	Then, for each element $e_i \in F^{(0)}_{C(e_i)}$, the reducer computes the term $W_{i,\ell}$, defined as the sum
	of distances from $e_i$ to all elements of $V_{\ell} \cap C(e_i))$
	producing the pair $(\ell | e_i,W_{i,\ell},1)$.
	In addition, for each pair $(\ell | e_i,0)$
	produce the pair $(\ell | e_i,0,0)$.
\end{compactitem}

\noindent
{\bf Round 2.}
\begin{compactitem}
	\item{\textbf{Map.}} 
	\sloppy
	Map 
	each pair  $(\ell | e_i,W_{i,\ell},1)$ into
	the $w$ pairs $(0 | e_i,W_{i,\ell},1), (1 | e_i,W_{i,\ell}), \ldots (w-1 | e_i,W_{i,\ell},1)$, and each pair $(\ell | e_i,0,0)$ into itself.
	\item{\textbf{Reduce}.} 
	For $0 \leq \ell < w$, the reducer associated with key $\ell$
	has the following pairs: for every element $e_i$ belonging to the initial sample $F^{(0)}_{C(e_i)}$ of its cluster, $w$ pairs
	$(\ell | e_i,W_{i,\ell'},1)$, with $0 \leq \ell' < w$; and for every $e_i \in V_\ell$, a pair $(\ell | e_i,0,0)$.
	The reducer first computes $W_{C(e_i)}(e_i)$, for each 
	element $e_i \in F^{(0)}_{C(e_i)}$, by summing all $W_{i,\ell'}$'s.
	Next, for each $e_i \in V_{\ell}$ it produces the pair $(\ell | e_i,p(e_i))$, where:
	\begin{inparaenum}[1)]
		\item if $|C(e_i)| \leq t$, then $p(e_i)=1$; otherwise
		\item $p(e_i)$ is computed in terms of the $W_{C(e_i)}(e)$ with  
		$e \in \ensuremath{\F^0_{C(e_i)}}$, as specified in~\Cref{line:num8,line:num10} of~\Cref{code:algorithm}.
	\end{inparaenum}
\end{compactitem}

\noindent
{\bf Round 3.}
\begin{compactitem}
	\item{\textbf{Map.}} 
	Map each pair $(\ell | e_i,p(e_i))$ into the pair 
	$(\ell | e_i,p(e_i),0)$. 
	In addition, with probability $p(e_i)$
	select $e_i$ to be part of the Poisson sample
	$\PPSsample$ producing pairs $(\ell' | e_i,p(e_i),1)$,
	with $0 \leq \ell' < w$.
	\item{\textbf{Reduce.}} 
	For $0 \leq \ell < w$, the reducer associated with key $\ell$
	has a copy of all samples $\PPSsample$ (represented by the pairs $(\ell | e_i,p(e_i),1)$) and the subset $V_{\ell}$ (represented by the pairs $(\ell | e_i,p(e_i),0)$). Then, for each $e_i \in V_{\ell}$,
	the reducer computes the estimate 
	$\widehat{s}(e_i)$, and generates the pair $(0 | e_i,\widehat{s}(e_i))$ to be reported in output.
\end{compactitem}

\sloppy
Recall that \Cref{lem:samples} states that with probability at least   $1 -2\delta/5$, for every cluster $\C_j$ with $|C_j| > t$,  $|F^0_{C_j}|  =  \BO{\ln(k/\delta)}$
and $|F_{C_j}| = \BO{\varepsilon^{-2}\ln(nk/\delta)}$. 
Then, since $t= \BO{\varepsilon^{-2}\ln(nk/\delta)}$, it  immediately follows that, with probability at least $1-2\delta/5$ the maximum local memory required by any map or reduce phase in any of the above rounds is bounded by
$M_L = \BO{n/w + w + k\varepsilon^{-2}\ln(nk/\delta)} = \BO{n/w + k\varepsilon^{-2}\ln(nk/\delta)}$ 
and the aggregate memory is bounded by $M_A = \BO{wM_L}$. 

\subsubsection{Approximating the silhouette of a clustering}\label{subsubsec:MRglobal}

In Section~\ref{subsec:globalSilh}, we introduced three
methods for computing an approximation
$\widehat{s}(\clustC)$ to the silhouette of the entire
clustering $\clustC$, see~\Cref{tab:estAlgs}.
The two estimators $\widehat{s}_2(\clustC)$ 
(see \Cref{eq:est2})
and $\bar{s}(\clustC)$
(see \Cref{eq:est3}), can be computed through simple aggregations of all, or an $m$-sample, of the individual silhouettes estimates $\sElEst{e}$ for $e\in V$. 
These two estimators can be computed in MapReduce through a  straightforward adaptation of the 3-round algorithm described above,
combining, in the reduce phase of the third round, the individual silhouette estimates (or the sample ones) within each of the $w$ partitions, and then combining the resulting values in an extra round. Overall, this 4-round approach would use the same local and aggregate space bounds obtained in~\Cref{subsubsec:MRlocal}.
Finally, concerning the sampling method that computes $\widehat{s}_1(\clustC)$ based on~\Cref{eq:est1}, it can be simply implemented through the MapReduce framework, we omit the details for brevity, since the resulting approach is analogous to the techniques already introduced.
Such final approach uses $\BO{1+(\log w / \log(n/w))} = \BO{1}$ rounds, local memory $M_L = \BO{n/w + k\varepsilon^{-2}\ln(1/\delta)}$ and aggregate memory $M_A=\BO{wM_L}$, 

\subsubsection{Porting MapReduce algorithms to the MPC framework}\label{subsubsec:MPCPorting}

The MPC framework models a parallel system with $p$  machines, each equipped with a local memory of size $s$.
Similar to MapReduce, the computation is structured as a sequence of synchronous rounds. 
In a round, each machine
\begin{inparaenum}[(1)]
	\item performs an arbitrary amount of computation on its local data in memory;
	\item exchanges point-to-point messages with other machines. 
\end{inparaenum}
The MPC framework allows \emph{arbitrary} communication patterns, subject to the only constraint that each machine may send or receive \emph{at most $s$} point-to-point messages in each round, with messages sent in a round processed by their recipients in the next round. 
Similarly to MapReduce, the ultimate goal is to design MPC algorithms that run in as few rounds as possible, using local memory (heavily) sublinear in the aggregate memory size. 
However, MPC and MapReduce differ in the way they express parallelism: 
\begin{inparaenum}[$i$)]
	\item a MapReduce computation is purely functional and exhibits, at each round, a degree of (virtual) parallelism equal to the number of instances of the map and reduce functions active in the round; in contrast
	\item an MPC computation has fixed parallelism $p$ throughout all rounds. 
\end{inparaenum}

It is simple to observe that the MapReduce algorithms described above (for both local and global silhouette estimation) do not rely on the dynamic parallelism offered by the MapReduce model. 
That is, in every round, the algorithms always partition the data into $w$ disjoint subsets, and the computation essentially involves: 
\begin{inparaenum}[$i$)]
	\item local operations within each partition;
	\item broadcast of individual elements to all partitions;
	\item summing $w$ contributions, one per partition.
\end{inparaenum}
As a consequence, our algorithms can be immediately adapted to the MPC framework by associating one machine per partition, setting $p=w$ and $s=M_L$. 
Finally, since in our setting  $M_L=\BOM{n^{1/2}}$, all the required broadcast and aggregation operations can be performed in constant rounds using standard MPC techniques~\citep{CzumajGGJ25}. 
Thus, the overall number of rounds remains constant. 

The following theorem, whose proof is a direct consequence of all the arguments developed in this section, provides a quantitative characterization of the performance of our distributed strategies.
\begin{theorem}\label{th:ParallelGuarantees}
	There exist MapReduce and MPC and algorithms providing distributed implementations of~\algLocal, for local silhouette estimation, and of all our algorithms for global silhouette estimation from~\Cref{tab:estAlgs}. For fixed  $\varepsilon,\delta\in(0,1)$, $w=\BT{n^{\alpha}}$, for $\alpha \leq 1/2$, and $k = \BO{n^{1-\alpha}/\log^2(n)}$, our distributed algorithms require a constant number of rounds, sublinear local memory $M_L=\BT{n^{1-\alpha}}$, and  linear aggregate memory, with probability at least $1-2\delta/5$.
\end{theorem}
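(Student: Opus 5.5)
The plan is to assemble the theorem from the round-by-round accounting in \Cref{subsubsec:MRlocal,subsubsec:MRglobal,subsubsec:MPCPorting}. The only probabilistic ingredient is \Cref{lem:samples}: with probability at least $1-2\delta/5$, every initial sample satisfies $|F^0_{C_j}|=\BO{\ln(k/\delta)}$ and every final sample satisfies $|F_{C_j}|=\BO{\varepsilon^{-2}\ln(nk/\delta)}$. All later steps are deterministic once we condition on this event. This explains why the failure probability in the statement is $2\delta/5$ rather than $\delta$; the accuracy guarantees are a separate matter, handled by \Cref{theo:localAlg} and \Cref{tab:estAlgs}.

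\textbf{Rounds.} For \algLocal, the three rounds described above suffice. For \algGlobPPS and \algGlobSub, we add one aggregation round that sums $w$ partial sums, for four rounds in total. For \algGlobOne, we first draw the $m=\BO{\varepsilon^{-2}\ln(1/\delta)}$ sample indices and broadcast the sampled elements to all $w$ partitions. Each partition then computes, for every sampled element, its partial cluster sums, and we aggregate these sums. Aggregating $w$ values per key takes $\BO{1+\log w/\log(n/w)}$ rounds, which is $\BO{1}$ because $w=\BT{n^{\alpha}}$ with $\alpha\le1/2$ gives $\log w\le\log(n/w)$.

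\textbf{Memory.} Fix $\varepsilon,\delta$ as constants. A reducer holds its partition $V_\ell$ of size $n/w=\BT{n^{1-\alpha}}$. It also holds at most $w$ partial sums per initially sampled element. Finally, it holds copies of all samples, whose total size is $\BO{k\ln(k/\delta)+k\varepsilon^{-2}\ln(nk/\delta)}=\BO{k\log n}$. The total is therefore
\[
M_L=\BO{n/w+w\,k\log n+k\log n}.
\]
Here lies the main obstacle: the middle term, coming from the $w$ copies of $W_{i,\ell}$ for each $e_i\in F^0$, must be $\BO{n^{1-\alpha}}$. Using $w=\BT{n^\alpha}$ and $k=\BO{n^{1-\alpha}/\log^2 n}$, we get
\[
wk\log n=\BO{n/\log n}.
\]
This is not automatically $\BO{n^{1-\alpha}}$, so I would first rework Round 2. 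Instead of replicating the $w$ partial sums to every reducer, I would aggregate them per sampled element $e_i$ in an extra round keyed by $i$, and only then broadcast the single total $W_{C(e_i)}(e_i)$. This keeps the round count constant. Each aggregating reducer then holds $w\le n^{1-\alpha}$ values, and each partition receives only $\BO{k\log n}=\BO{n^{1-\alpha}/\log n}$ totals. With this change the samples term is $\BO{n^{1-\alpha}}$, which gives $M_L=\BT{n^{1-\alpha}}$; the lower bound is immediate since every reducer stores $n/w$ elements. For \algGlobOne, the broadcast of the $m=\BO{1}$ sampled elements and their $k$ partial sums fits within the same bound.

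\textbf{Aggregate memory and MPC.} The aggregate memory is $M_A=\BO{wM_L}=\BO{n}$, i.e., linear. For MPC, we assign one machine per partition, setting $p=w$ and $s=M_L$. Since $s=\BOM{n^{1/2}}\ge p$, each broadcast and each sum of $w$ values can be carried out in $\BO{1}$ rounds by the standard techniques of \citep{CzumajGGJ25}, which preserves the constant round count and the memory bounds. Combining the rounds and memory bounds above with the conditioning on the event of \Cref{lem:samples} yields the statement.
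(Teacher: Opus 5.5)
Your proof is correct. It shares the paper's skeleton: the paper just calls the theorem a direct consequence of \Cref{subsubsec:MRlocal,subsubsec:MRglobal,subsubsec:MPCPorting}, and like it you condition on the event of \Cref{lem:samples}, count rounds, bound $M_L$, and port to MPC with $p=w$ and $s=M_L$. The real difference is Round 2.

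The paper keeps its three-round design. There, every Round-2 reducer receives $w$ partial sums $W_{i,\ell'}$ for each $e_i\in\bigcup_j F^0_{C_j}$, i.e.\ $\BT{wk\ln(k/\delta)}$ pairs in the worst case. The paper nevertheless states $M_L=\BO{n/w+w+k\varepsilon^{-2}\ln(nk/\delta)}$, which drops this product. As you observe, under the theorem's parameters this term can reach $\BT{n/\log n}\gg n^{1-\alpha}$, and the aggregate memory becomes superlinear, about $n^{1+\alpha}/\log n$. So the paper's design as written meets the claimed bounds only when $wk\ln(k/\delta)=\BO{n^{1-\alpha}}$, roughly $k=\BO{n^{1-2\alpha}/\log n}$.

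Your fix is an extra aggregation round keyed by $i$, followed by a broadcast of only the totals $W_{C(e_i)}(e_i)$. It costs one more round, still constant. In exchange, the paper's stated bound $\BO{n/w+w+k\varepsilon^{-2}\ln(nk/\delta)}$ actually holds, and with it $M_L=\BT{n^{1-\alpha}}$ and linear aggregate memory, over the full range $k=\BO{n^{1-\alpha}/\log^2 n}$. It also avoids the matching per-machine message overload in the MPC port.

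The rest of your accounting agrees with the paper: the constant-round treatment of \algGlobOne via per-(element, cluster) aggregation, and the use of $s=\Omega(n^{1/2})\ge p$ for constant-round broadcast and summation.
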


\section{Experimental evaluation}
\label{sec:exp}
In this section, we present the results of our extensive experimental evaluation. We start by describing the setup of our experiments (\Cref{subsec:setupExp}), and then we investigate the following issues.
\begin{compactitem}
	\item \textbf{\Qone.} Compare our new estimators (from~\Cref{subsec:globalSilh}) for the global silhouette $\sEl{\clust}$ with state-of-the-art baselines, studying the trade-offs between accuracy and efficiency exhibited by our estimators (\Cref{subsec:globalExps}).
	\item \textbf{\Qtwo.} Evaluate the accuracy of our algorithm~\algLocal for the estimation of all local silhouette values $\sEl{e}$, for each element $e\in V$ (\Cref{subsec:localEst}).
	\item \textbf{\Qthr.} Study the scalability of a distributed implementation of our algorithm~\algLocal (\Cref{subsec:parExps}).
	\item \textbf{\Qfou.} Illustrate applications of our new algorithms: \begin{inparaenum}
		\item our~\algLocal algorithm provides highly accurate \emph{local} estimates that can be used to visualize accurate silhouette plots;
		\item our \emph{global} estimates can be used for the accurate selection of a good value of the parameter $k$ of a $k$-clustering method
	\end{inparaenum} (\Cref{subsec:applicationsExps}).
\end{compactitem}

\subsection{Setup}\label{subsec:setupExp}

\subsubsection{Methods and baselines}\label{subsubsec:baselines}

We now introduce all methods considered in our experiments. We distinguish between methods for \emph{global} (\Cref{probl:GlobEst}) and \emph{local} estimation (\Cref{probl:LocalEst}) of silhouette values.

\textbf{Global estimates.} We compared our algorithms with two state-of-the-art baseline algorithms. In addition, we also considered a variant of our methods coupled with uniform sampling (see~\Cref{subsubsec:uniFails}).
The list of methods is as follows. 
\begin{itemize}
	\item 
	Algorithms \algGlobOne, \algGlobPPS, and \algGlobSub (from~\Cref{tab:estAlgs}), which compute the estimators in~\Cref{eq:est1,eq:est2,eq:est3}, respectively, using \PPS sampling to estimate individual silhouette values.
	\item 
	Two algorithms, dubbed \algGlobUNI and \algGlobUNISub, which compute the estimators in \Cref{eq:est2,eq:est3}, respectively, using uniform sampling (rather than \PPS sampling) to estimate all individual silhouettes (see~\Cref{subsubsec:uniFails}). 
	\item ``\fsbl'' by~\citet{FrahlingS08}, which, for each element $e\in V$ computes the term $\aEl{e}$ according to its definition (\Cref{eq:AandBterms}), while estimates the term $\bEl{e}$ as $1/|C_{\mu_e}| \sum_{e'\in C_{\mu_e}} d(e',e)$, where $C_{\mu_e}$ corresponds to the cluster with closest center (or centroid) to $e\in V$, different from the cluster $C$ of $e$. 
	Observe that the estimate of $\bEl{e}$ is simplified with respect to the one proposed in the original paper. 
	That is, we do not verify if the computed estimate is smaller than all the distances between $e$ and all centers (or centroids) of the clusters in $ \clustC\setminus\{ C_{\mu_e} \}$---to avoid performing an exact computation of $\bEl{e}$ when the check fails. 
	\item ``\simplbl'' by~\citet{Hruschka2004}, the simplified silhouette approach, which evaluates the silhouette $\sEl{e}$ of each element $e\in V$ using the distances of $e$ to the closest centers (or centroids) as proxies of the average distances. 
	Formally, let $\mu_1,\dots,\mu_k$ the centers of clusters $C_1,\dots,C_k \in \clust$ then for an element $e$ in some cluster $C$ the method \simplbl evaluates $\aEl{e} = d(e, \mu)$ where $\mu$ is the center of $C$, and $\bEl{e} = \min_{C_j\neq C} d(e,\mu_j)$. 
\end{itemize}
Note that both the methods \fsbl and \simplbl are deterministic, and do not offer quantitative
guarantees on their estimation error,
in contrast with our approaches from~\Cref{tab:estAlgs}, 
where the estimation error can be controlled as a function of the input parameters.

\textbf{Local estimates.} When solving \Cref{probl:LocalEst} we will compare our method \algLocal with the uniform-sampling based approach detailed in~\Cref{subsubsec:uniFails}, that we denote by \UNIbuck. That is,  we do not consider \fsbl that scales poorly on large data (our main focus for~\Cref{probl:LocalEst}), and \simplbl, which has very high estimation errors.

\begin{table}[t]
	\centering
	\caption{Datasets used in our experiments. For each dataset, the first three columns report its name and label, its size $n$, and the dimensionality $z$ of its elements (i.e., $e\in \mathbb{R}^z$). 
		Column ``exact'' indicates whether the $\Theta(n^2)$ exact algorithm could be run within the allotted time limit; 
		Column ``$k$-med'' indicates whether the dataset could be clustered with \kmed, under the distances listed in~\Cref{sec:data}); 
		and Column ``size'' labels the dataset as small (\texttt{S}), medium (\texttt{M}) or large (\texttt{L}), used in our discussion.
	}
	\label{tab:data}
	\scalebox{0.87}{
	\begin{tabular}{l r r c c c | l r r c c c}
		\toprule
		Name (label) & $n$ & $z$ & exact & $k$-med & size & Name (label) & $n$ & $z$ & exact & $k$-med & size\\
		\midrule
		Breast (\texttt{BR}) & 568 & 30 & \vmark & \vmark & \texttt{S} & BioKDD (\texttt{BioKDD}) & 146\,$K$ & 74 & \vmark & \xmark & \texttt{L} \\
		Wine (\texttt{WI}) & 6.5\,$K$ & 11 & \vmark & \vmark & \texttt{S} & RNA-seq (\texttt{RNA}) & 489\,$K$ & 8 & \xmark & \xmark & \texttt{L} \\
		Credit (\texttt{CR}) & 30\,$K$ & 23 & \vmark & \vmark & \texttt{M} & Metro (\texttt{MT}) & 1.5\,$M$ & 7 & \xmark & \xmark & \texttt{L} \\
		Shuttle (\texttt{SHU}) & 58\,$K$ & 9 & \vmark & \vmark & \texttt{M} & PowerHouse (\texttt{PH}) & 2.1\,$M$ & 7 & \xmark & \xmark & \texttt{L}\\
		IoT (\texttt{IOT}) & 123\,$K$ & 78 & \vmark & \vmark & \texttt{M} & Gowalla (\texttt{GOW}) & 6.4\,$M$ & 2 & \xmark & \xmark & \texttt{L} \\
		\bottomrule
	\end{tabular}
	}
\end{table}
\subsubsection{Datasets}\label{sec:data}
A summary of the datasets used in our experiments, and their statistics is reported in~\Cref{tab:data}. 
We provide more details of the processing of each dataset in~\Cref{appsec:data}.
We use small-size datasets (those marked with \texttt{S} in~\Cref{tab:data}) and \texttt{CR} to showcase applications of \Cref{probl:GlobEst,probl:LocalEst} (see~\Cref{subsec:applicationsExps}). 
As for medium-size datasets, they are mainly used in~\Cref{subsec:globalExps} and~\Cref{subsec:localEst}, and are processed as follows. 
We cluster each dataset using both the \kmeans and the \kmed objectives. 
For the \kmeans objective
we consider the Euclidean distance 
$\deucl(x,y) = \sqrt{\sum_{i} (x_i-y_i)^2}$. 
For the \kmed objective, in addition to $\deucl$, we also consider several widely used distance functions:
\begin{itemize}
	\item \dcos, the cosine distance $\dcos(x,y) = 1- \sum_{i} (x_i-y_i)/(\sqrt{\sum_i x_i^2} \sqrt{\sum_i y_i^2})$;
	\item \dmanh, the Manhattan distance $\dmanh(x,y) = \sum_{i} |x_i-y_i|$;
	\item \dcanb, the Canberra distance $\dcanb(x,y) = \sum_i |x_i-y_i|/(|x_i| + |y_i|)\enspace .$
\end{itemize}

For each clustering objective (either \kmeans or \kmed) and each distance, we use different values of the parameter $k\in \{2,5,10,15,20\}$ to obtain different clusterings $\clust$. 
Hence, each distinct configuration (dataset, value of $k$,  distance, objective) leads to a different clustering $\clust$, which we use as an input to either \Cref{probl:GlobEst} or \Cref{probl:LocalEst}.

Finally, for the large datasets from~\Cref{tab:data} we proceeded similarly to the medium-sized datasets, but we only clustered such data using the \kmeans objective to obtain different clusterings $\clust$ varying $k$, with $k\in\{2,5,10,15,20\}$. We do not use \kmed given its well-known poor scalability on large data. 

\subsubsection{Environment and parameters}
\label{subsubsec:impEnv}
We provide extensive details on the setup of the experiments in~\Cref{appsec:missingsetup}. The code to reproduce our results is available online.\footnote{\codeRepo.} 

For our experiments, we set a time limit of one-hour for the execution of each method on a given configuration. 
Unless otherwise specified, each algorithm was executed for 10 independent runs and the results show the average over the runs. 
For our approximate algorithms, we directly set the parameter $t$ controlling the expected sample size, 
rather than obtaining $t$ as a function of the other parameters (see~\Cref{code:algorithm}).  
When comparing the algorithms from~\Cref{tab:estAlgs} for~\Cref{probl:GlobEst}, and their \UNIbuck variant, we will set their parameters to have a comparable runtime: we fix $t$ for \algGlobPPS and \algGlobUNI and we set the sample size $m$ of \algGlobOne to $kt$. Unless otherwise stated we set $\delta=0.01$.

\subsubsection{Exact silhouette scores}
\label{subsubsec:exscores}

For the small and medium sized datasets from~\Cref{tab:data} we were able to compute the exact silhouette scores on all the various configurations described in~\Cref{sec:data} within one hour of computation. 
In contrast on large data, we were not able to obtain the exact silhouette scores of any of the multiple configurations within one-hour time limit.
To obtain a good proxy for the exact values of $\sEl{e}$ for each element $e\in V$, we then proceed as follows. 
We perform five independent executions of \algLocal and its uniform variant (\UNIbuck) from~\Cref{subsubsec:uniFails}, obtaining respectively $\sElEst{e}_i$ and $\sElEstUni{e}_{i}$ with $i\in\{1,\dots, 5\}$ with a high sample size of $t=800$. We then estimate $\sEl{e} = (1/10) \sum_i {(\sElEst{e}_i + \sElEstUni{e}_i)} $ to reduce the bias in the final result. 

\begin{figure}[t]
	\centering
	\captionsetup[subfigure]{labelformat=empty}
	\includegraphics[width=0.9\textwidth]{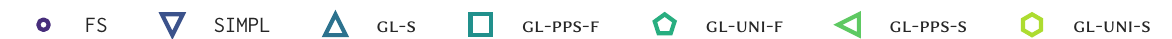}\\
	\begin{tabular}{lccc}
		\includegraphics[width=0.23\columnwidth]{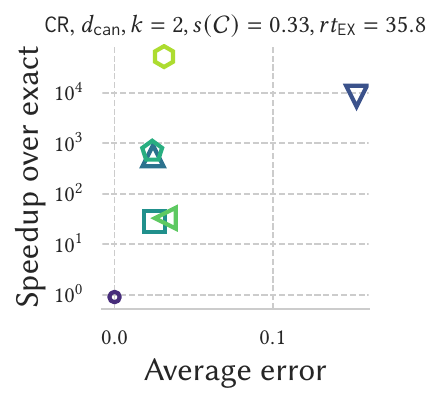} & 	\includegraphics[width=0.23\columnwidth]{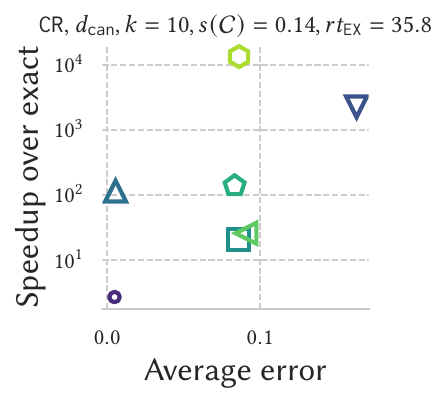}&
		\includegraphics[width=0.23\columnwidth]{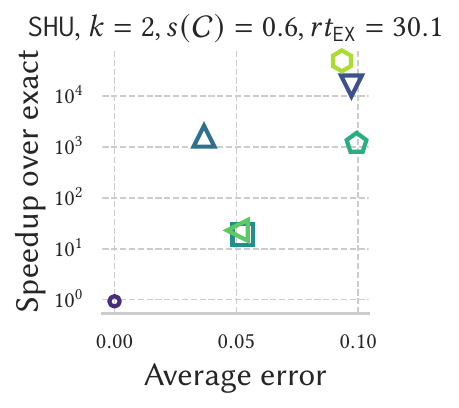} & 	\includegraphics[width=0.235\columnwidth]{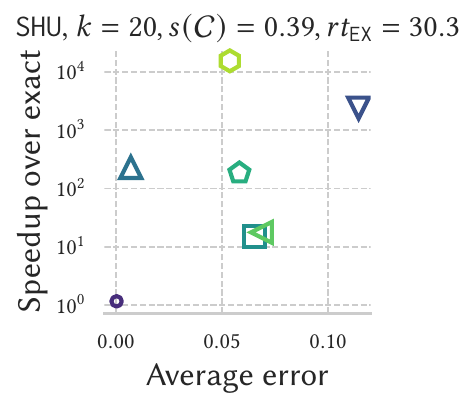}\\	\includegraphics[width=0.23\columnwidth]{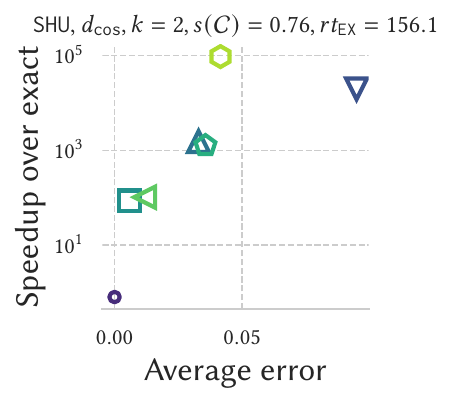}&
		\includegraphics[width=0.23\columnwidth]{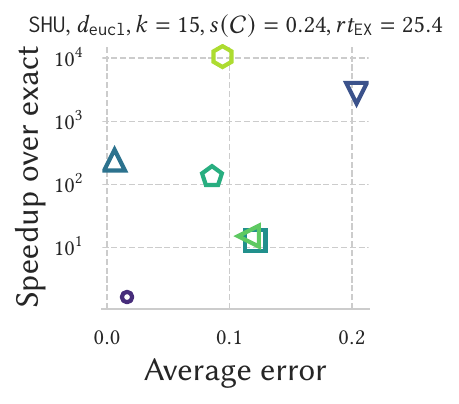} & 	\includegraphics[width=0.235\columnwidth]{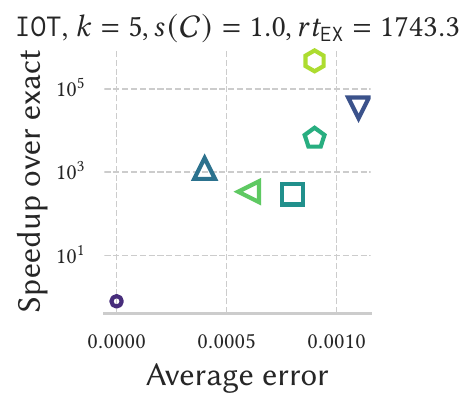} & \includegraphics[width=0.235\columnwidth]{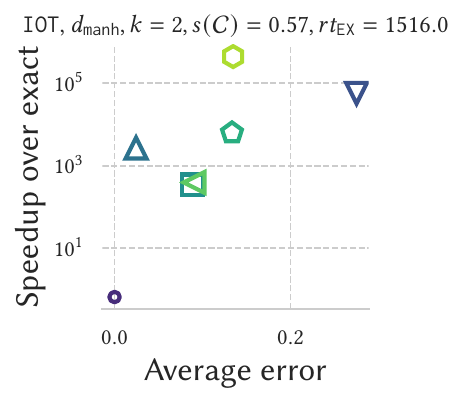} \\
		\includegraphics[width=0.235\columnwidth]{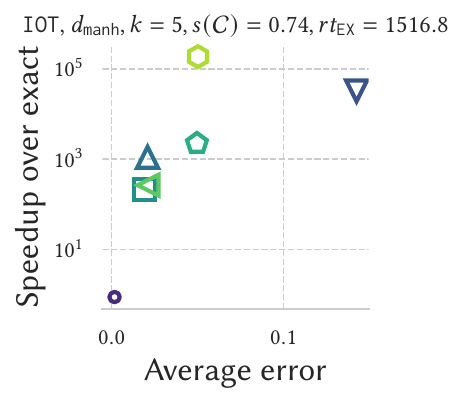}&
		\includegraphics[width=0.235\columnwidth]{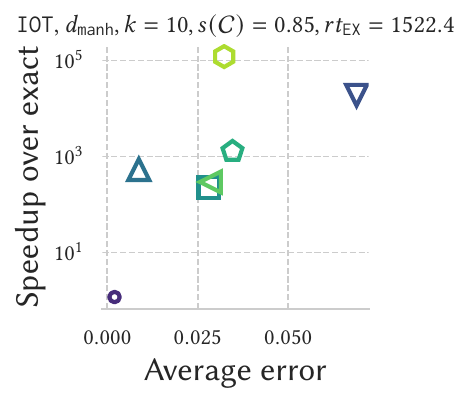} & 	\includegraphics[width=0.235\columnwidth]{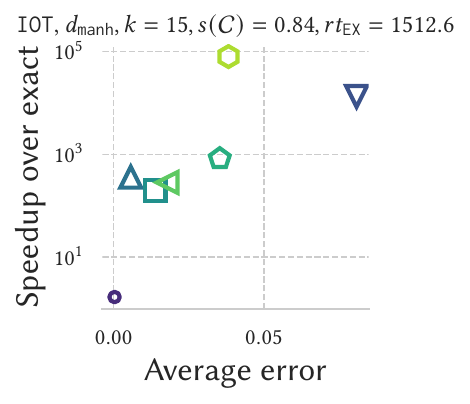} & \includegraphics[width=0.235\columnwidth]{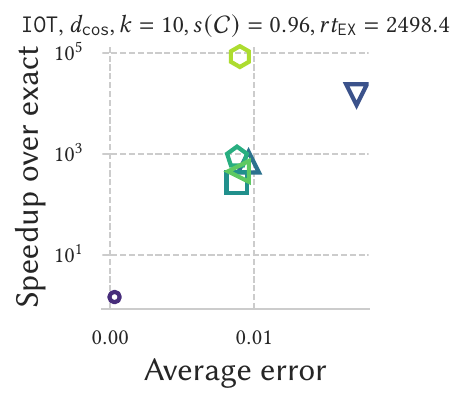} \\
	\end{tabular}
	\caption{Trade-off between accuracy and efficiency over medium sized datasets (see~\Cref{tab:data}). 
		The $x$ axis is associated with the average error over 10 independent runs, while the $y$ axis is associated with the average speed-up over the exact computation. For each plot that corresponding to a different configuration, we report: the dataset, the distance function (when \kmed was used to obtain the input clustering $\clustC$), the value $k$, the exact silhouette $\sEl{\clust}$, and the runtime of the exact algorithm $rt_{\text{EX}}$. 
	}
	\label{fig:tradeoffMedium}
\end{figure}

\subsection{Global silhouette approximation}\label{subsec:globalExps}

\begin{figure}[!tbp]
	\centering
	\captionsetup[subfigure]{labelformat=empty}
	\includegraphics[width=0.9\textwidth]{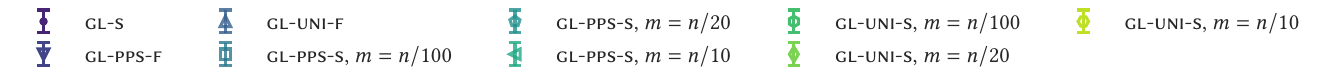}\\
	\begin{tabular}{ll}
		\includegraphics[width=0.49\textwidth]{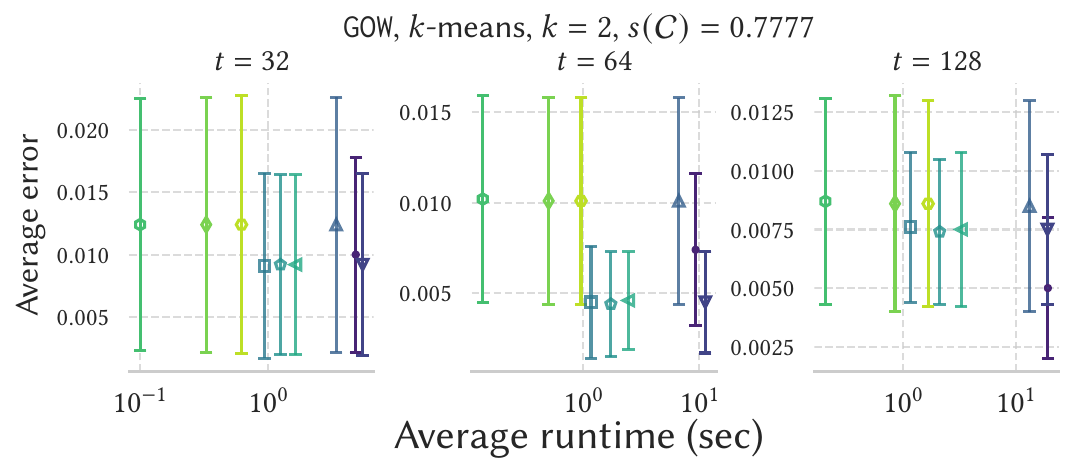} &
		\includegraphics[width=0.49\textwidth]{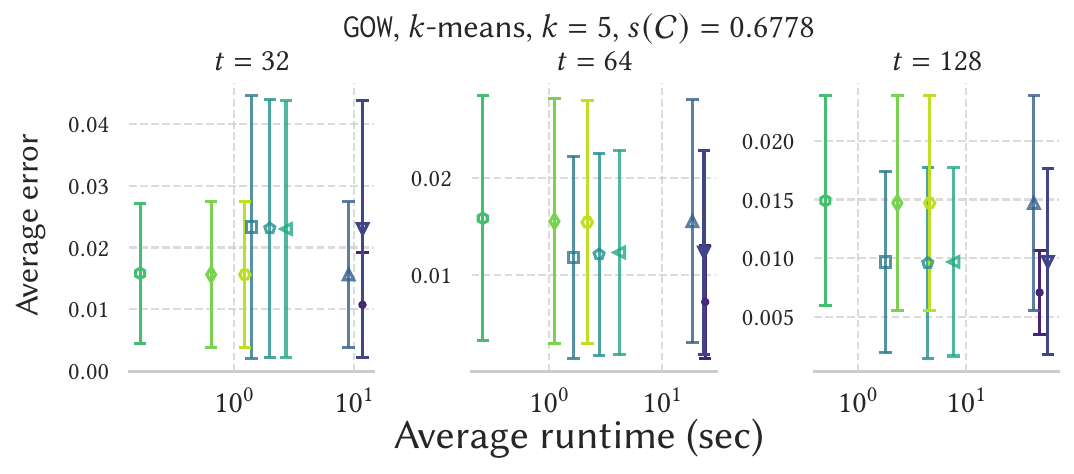}\\
		\includegraphics[width=0.49\textwidth]{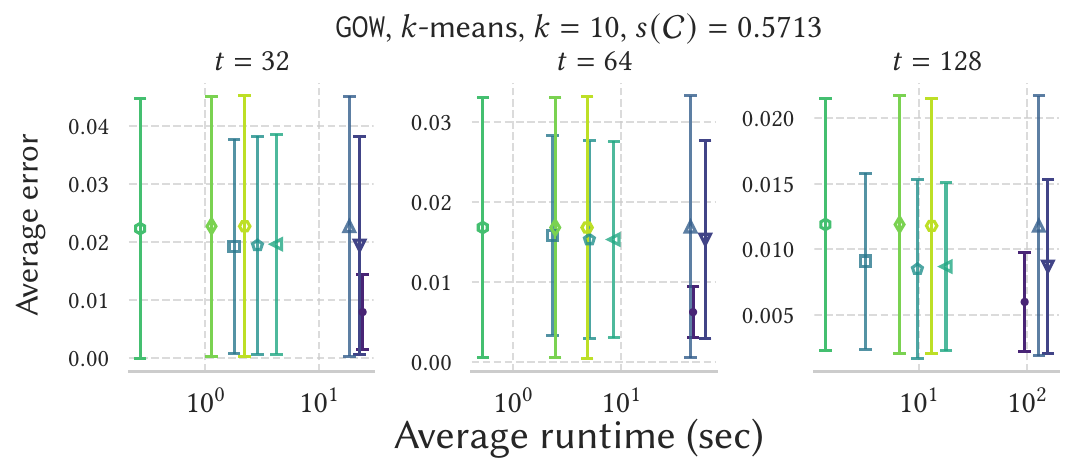} &
		\includegraphics[width=0.49\textwidth]{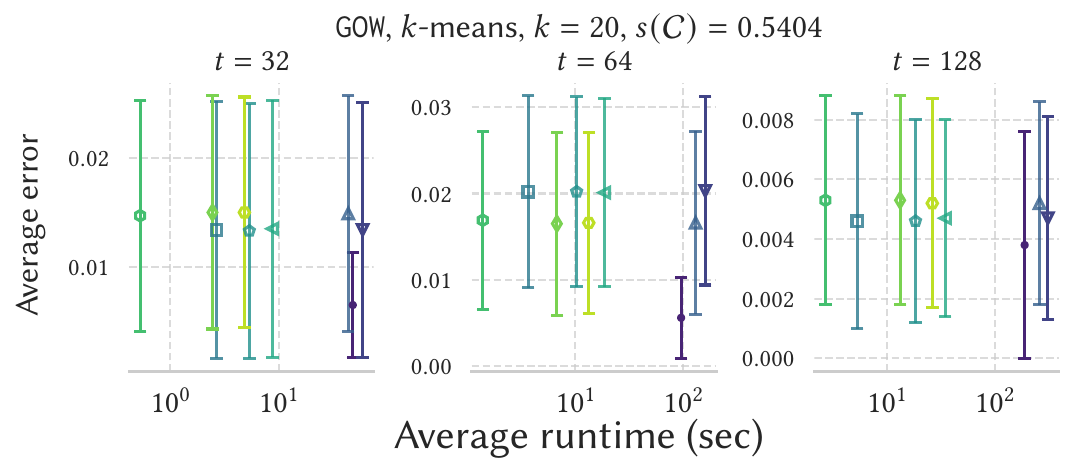}\\
		\includegraphics[width=0.49\textwidth]{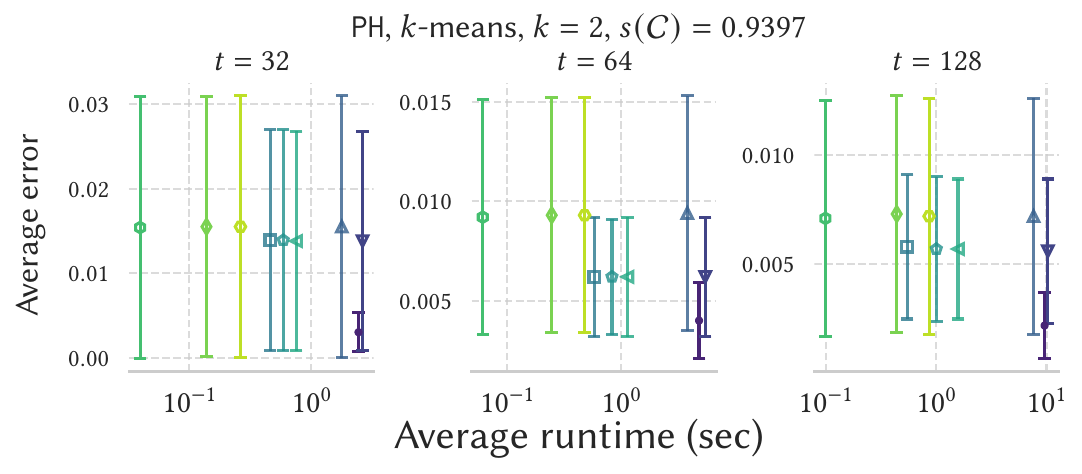} &
		\includegraphics[width=0.49\textwidth]{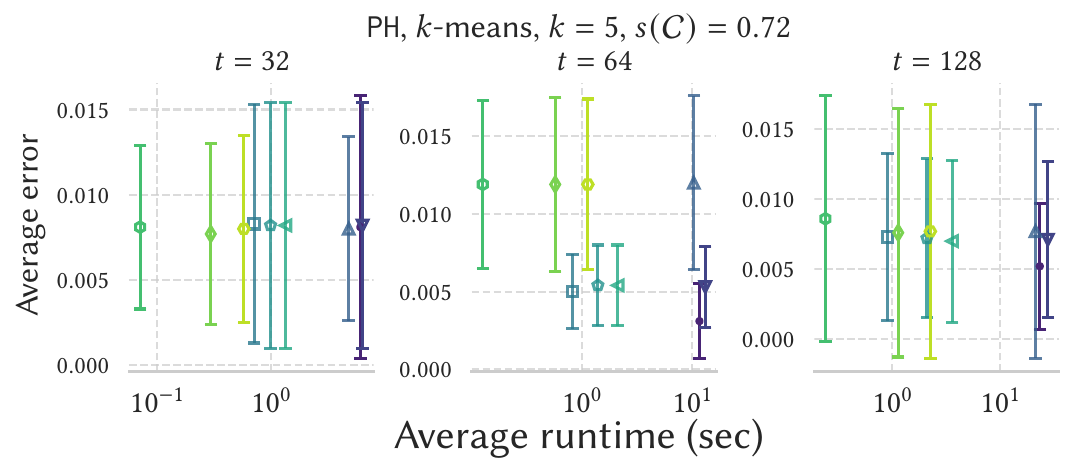}\\
		\includegraphics[width=0.49\textwidth]{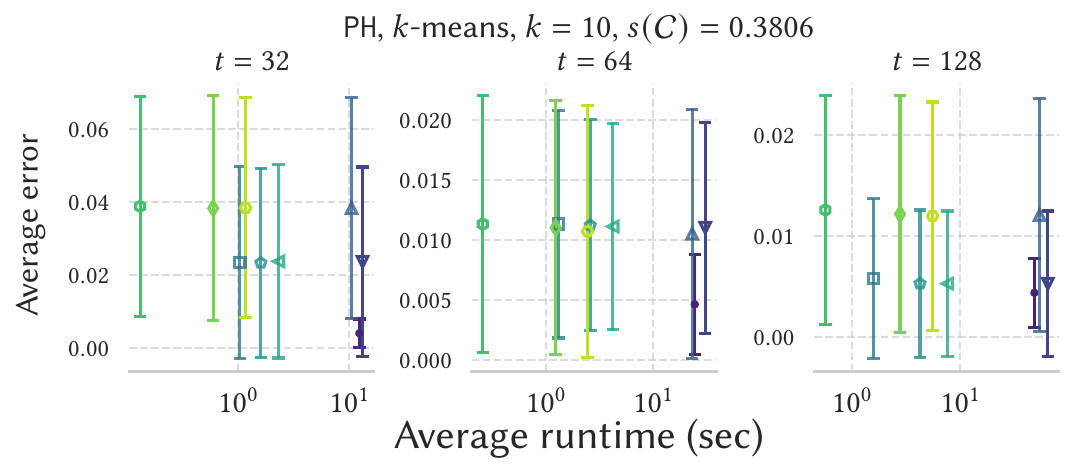} &
		\includegraphics[width=0.49\textwidth]{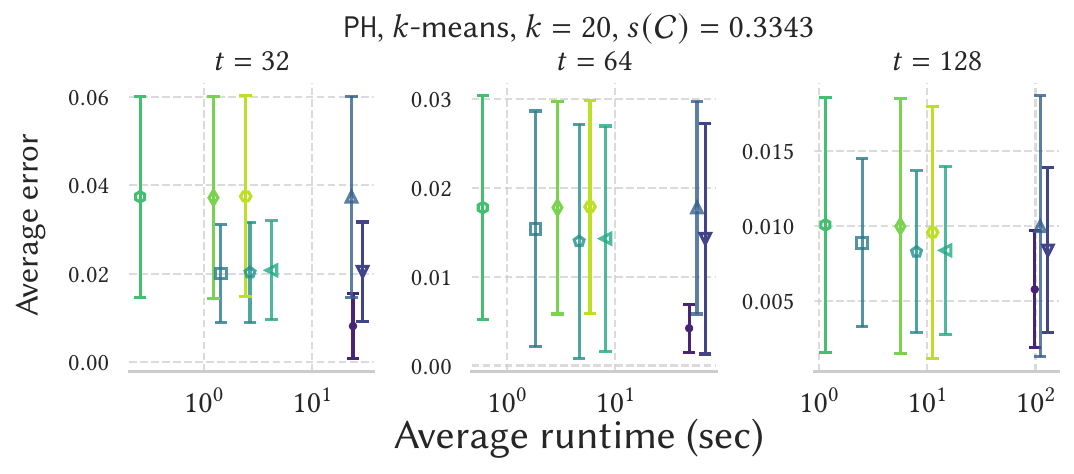}\\
	\end{tabular}
	\caption{Methods comparison. The $x$ axis (in $\log$ scale) is associated with the average runtime, and the $y$ axis with the average error (and its standard deviation) over 10 independent runs. For each clustered dataset we considered different values of the expected sample size $t\in \{32,64,128\}$.}
	\label{fig:globalEsts}
\end{figure}

In this section we evaluate the accuracy of the various methods and baselines described in~\Cref{subsubsec:baselines}. We compare all methods considering the average error $|\sElEst{\clust} - \sEl{\clust}| \in [0,2]$ over ten independent runs (see~\Cref{subsubsec:impEnv} for details), where $\sElEst{\clust}$ corresponds to the approximation provided by the method. 
In our discussion we also mention the \emph{relative} approximation error that corresponds to $|\sElEst{\clust} - \sEl{\clust}|/\sEl{\clust}$. 
For medium-sized datasets we set the parameters of the various methods (see~\Cref{tab:paramsRecap}) as: $t=16$ for both \algGlobPPS and \algGlobUNI, $m=\ceil{n/20}, t=16$ for \algGlobSub and \algGlobUNISub, and the sample size of \algGlobOne to $m'=tk$ depending on $\clust$ (see~\Cref{subsubsec:impEnv}). 
Our choice of a small value of $t$ is motivated by the size of our datasets, while the choice of $m$ enforces the computation of the estimator in~\Cref{eq:est3} on a strictly sublinear number of elements in $n$.

\subsubsection{Medium datasets}\label{subsubsec:mediumData}

We show some representative results in~\Cref{fig:tradeoffMedium}, where we report the trade-off between the estimation error, and the speed-up achieved over the exact computation of the silhouette coefficient by the various methods over ten runs.

We start by observing that the~\fsbl method has a very high runtime almost matching the exact algorithm. 
The only configuration where the speedup is more than $3\times$, with respect to the exact approach, is the one relative to the~\crdata dataset with \dcanb and $k=10$. 
Such behavior is not surprising since, for small $k$, it is likely that some clusters have roughly linear size, hence \fsbl will need a quadratic number of distance computations to compute the exact $a(e)$'s. 
This makes the \fsbl method  
not practical on large or medium sized data, since it scales poorly. 
Interestingly, we note that even if the \fsbl method is a heuristic approach (not offering guarantees), 
it often produces estimates with the smallest error among all methods evaluated. 
This is due to the fact that, for our datasets and configurations, the cluster minimizing the term $\bEl{e}, e\in V$ often corresponds to the cluster with the closest center in $\clust$ to $e$.

Next, we note that the~\simplbl method is often the fastest approach among all methods we compared (except on the \shudata data clustered with \kmeans and $k=2$). 
Unfortunately, \simplbl is also the method with the largest approximation errors, which cannot be controlled by any parameter. 
For example, on the \shudata data clustered with \kmed and distance $\deucl$ the exact silhouette is $\sEl{\clust}=0.24$, while \simplbl outputs an estimate with an absolute error of 0.2, implying a relative error of more than 80\%. 
Similarly, for the \crdata data clustered with \kmed, \dcanb and $k=10$, where the relative approximation error of \simplbl is more than 105\%. 
Such large errors hinder the use of the \simplbl method in practice.

Concerning the randomized sampling algorithms we observe that~\algGlobOne often achieves the smallest estimation error compared to~\algGlobPPS and \algGlobUNI--under our setting of parameters all three algorithms perform the same number of distance computations in expectation.
Interestingly, for some inputs such as \crdata (with \dcanb and $k=2$), \shudata (with \dcos and $k=2$), or \iotdata (with \dcos, $k=10$), our new~\algGlobPPS and \algGlobSub methods achieve small approximation errors, better than~\algGlobOne, while being slightly less efficient.
Comparing~\algGlobPPS and~\algGlobUNI, the former often achieves smaller approximation errors with respect to the latter (e.g., on the \shudata data with \kmeans and  $k=2$, or on the \iotdata data on most configurations).
Such a higher accuracy, comes at the expense of a higher runtime. 
Interestingly, we observe significant differences in runtime only on medium-sized data. 
This is probably caused by the fact that Phase 1 of \algGlobPPS has a non-negligible runtime compared to Phase 2 on medium-sized data such as \iotdata. 
Finally, we observe that both methods \algGlobSub and~\algGlobUNISub  retain the approximation guarantees offered by \algGlobPPS and~\algGlobUNI, respectively, while often improving slightly (for~\algGlobSub) or significantly (for~\algGlobUNISub) the runtime required to evaluate the estimates.

\subsubsection{Large datasets}\label{subsubsec:largeData}
For the experiments on large datasets, we clustered each dataset with \kmeans and distance $\deucl$. 
We did not consider baseline~\fsbl for its poor scalability and the~\simplbl method for its poor quality estimates (see~\Cref{subsubsec:mediumData}). 
In addition, we used highly accurate estimates of $\sEl{\clust}$ as ground truth, given the poor scalability of the exact computation (see~\Cref{subsubsec:exscores}). 

The results are reported in~\Cref{fig:globalEsts}. 
First, as expected, all algorithms exhibit increasing accuracy as the sample size $t$ grows. For example, on dataset~\gowdata with $k=5$, the error with $t=128$ for~\algGlobPPS is less than half the error obtained with $t=32$. 
Next, regarding the trade-off between runtime and accuracy of~\algGlobOne, \algGlobPPS and \algGlobUNI, which perform the same expected number of distance computations, we observe several interesting trends. 
In almost all cases \algGlobOne exhibited the highest accuracy and smallest variance, while \algGlobPPS provided more accurate estimates with (often significantly) smaller variance compared to \algGlobUNI.

For what concerns \algGlobSub and \algGlobUNISub, 
which compute the estimators of~\algGlobPPS and~\algGlobUNI respectively, using only a sample of size $m$ of the total elements $n$, we observe that a small sample ($m=n/20$) is often sufficient to retain the accuracy of \algGlobPPS and~\algGlobUNI. 
In addition, both \algGlobSub and~\algGlobUNISub achieve remarkable speedups of up to $10\times$ over \algGlobPPS, \algGlobUNI and \algGlobOne.
As in the case of medium datasets, the runtime of~\algGlobSub is slightly higher than the one of ~\algGlobUNISub given the overhead introduced by Phase 1 of \Cref{code:algorithm}.

\begin{figure}[t]
	\centering
	\captionsetup[subfigure]{labelformat=empty}
	\includegraphics[width=0.25\textwidth]{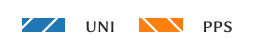}\\
	\begin{tabular}{lc}
		\includegraphics[width=0.45\textwidth]{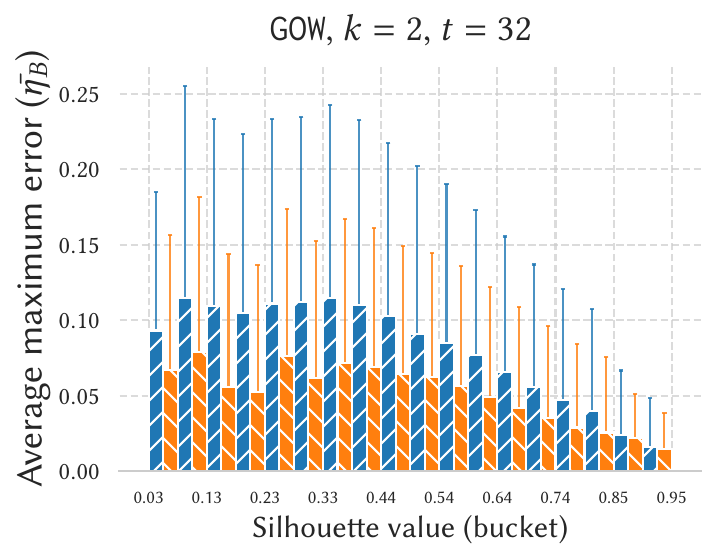} &
		\includegraphics[width=0.45\textwidth]{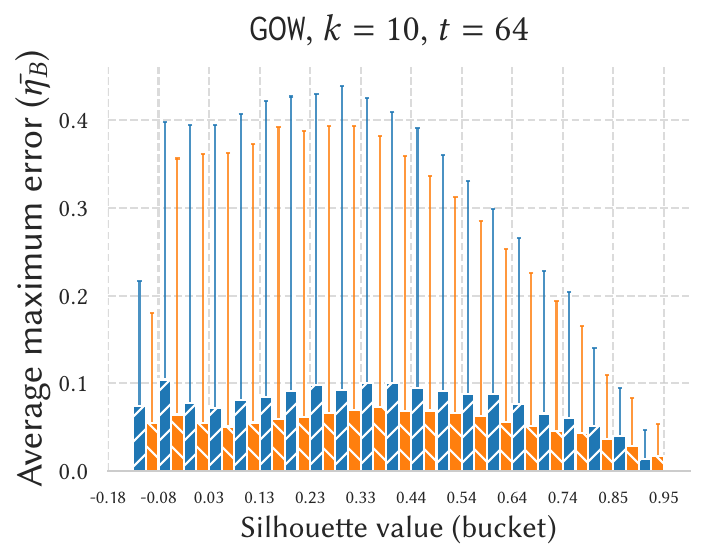} \\
		\includegraphics[width=0.45\textwidth]{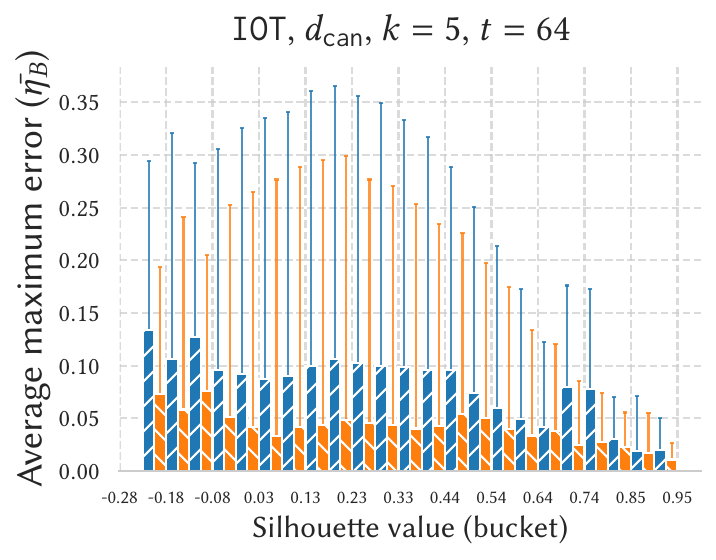} &
		\includegraphics[width=0.45\textwidth]{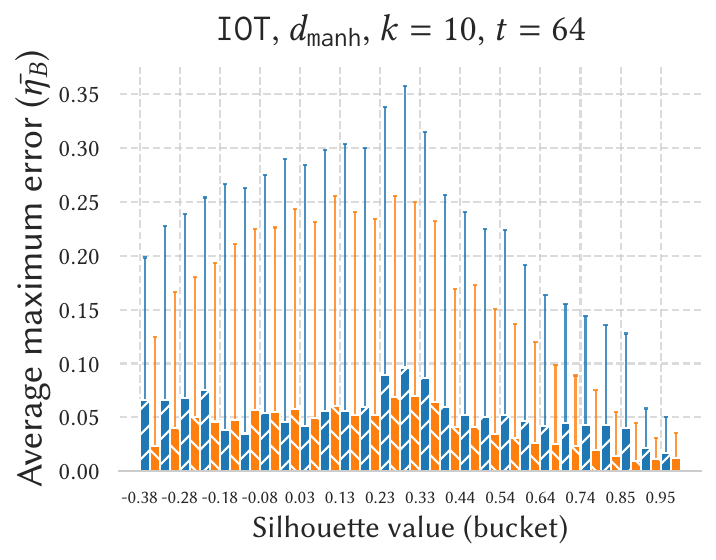} \\
	\end{tabular}
	\caption{Plots of the average maximum error and its standard deviation obtained on all elements $e\in V$, grouped by bucket, for \PPSbuck and \UNIbuck, over four configurations. When we report the distance, it indicates that the dataset has been clustered with \kmed. For ease of visualization we only display the average maximum error over non-empty buckets.
	}
	\label{fig:bucketsLarge}
\end{figure}

\textbf{Summary for issue \Qone.}  
Our experiments show that neither \fsbl nor \simplbl are suitable for estimating the global silhouette of a clustering, especially for large datasets. 
While \fsbl features good accuracy, it does not yield significant speedups over the exact algorithm. 
On the other hand, \simplbl is very fast but it incurs high approximation errors, which cannot be controlled by any parameter. 

As for randomized methods, \algGlobOne often achieves the best trade-off between accuracy and runtime compared with~\algGlobPPS and~\algGlobUNI. 
\algGlobPPS features better accuracy compared to \algGlobUNI, at the expense of a higher runtime on most datasets. 
\algGlobSub (resp., \algGlobUNISub) attains comparable accuracy to 
\algGlobPPS (resp., \algGlobUNI) even when processing a small number of elements $m$ compared to $n$---yielding a significant speedup on large datasets. 

Therefore, for large datasets, where the exact computation of $\sEl{\clust}$ is impractical even for a single and fixed value of $k$, 
our methods as well as \algGlobOne provide accurate estimates efficiently. 
For example, 
on the \gowdata data, which has more than 6 million elements, \algGlobPPS or \algGlobSub with $m=n/10$ and $t=128$ output very accurate silhouette estimates for \emph{all} values of $k\in \{2,5,10,20\}$ in less than five minutes and one minute, respectively.

\subsection{Local silhouette estimation}\label{subsec:localEst}
We now validate our algorithm \algLocal that estimates all values of $\sEl{e}, e\in V$, so to deal with issue \Qtwo. For ease of notation we denote \algLocal with \PPSbuck.
As baseline we consider the uniform Poisson-sampling method, that we denote with \UNIbuck, which computes $\sElEstUni{e}, e\in V$ using uniform Poisson sampling over each cluster $C\in \clust$ (see~\Cref{subsubsec:uniFails}) . 
For each dataset and value of $k$ considered in~\Cref{subsec:globalSilh}, we evaluate the accuracy of the local estimates in two settings. 
First, we consider a bucketing of the values of silhouette $\sEl{e}, e\in V$. 
That is, we assign each element $e\in V$ to a bucket based on its silhouette value $\sEl{e}$ where each bucket $B_i$, with $i=1,\dots,40$, corresponds to elements $e$ with  $s(e) \in [L_i, U_i)$, where $L_i = -1 + (i-1)/20$ and $U_i = -1 + i/20$.\footnote{The bucket corresponding to index $i=40$ also contains the right endpoint $L_{40}=1$.} 
We use the buckets $B_i$ to evaluate the accuracy of the local estimates over each individual bucket. 
Second, we also analyze the accuracy of the~\PPSbuck and \UNIbuck approaches over each individual cluster $C_1,\dots,C_k$ in $\clust$.

\subsubsection{Accuracy over buckets}\label{subsubsec:exactBuckets}
For each bucket $B_i$ let $E_i\subseteq V$ be the elements of $V$ with $\sEl{e}\in B_i$. 
We computed, for each independent run, the maximum error $\max_{e\in E_i}\{|\sElEst{e}-\sEl{e}|\}$ (where $\sElEst{e}$ is obtained either with \PPSbuck or \UNIbuck), and 
averaged the maximum error over ten independent runs, to obtain the average maximum error ($\bar{\eta}_{B_i}$) and its standard deviation, shown in~\Cref{fig:bucketsLarge}.
Note that $\bar{\eta}_{B_i}$ is a very precise metric to capture the local estimation performances, given by the fact that it considers the \emph{maximum error} over each bucket $B_i$, which can be very large, i.e., at most 2.

From~\Cref{fig:bucketsLarge} we first observe that both \PPSbuck and \UNIbuck approaches achieve smaller maximum error ($\bar{\eta}_{B_i}$) for buckets corresponding to elements with higher $\sEl{e}$.
Such behavior is likely caused by the fact that for elements with high value of $\sEl{e}$, the difference between $\bEl{e}-\aEl{e}$ (see~\Cref{eq:AandBterms}) is high, hence even rough estimates of terms $\weightEl{j}$ yield a good approximation of $\sEl{e}$. 
The previous observation also explains the larger error over elements with silhouette $\sEl{e}$ close to 0.
Comparing the average error $\bar{\eta}_{B_i}$ we observe that, \PPSbuck significantly reduces the error (and its standard deviation) over \emph{all} buckets with respect to \UNIbuck on most inputs. 
For example, on the \iotdata data with \dcanb and $k=5$ the estimates reported by \PPSbuck have half of the maximum error achieved by \UNIbuck.  
Similarly, on the \gowdata data for $k=2, t=32$ the standard deviation of the error of \PPSbuck is significantly smaller on each bucket compared to \UNIbuck. 
We observe that on some configurations~\PPSbuck achieves similar performances to the \UNIbuck approach (e.g., \gowdata for $k=10$). Nevertheless, we never observed \PPSbuck reporting estimates with significantly smaller accuracy compared to \UNIbuck, highlighting the high quality of the estimates of \PPSbuck as captured by~\Cref{theo:localAlg}. We provide some additional results in~\Cref{appsec:additRes}. 

\begin{figure}[t]
	\centering
	\captionsetup[subfigure]{labelformat=empty}
	\includegraphics[width=0.25\textwidth]{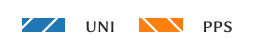}\\
	\begin{tabular}{lr}
		\includegraphics[width=0.45\textwidth]{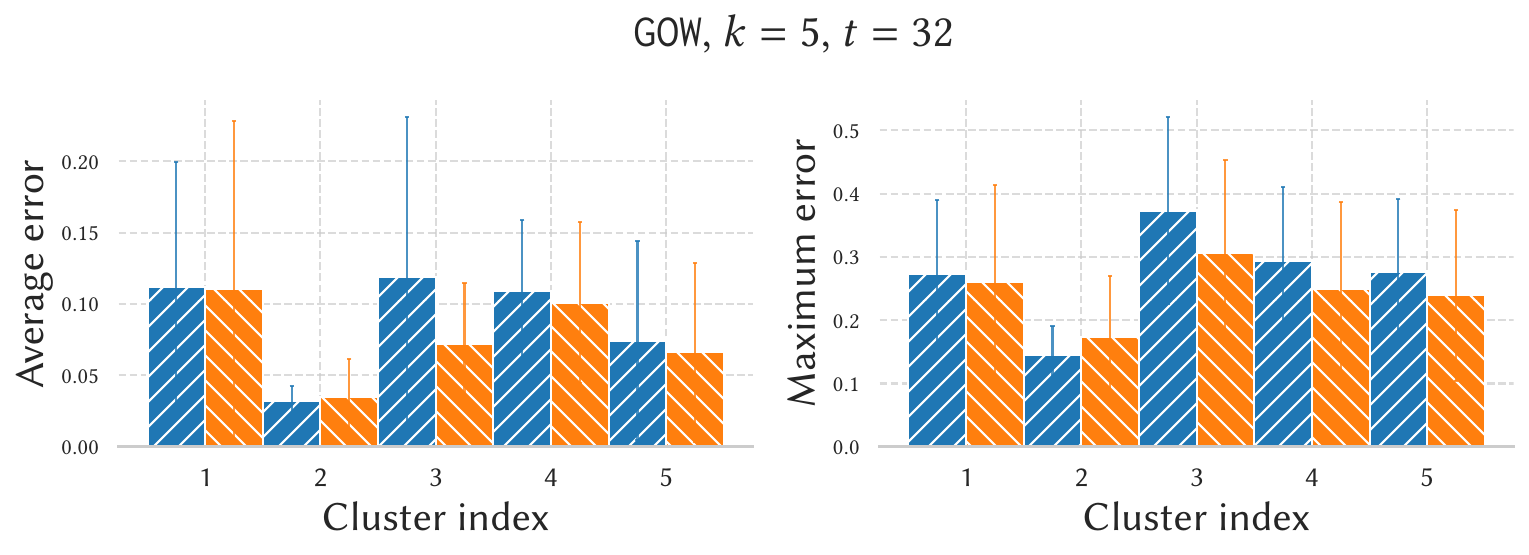} &
		\includegraphics[width=0.45\textwidth]{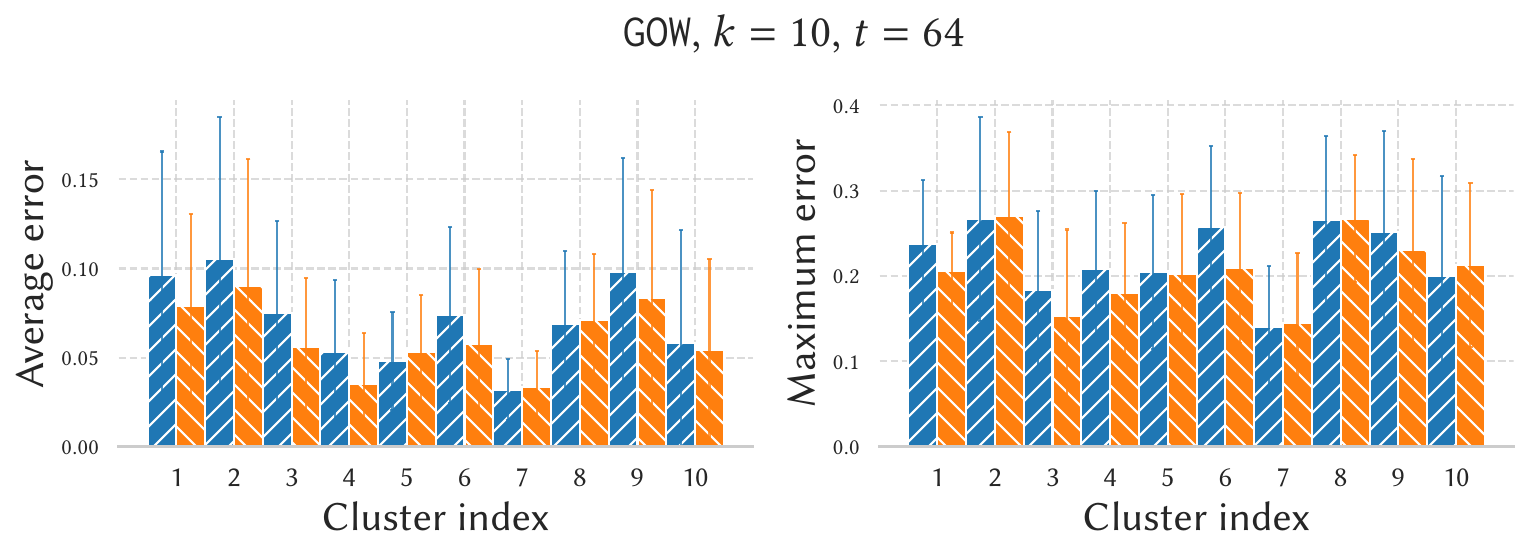}\\
		\includegraphics[width=0.45\textwidth]{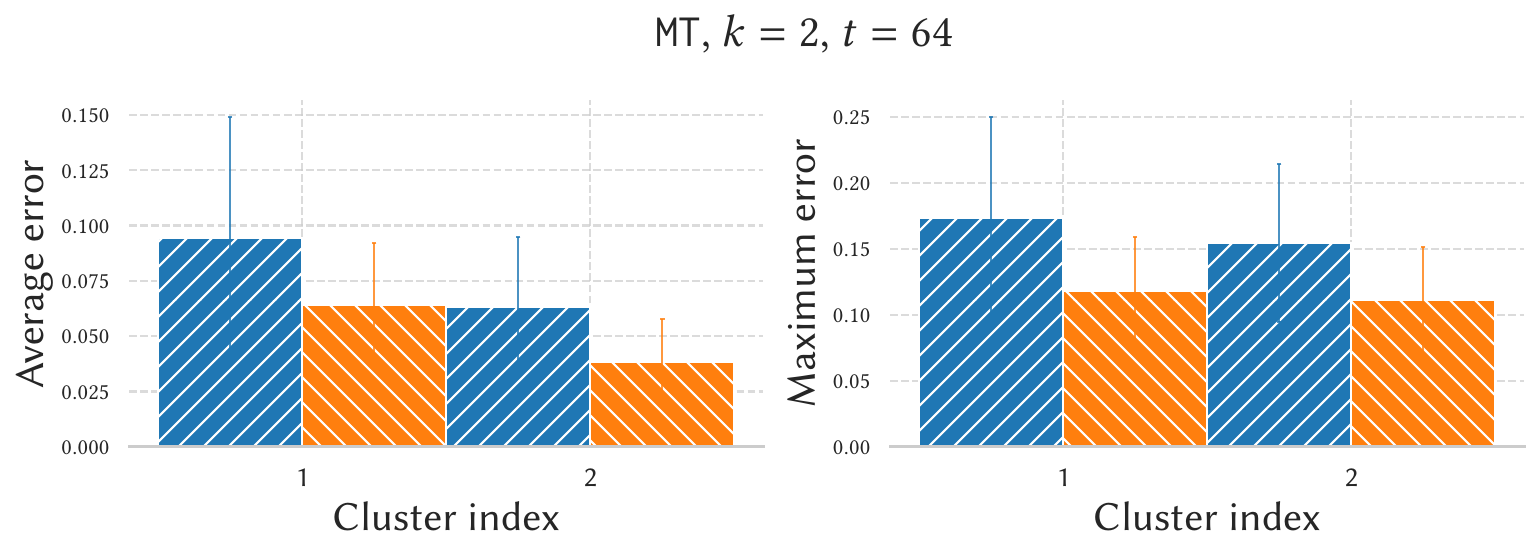} & 
		\includegraphics[width=0.45\textwidth]{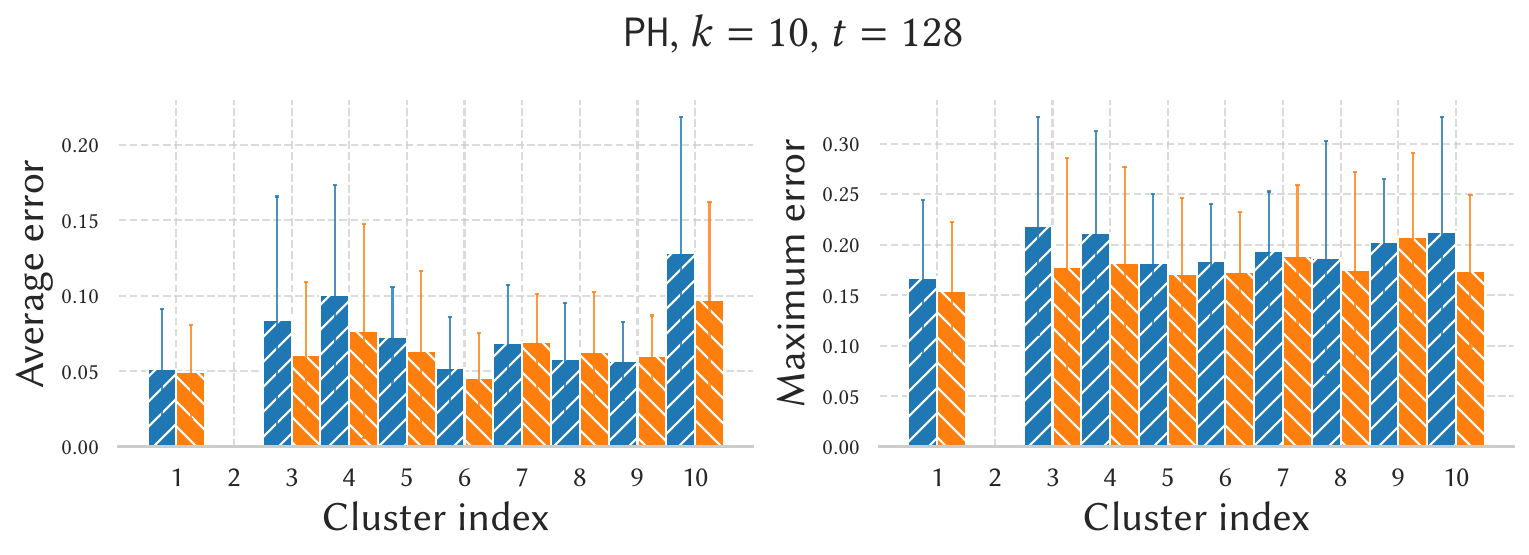}
		\\
	\end{tabular}
	\caption{Plots of the average and maximum errors within each cluster achieved by \PPSbuck and \UNIbuck, over ten independent runs, for various input configurations. Each dot represents a distinct run.
	}
	\label{fig:clustErrMaxandAvg}
\end{figure}

\subsubsection{Accuracy over clusters}\label{subsubsec:clustsAcc}

Next we investigate the local approximation error of \PPSbuck and \UNIbuck across each cluster of $\clust$. 
That is, for each cluster we computed both the maximum error $\max_{e\in C} \{|\sElEst{e}-\sEl{e}|\}$  and the average error $ (1/|C|) \sum_{e\in C} |\sElEst{e}-\sEl{e}|$ (where $\sElEst{e}$ is computed either with \UNIbuck or \PPSbuck).

In~\Cref{fig:clustErrMaxandAvg}, we visualize the maximum and average errors for different input configurations, over ten independent runs. We observe the \PPSbuck method reports estimates with comparable or significantly smaller average or maximum error compared to \UNIbuck. 
As an example, on the \mtdata dataset with $k=2$, the average and maximum errors of \PPSbuck for Cluster 1 are respectively bounded by 0.12 and 0.21, while, in contrast, those of \UNIbuck are bounded by 0.17 and 0.28, respectively. 
Similarly, on the \gowdata dataset with $k=10$, on all but cluster-index 7, our \algLocal  always attains smaller average and maximum errors with respect to~\UNIbuck. We report additional results in~\Cref{appsec:additRes}. 

\textbf{Summary for issue \Qtwo.}
We investigated the accuracy for the extremely challenging task of computing the local silhouette value of \emph{all} elements $e\in V$ (i.e., \Cref{probl:LocalEst}).
Our experiments show that our~\algLocal algorithm provides highly-accurate estimates for all elements $e\in V$, with significantly reduced variance compared to the \UNIbuck baseline. 
That is, the estimates of \PPSbuck are more accurate independently of the actual silhouette values (\Cref{subsubsec:exactBuckets}), and independently of the cluster to which the elements belong (\Cref{subsubsec:clustsAcc}), which are desirable properties for practical applications.

\subsection{Distributed implementation}\label{subsec:parExps}
As discussed in \Cref{sec:MR}, our~\algLocal can be efficiently implemented in a distributed setting, such as the one modeled by the MapReduce/MPC frameworks, using a constant number of rounds. In this section, we assess the scalability of this distributed implementation of~\algLocal, thus dealing with  Issue \Qthr. 
Since our experiments are performed on a single machine we rely on a simple parallel implementation retaining the steps in~\Cref{sec:MR} (see~\Cref{appsec:missingsetup} for details of the platform used, and~\Cref{appsec:parallelImp} for more details on our parallelization strategy). 

Our goal is to study the parallel speedup of~\algLocal varying the number of threads in $\{2,4,8,16\}$, varying the number of clusters $k\in\{2,5,10,15,20\}$, and varying $t\in \{512,1024\}$. 
We considered our largest datasets from~\Cref{tab:data}. 
For each configuration we perform five runs, and recorded the speedup over the sequential implementation. 
That is, let $T_i$ be the average time (over the five runs) to execute~\algLocal with $i\in \{1,2,4,8,16\}$ threads, i.e., $T_1$ is associated with the sequential implementation. 
We will study the average speedup $T_1/T_i$.

The results are reported in~\Cref{fig:parallelSU}. We observe a remarkable speed-up over the sequential implementation of the parallel algorithm up to 8 threads on all $k$ values (except for $k=2$, see discussion below). 
More specifically, \algLocal achieves an almost linear speedup using up to 8 threads. 
The speedup becomes slightly smaller than linear for a larger number of threads, which is expected due to the increased overhead introduced by synchronization. 
Next, we observe that the parallel speedup is significantly greater for larger values of $k$ and $t$, since our implementation parallelizes the work over different clusters in $\clust$ (see~\Cref{appsec:parallelImp}). 
That is, for smaller values of $k$ and $t$ the degree of parallelism in the computation is not sufficient to compensate the overheads incurred by the multi-threaded implementation, i.e., synchronization. 
Finally, we note that a larger value of $t$ leads to a higher speedup (e.g., on \mtdata with $t=1024$ the speedup with 16 threads is about 14 for $k=20$). 
Remarkably, when using $t=1024$ on \powdata with $k\ge 5$ and 16 threads, we obtain a speedup of at least $10\times$ compared to the sequential execution of~\algLocal. 
Notably, over such configuration, our parallel implementation of~\algLocal computes extremely accurate estimates in less than 30 seconds (see~\Cref{tab:runtimesParallel} in \Cref{appsec:additRes}, and recall that the \powdata dataset cannot be processed by the exact algorithm even within hours).

\begin{figure}[t]
	\centering
	\captionsetup[subfigure]{labelformat=empty}
	\includegraphics[width=0.95\textwidth]{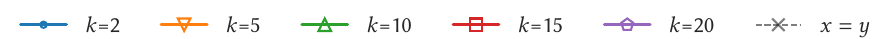}\\
	\begin{tabular}{lccr}
		\includegraphics[width=0.232\textwidth]{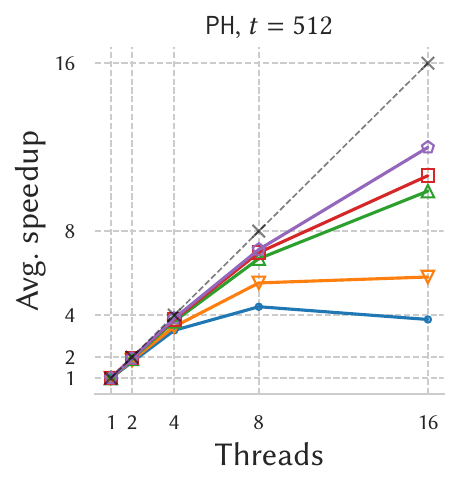} &
		\includegraphics[width=0.22\textwidth]{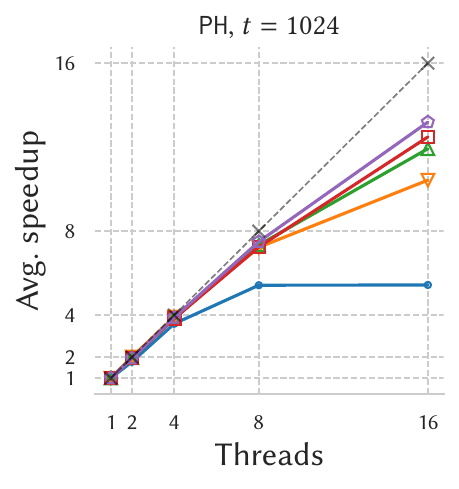}&
		\includegraphics[width=0.22\textwidth]{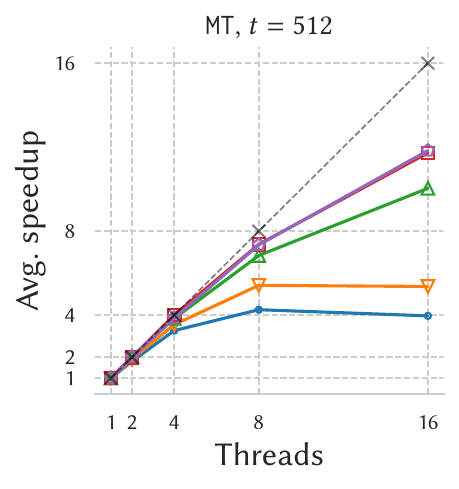} &
		\includegraphics[width=0.22\textwidth]{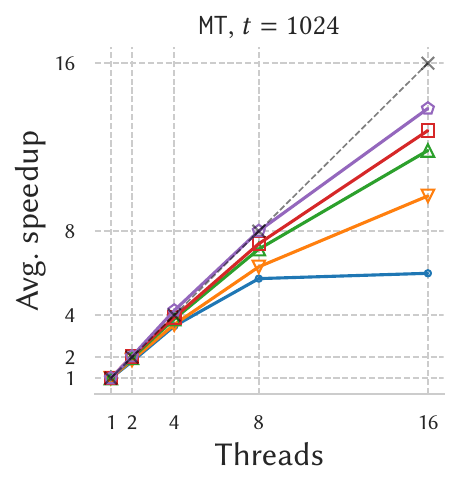}\\
	\end{tabular}
	\caption{Average parallel speedup over five independent runs by running~\algLocal with a parallel implementation. Each configuration shows different values for the sample size $t$. 
	}
	\label{fig:parallelSU}
\end{figure}

\textbf{Summary for issue \Qthr.} A simple distributed implementation of~\algLocal achieves remarkable speedups over its sequential version. 
The parallel speedup and the high accuracy achieved by \algLocal enable the estimation of the silhouette of all elements of massive datasets. 
That is, our method enables an extremely efficient processing of datasets that cannot be analyzed through current (sequential) methods.

\begin{figure}[t]
	\begin{tabular}{l}
		\includegraphics[width=0.95\textwidth]{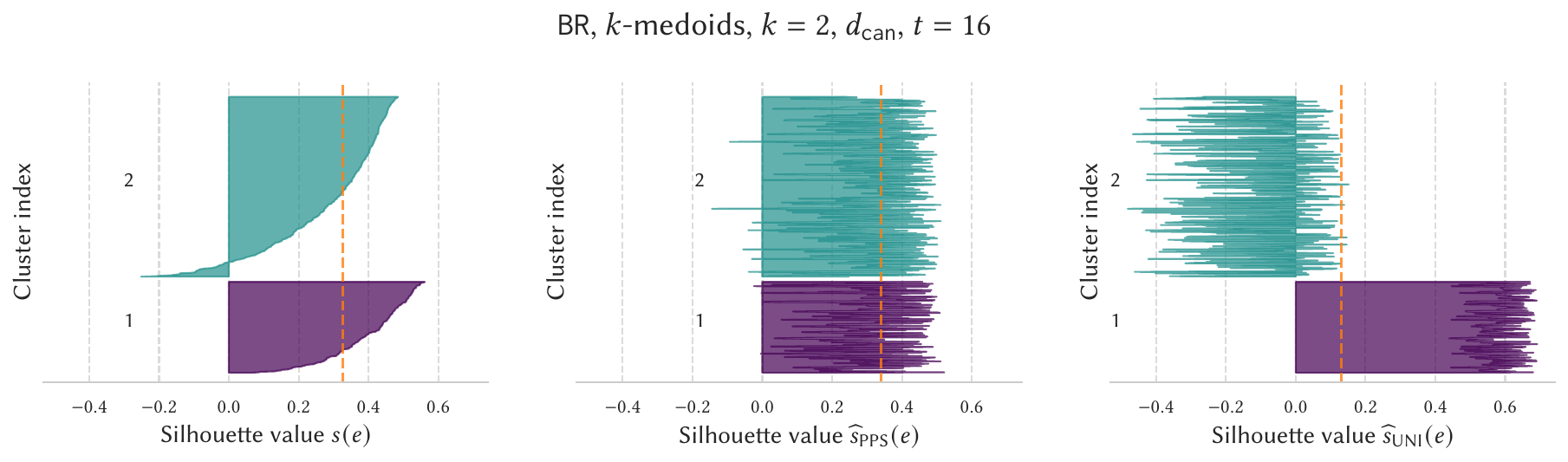} \\
		\includegraphics[width=0.95\textwidth]{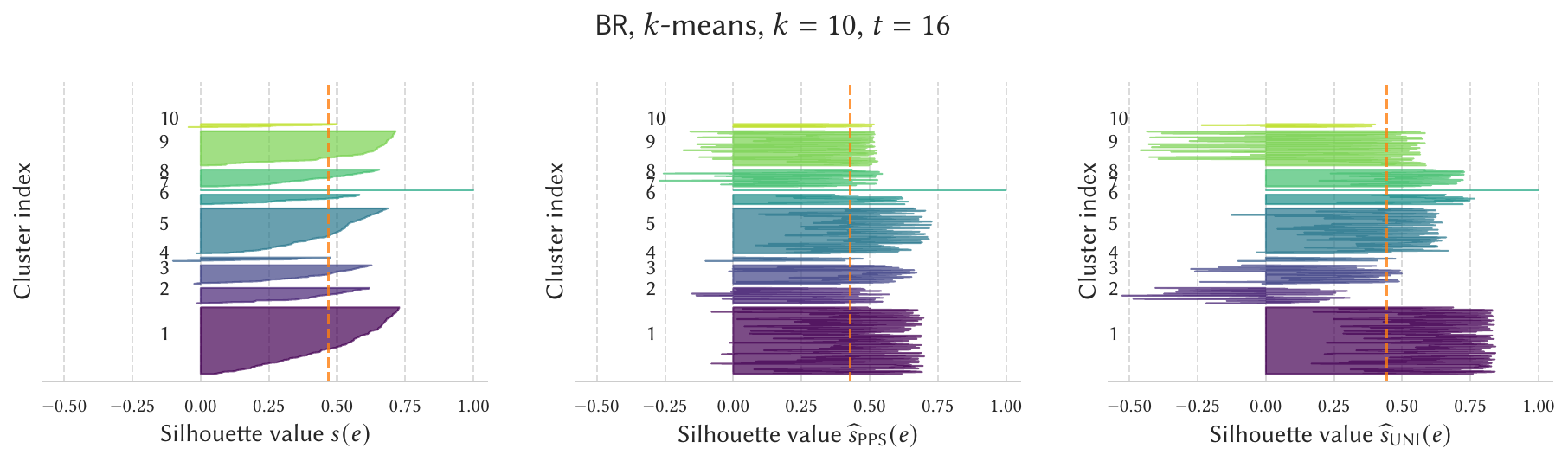} \\
	\end{tabular}
	\caption{Silhouette plot construction. (Left): Silhouette plot obtained from the exact algorithm. (Center): Silhouette plot obtained by~\algLocal, where elements are ordered according to the Left plot. (Right): Silhouette plot obtained by~\UNIbuck, where elements are ordered according to the Left plot. For all plots, the dashed line represents the average value, computed over all elements.}\label{fig:silhReconst}
\end{figure}
\begin{figure}[t]
	\centering
	\captionsetup[subfigure]{labelformat=empty}
	\includegraphics[width=0.5\textwidth]{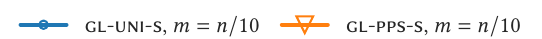}\\
	\begin{tabular}{lc}
		\includegraphics[width=0.4\textwidth]{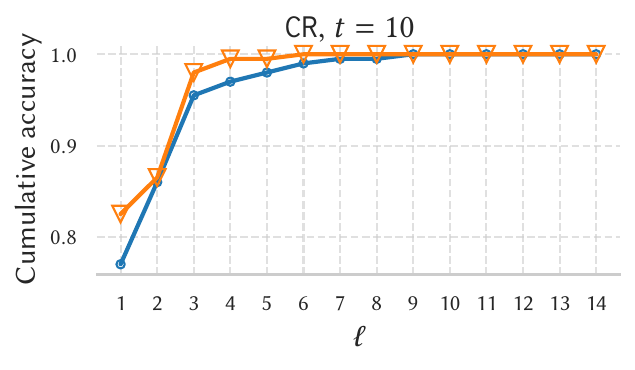} & \includegraphics[width=0.4\textwidth]{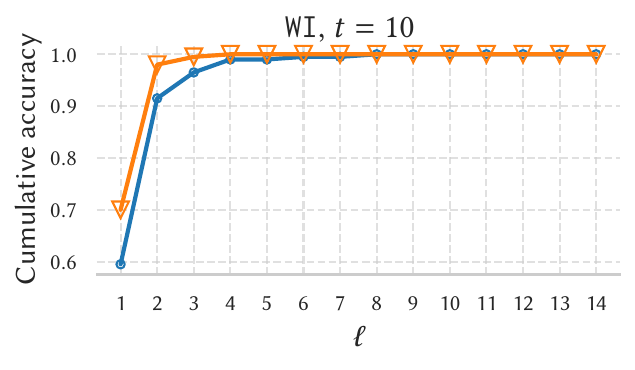}
		\\
	\end{tabular}
	\caption{Plots of $\text{Cumulative accuracy}(\ell, \widehat{\bm{k}})$, where $\ell \in [1,14]$ and $\widehat{\bm{k}}$ is either $\kvecPPS$ or $\kvecUNI$. For each value of $\ell$, the plots report the cumulative accuracy of each estimation method over 200 runs: values closer to 1, denote a higher accuracy for the identification of the best value of $k$. 
	}\label{fig:cumulativeAcc}
\end{figure}
\subsection{Applications}\label{subsec:applicationsExps}

\subsubsection{Silhouette plot construction}\label{subsubsec:expReconst}

We now show that the local estimates in output to~\algLocal can be used to build a silhouette plot of a clustering \clust that closely matches the exact one (e.g., see an example of silhouette plot in \Cref{subfig:silhPlot}).  
In this setting, we focus on \brdata, our smallest dataset, for ease of visualization,\footnote{To avoid cluttered plots, it is good practice to visualize only a small subset of the elements of $V$ over silhouette plots; even on large datasets.} and for the fact that we can compute the exact values of  $\sEl{e}$, for all $e\in V$. 

\Cref{fig:silhReconst} reports an example of two clusterings \clust, obtained respectively with \kmed (top) and with \kmeans (bottom). 
In the top panel, we observe that the estimates provided by \algLocal are generally more accurate than those provided by \UNIbuck, whose estimates are rather far from their actual values, especially for Cluster 2. 
In the bottom panel, we first note that both \algLocal and \UNIbuck provide similar values for the estimate of the global silhouette coefficient $\sEl{\clust}$ (dashed line). Interestingly, the two methods yield rather different silhouette plots.
Namely, the plot obtained with the estimates of \algLocal is much closer to the one obtained with the exact algorithm. 
For example, the estimates of \algLocal on Cluster 1 never exceed 0.75, in contrast with \UNIbuck, whose estimates can be larger than 0.8
(note that the exact values for Cluster 1 are all less than $0.75$). Similarly, for Cluster 9, \algLocal assigns negative values to very few elements as opposed to \UNIbuck.

\subsubsection{Selection of the best value of $k$}\label{subsubsec:expClustIdentification}

Let $V$  be a dataset and let $K=\{k_1,\dots, k_{|K|}\}$ be a set containing different values of $k$ used to cluster $V$. 
The global silhouette is often used to select the value of $k\in K$ that yields the best clustering for $V$.
Specifically, for each $k\in K$ we obtain a clustering $\clust_k$ (e.g., using \kmeans) and select $\clust_{k^*}$ as the best clustering for $V$, where $k^* =\arg\max_{k\in K}\sEl{\clust_k}$ \citep{Shahapure2020SilhQuality}.
In this subsection we compare the effectiveness of \algGlobSub versus \algGlobUNISub in detecting the best value of $k$ to be used for clustering. 
We focus on these two estimators, since, as argued before, they are accurate and their computation is scalable (see summary for Issue \Qone in~\Cref{subsec:globalExps}). 
Thus, they can be used to compute high-quality approximations of $\sEl{\clust_k}$ for many different candidate values of $k$.

We considered two datasets for which we can compute $\sEl{\clust_k}$ exactly and efficiently, namely \texttt{CR} and \texttt{WI}. 
We clustered each dataset with \kmeans, with different values of $k \in K$, with $K=\{2,\dots, 15\}$, and for each resulting clustering $\clust_k$ we computed the exact global silhouette $\sEl{\clust_k}$. 
Then, for each of the two algorithms \algGlobSub and \algGlobUNISub, both run with $m= n/10$ and $t=10$, we performed $N=200$ independent runs for each distinct value of $k$. 
Let $\sElEst{\clust_k}_{j}$ be the estimate of either \algGlobSub or \algGlobUNISub on dataset $V$ clustered with value $k\in [2,15]$ at iteration $j\in [1,N]$. 
We let $\kvecEx$ be the vector of the 14 values of $k \in K$, sorted by non-increasing values of the corresponding global silhouette, that is,
$\sEl{\clust_{\kvecEx_1}} \ge \sEl{\clust_{\kvecEx_{2}}} 
\ge \cdots \ge \sEl{\clust_{\kvecEx_{14}}}$. 
Similarly, we let $\kvecPPS(j)$ (resp. $\kvecUNI(j)$) denote the vector
of the 14 values of $k\in K$, sorted by non-increasing values of the estimates of the corresponding global silhouette  provided by \algGlobSub (resp. \algGlobUNISub) in the $j$-th run, for $j \in [1,N]$.
Note that, given a run $j\in[1,N]$ the \emph{best value} of $k\in K$ returned by an approximate method corresponds to $\kvecPPS_1(j)$, or $\kvecUNI_1(j)$. 
To assess the accuracy of \algGlobSub and \algGlobUNISub for selecting a reasonable value of $k\in K$ with respect to the exact approach, we now define a \emph{cumulative accuracy} function of each approximate method. 
Given $\ell \in [1,|K|]$, the cumulative accuracy evaluated at $\ell$ corresponds to the fraction of runs in which the approximate method selects as best value $k\in K$ a value in the top-$\ell$ positions of $\kvecEx$ (i.e., $k\in \{\kvecEx_1,\dots,\kvecEx_\ell\}$). More formally given the set of $N$ sorted vectors $\widehat{\bm{k}}(j)$, where $\widehat{\bm{k}}(j)$ is either $\kvecPPS(j)$ or $\kvecUNI(j)$, $j\in[1,N]$, the cumulative accuracy corresponds to, 
\[\text{Cumulative accuracy}(\ell, \widehat{\bm{k}}) \doteq  \frac{1}{N} \sum_{j=1}^{N} \sum_{i=1}^{\ell} \mathbf{1}\left[\widehat{\bm{k}}_1(j) = \kvecEx_i\right] \enspace .
\]
Where $\bm{1}[\cdot]$ corresponds to the indicator function. 
Observe that, as a function of $\ell$, the cumulative accuracy eventually hits the value of $1$ for $\ell = |K|$. 
Clearly, a function approaching $1$ very quickly indicates that the associated estimation method features high precision and a small variance in selecting a good value of $k$, according to $\kvecEx$. 

Results are reported in~\Cref{fig:cumulativeAcc}, where we observe that the~\algGlobSub method has significantly higher accuracy than~\algGlobUNISub. 
For example, more than 97\% of the 200 runs of \algGlobSub yield the selection as best value of $k$ one of the top-3 values in $\kvecEx$ on dataset \crdata, and one of the top-2 values in $\kvecEx$ on the \winedata dataset. 
In contrast, \algGlobUNISub achieves smaller accuracy: less than 95\% of the 200 runs correctly yield the selection of one the top-3 values of $k$ in $\kvecEx$ on the \crdata and less than 92\% of the runs detect one of the top-2 values in $\kvecEx$ on the \winedata dataset.
We observe that the global accuracy is affected by how the exact silhouette values vary across the top-$\ell$ elements of $\kvecEx$ (we show the ordering of $\kvecEx$ in~\Cref{tab:appclustid}). 
For example, in the above experiments the silhouette values associated with the top-2 values of $\kvecEx$ 
do not differ much, making the detection of the \emph{best} $k$ extremely challenging, especially for~\algGlobUNISub. 

\textbf{Summary for issue I4.} We studied two different and widely adopted applications of the silhouette coefficient.
First, we showed that the local estimates of~\algLocal can be used to approximate the silhouette plots of a clustering \clust, matching exact plots. 
Second, we showed that our algorithm~\algGlobSub for global silhouette estimation can be safely used in lieu of the expensive exact algorithm to detect the best clustering granularity (i.e., the value $k$ yielding higher $\sEl{\clust_k}$) for a given dataset $V$, improving over existing methods.  
Our experiments showcase the large applicability of the methods designed in this work for both local and global estimation scenarios.

\section{Conclusions}
\label{sec:conclusions}
In this work, we introduced an efficient, sampling-based
algorithm to estimate the silhouette of \emph{all} elements of a clustered dataset.
Our new method (\algLocal),
dramatically reduces the quadratic complexity required by the exact computation, while providing quantitative guarantees on the approximation error. 
In addition, we introduced several efficient estimators for the global silhouette providing a tight characterization of their required sample size.

Finally , we designed a distributed variant of our techniques, which requires small (sublinear) local memory, linear aggregate memory, and a constant number of computation rounds.

We validated our techniques through extensive experiments that show the accuracy, efficiency, and parallel scalability of our new methods for local and global silhouette estimation.
For the local silhouette values, we showed that \algLocal provides highly accurate estimates compared to existing (heuristic) methods.
For the global silhouette value, we showed that \algGlobSub (based on~\algLocal's sampling schema) often achieves the best trade-off between number of samples processed and accuracy. 
That is, \algGlobSub reports highly accurate estimates with small variance while being extremely efficient. 
We performed further experiments that showcase the use of our methods for two widespread applications:
\begin{inparaenum}[$i$)]
	\item the (approximate) construction of silhouette plots; and
	\item the selection of the best clustering granularity $k$ for a dataset.
\end{inparaenum}

These are some open issues deserving further investigation: 
\begin{inparaenum}[$i$)]
	\item devising  a more flexible, dynamic version of the algorithm where the desired accuracy can be incrementally refined by reusing (part of) the previously sampled elements;
	\item designing data-dependent sampling approaches, that adapt their sample complexity to the input datasets;
	\item extending  our methods to other popular clustering validation measures, which we discuss  in more detail in~\Cref{sec:gen}.
\end{inparaenum}

\section{Acknowledgments}\label{sec:acks}
This research is funded by the Ministry of University and Research of Italy within the Complementary National Plan PNC-I.1 ``Research initiatives for innovative technologies and pathways in the health and welfare sector, D.D. 931 of 06/06/2022, PNC0000002 DARE - Digital Lifelong Prevention CUP: B53C22006440001''. 
This research is also funded by the
ERC Advanced Grant REBOUND (834862),  and 
the Wallenberg AI, Autonomous Systems and Software Program (WASP) funded by the Knut and Alice Wallenberg Foundation.

\bibliographystyle{abbrvnat}
\bibliography{biblio}

\cleardoublepage
\appendix
\section{Technical tools}\label{appsec:tools}
The next results is a standard tool for data mining and machine learning algorithms, to provide concentration for sums of independent and bounded random variables~\cite[Theorem 4.12]{Mitzenmacher2017Prob}, and is crucial for the analysis of our estimators.
\begin{theorem}[Hoeffding's bound]\label{th:Hoeffding}
	Let $X_1,\dots,X_m$ be independent random variables taking values in $[a,b]\subseteq\mathbb{R}$ such that for all $j\in[1,m], \mathbb{E}[X_j]=\mu$. Then
	\[
	\mathbb{P}\left[\left|\frac{1}{m}\sum_j X_j - \mu \right|\ge \varepsilon \right] \le 2 \exp\left( -\frac{2m\varepsilon^2}{(b-a)^2}\right) \enspace.
	\]
\end{theorem}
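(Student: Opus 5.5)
We prove \Cref{th:Hoeffding} with the standard Chernoff (exponential moment) method combined with Hoeffding's lemma. We treat the upper tail $\mathbb{P}[\frac{1}{m}\sum_j X_j - \mu \ge \varepsilon]$ and the lower tail separately, then add the two bounds with a union bound. This union bound is where the factor $2$ in front of the exponential comes from.

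\textbf{Upper tail.} Set $Z_j = X_j - \mu$. Each $Z_j$ has mean zero and takes values in $[a-\mu, b-\mu]$, an interval of length $b-a$. Fix any $\lambda>0$. By Markov's inequality applied to $\exp(\lambda\sum_j Z_j)$, and by independence of the $X_j$'s,
\[
\mathbb{P}\Big[\sum_{j} Z_j \ge m\varepsilon\Big] \le e^{-\lambda m\varepsilon}\,\mathbb{E}\Big[e^{\lambda \sum_j Z_j}\Big] = e^{-\lambda m \varepsilon}\prod_{j=1}^m \mathbb{E}\big[e^{\lambda Z_j}\big].
\]
We then invoke Hoeffding's lemma: if $Z$ has mean zero and $Z\in[\alpha,\beta]$, then $\mathbb{E}[e^{\lambda Z}] \le \exp(\lambda^2(\beta-\alpha)^2/8)$. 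Substituting gives the bound $\exp(-\lambda m\varepsilon + m\lambda^2(b-a)^2/8)$. This exponent is minimized at $\lambda = 4\varepsilon/(b-a)^2$, which yields exactly $\exp(-2m\varepsilon^2/(b-a)^2)$.

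\textbf{Lower tail.} The lower tail follows by applying the same argument to the variables $-X_j$. These are independent, take values in $[-b,-a]$ (again an interval of length $b-a$), and have common mean $-\mu$.

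\textbf{Proof of Hoeffding's lemma.} This is the main technical step, and we prove it as follows.
\begin{itemize}
\item By convexity of $z\mapsto e^{\lambda z}$ on $[\alpha,\beta]$, we have $e^{\lambda z} \le \frac{\beta - z}{\beta-\alpha}e^{\lambda\alpha} + \frac{z-\alpha}{\beta-\alpha}e^{\lambda\beta}$.
\item Taking expectations and using $\mathbb{E}[Z]=0$ gives $\mathbb{E}[e^{\lambda Z}] \le (1-\theta)e^{\lambda\alpha} + \theta e^{\lambda\beta}$, where $\theta = -\alpha/(\beta-\alpha)\in[0,1]$.
\item Write $u=\lambda(\beta-\alpha)$. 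The logarithm of the right-hand side equals $\varphi(u) = -\theta u + \ln(1-\theta+\theta e^{u})$.
\item Direct computation gives $\varphi(0)=0$ and $\varphi'(0)=0$. Moreover, $\varphi''(u) = q(1-q)$ with $q = \theta e^u/(1-\theta+\theta e^u)\in[0,1]$, so $\varphi''(u)\le 1/4$.
\item Taylor's theorem with remainder then gives $\varphi(u)\le u^2/8 = \lambda^2(\beta-\alpha)^2/8$, which proves the lemma.
\end{itemize}

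The only delicate points are the algebraic identity for $\varphi''$ and the elementary bound $q(1-q)\le 1/4$. Everything else is routine.
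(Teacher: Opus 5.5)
Your proof is correct. It is the standard Chernoff-method argument combined with Hoeffding's lemma: the convexity bound, then the second-order Taylor bound on $\varphi$ using $\varphi''\le 1/4$. You also carry out the optimization at $\lambda=4\varepsilon/(b-a)^2$ and the union bound giving the factor $2$ correctly.

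The paper does not prove \Cref{th:Hoeffding} itself. It imports the result from \citet[Theorem 4.12]{Mitzenmacher2017Prob}, whose proof follows essentially the same route as yours.
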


The next results (belonging to the large farmily of bounds known as Chernoff-Hoeffding bounds) are used in the proof of~\Cref{theo:localAlg}. The form of the bounds that we use is from the textbook by~\citet[Theorem 1.1]{DubashiP09}.

\begin{theorem}[Chernoff-Hoeffding bounds]\label{th:CHbounds}
	Let $X_1,\dots,X_m$ be independent random variables taking values in $[0,1]$ and let $X=\sum_{i=1}^m X_i$. Then for any $\varepsilon\in(0,1)$ it holds that
	\[
	\mathbb{P}\left[X \ge (1+\varepsilon) \mathbb{E}[X] \right] \le \exp\left( -\frac{\varepsilon^2}{3} \mathbb{E}[X]\right) \quad \text{ and } \quad \mathbb{P}\left[X \le (1-\varepsilon) \mathbb{E}[X] \right] \le \exp\left( -\frac{\varepsilon^2}{2} \mathbb{E}[X]\right)\enspace.
	\]
\end{theorem}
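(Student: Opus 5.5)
The statement to prove is the Chernoff--Hoeffding bound (\Cref{th:CHbounds}). I would use the standard moment generating function (exponential Markov) method. The only twist compared with the Bernoulli case is that the $X_i$ take values in $[0,1]$ rather than $\{0,1\}$. That twist is absorbed by a convexity step at the start.

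\textbf{Step 1: reduce to Bernoulli-type mgf bounds.} Let $\mu_i=\mathbb{E}[X_i]$ and $\mu=\mathbb{E}[X]=\sum_i\mu_i$. For any real $\lambda$, the map $x\mapsto e^{\lambda x}$ is convex, so for $x\in[0,1]$ we have $e^{\lambda x}\le 1-x+xe^{\lambda}$. Taking expectations gives
\[
\mathbb{E}[e^{\lambda X_i}]\le 1+\mu_i(e^\lambda-1)\le \exp\!\left(\mu_i(e^\lambda-1)\right).
\]
By independence, $\mathbb{E}[e^{\lambda X}]\le \exp(\mu(e^\lambda-1))$.

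\textbf{Step 2: upper tail.} For $\lambda>0$, Markov's inequality applied to $e^{\lambda X}$ gives
\[
\mathbb{P}[X\ge(1+\varepsilon)\mu]\le \exp\!\left(\mu(e^\lambda-1)-\lambda(1+\varepsilon)\mu\right).
\]
Choosing $\lambda=\ln(1+\varepsilon)$ yields the classical bound $\bigl(e^{\varepsilon}/(1+\varepsilon)^{1+\varepsilon}\bigr)^{\mu}$.

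\textbf{Step 3: lower tail.} For $\lambda<0$, the same argument applied to the event $e^{\lambda X}\ge e^{\lambda(1-\varepsilon)\mu}$ works. Choosing $\lambda=\ln(1-\varepsilon)$ gives $\bigl(e^{-\varepsilon}/(1-\varepsilon)^{1-\varepsilon}\bigr)^{\mu}$.

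\textbf{Step 4: the calculus inequalities.} This is where the real work lies. One must show, for $\varepsilon\in(0,1)$,
\[
f(\varepsilon)=\varepsilon-(1+\varepsilon)\ln(1+\varepsilon)+\varepsilon^2/3\le 0,
\qquad
g(\varepsilon)=-\varepsilon-(1-\varepsilon)\ln(1-\varepsilon)+\varepsilon^2/2\le 0.
\]

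For $f$: we have $f(0)=0$ and $f'(\varepsilon)=-\ln(1+\varepsilon)+2\varepsilon/3$, with $f'(0)=0$. Also $f''(\varepsilon)=-1/(1+\varepsilon)+2/3$, which is negative on $[0,1/2)$ and positive afterwards. So $f'$ first decreases and then increases on $[0,1]$. Since $f'(1)=2/3-\ln 2<0$, we get $f'\le0$ on all of $[0,1]$, hence $f\le0$.

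For $g$: we have $g'(\varepsilon)=\ln(1-\varepsilon)+\varepsilon$. This is nonpositive because $\ln(1-\varepsilon)\le-\varepsilon$, and $g(0)=0$, so $g\le0$.

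Exponentiating these inequalities gives the two stated bounds. Step 4 is the only step needing care, particularly the sign analysis of $f'$. I expect it to be routine, though, since both inequalities are textbook ones (see \cite[Theorem 1.1]{DubashiP09}).
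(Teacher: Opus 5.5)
Your proof is correct. It is the standard moment-generating-function argument: the convexity bound $e^{\lambda x}\le 1-x+xe^{\lambda}$ reduces to the Bernoulli case, then come Markov's inequality with $\lambda=\ln(1\pm\varepsilon)$ and the two elementary calculus inequalities, which you verify correctly. The paper gives no proof of its own; it simply cites \cite[Theorem 1.1]{DubashiP09}, whose proof follows this same route.
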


\section{Estimating other clustering validation measures}\label{sec:gen}

Our strategy that estimates the terms $\weightEl{j}, j\in[k], e\in V$ through the \PPS approach can be used to approximate various other measures for
internal clustering evaluation. 
In particular, all those measures that are based on sums of distances between elements of the dataset $V$. 
This is the case of
measures that consider the \emph{cohesion} (i.e., the average
intracluster distance) and the \emph{separation} (i.e., the average
intercluster distance) of a clustering $\clust$~\cite{TanSK06}.

More precisely, consider a $k$-clustering $\mathcal{C} = \{C_1, \dots, C_k\}$ and define the cohesion and separation of $\mathcal{C}$ as
\begin{eqnarray*} 
	\mbox{Coh}(\mathcal{C}) & = &
	\frac{1}{2} 
	\frac{\sum_{j=1}^k \sum_{e',e'' \in C_j} d(e',e'')}
	{\sum_{j=1}^k {|C_j| \choose 2}} \enspace,\\ 
	\mbox{Sep}(\mathcal{C}) & = &
	\frac{\sum_{1 \leq j_1 < j_2 \leq k}\sum_{e' \in C_{j_1}}\sum_{e'' \in C_{j_2}} 
		d(e',e'')}
	{\sum_{1 \leq j_1 < j_2 \leq k} (|C_{j-1}||C_{j_2}|)} \enspace,  
\end{eqnarray*}
respectively. 
These measures have also been used to evaluate the average cluster reliability on networks, where distances correspond to connection probabilities
\citep{LiuJAS12,CeccarelloFPPV17}.  
We can rewrite the above measures
in terms of the sums $\weightEl{}$ defined in~\Cref{eq:weights}, as
follows
\begin{align*}
	\mbox{Coh}(\mathcal{C}) = 
	\frac{1}{2}
	\frac{\sum_{j=1}^k \sum_{e \in C_j} W_{C_j}(e)}
	{\sum_{j=1}^k {|C_j| \choose 2}} \quad \text{ and } \quad     
	\mbox{Sep}(\mathcal{C}) =
	\frac{\sum_{1 \leq j_1 < j_2 \leq k}\sum_{e \in C_{j_1}} W_{C_{j_2}}(e)}
	{\sum_{1 \leq j_1 < j_2 \leq k} (|C_{j_1}||C_{j_2}|)}  \enspace . 
\end{align*}

Clearly, highly accurate approximations of terms $\weightEl{j}, j\in[k], e\in V$ with low relative errors
yield approximations with low \emph{relative error} for
$\mbox{Coh}(\mathcal{C})$ and $\mbox{Sep}(\mathcal{C})$. 
Specifically,
let 
$\widehat{\rm Coh}(\mathcal{C})$ and $\widehat{\rm Sep}(\mathcal{C})$ as the approximation of
$\mbox{Coh}(\mathcal{C})$ and $\mbox{Sep}(\mathcal{C})$ respectively. Where each approximation is obtained by
substituting the term $W_{C}(e)$
with the value $\weightElEst{}$ computed within~\algLocal. 
The following result is an immediate
consequence of~\Cref{lem:mainerror}.
\begin{theorem}
	\label{th:cohsep}
	Let $V$ be a dataset of $n$ elements, and let
	$\clust$ be a $k$-clustering of $V$. Let $\widehat{\rm
		Coh}(\clust)$ and $\widehat{\rm Sep}(\clust)$ be the
	approximations to $\mbox{Coh}(\clust)$ and
	$\mbox{Sep}(\clust)$ respectively based on the values $\weightElEst{}$
	computed within~\algLocal for parameters $0 < \varepsilon, \delta < 1$, and
	for a suitable choice of constant $c>0$ in the definition of the
	sample size $t$. Then with probability at least $1-\delta$
	\begin{align*}
		\left|
		\frac{\widehat{\rm Coh}(\mathcal{C}) - {\rm Coh}(\mathcal{C})}
		{{\rm Coh}(\mathcal{C})}
		\right| \leq \varepsilon \quad \text{ and } \quad 
		\left|
		\frac{\widehat{\rm Sep}(\mathcal{C}) - {\rm Sep}(\mathcal{C})}
		{{\rm Sep}(\mathcal{C})}
		\right| \leq \varepsilon \enspace .
	\end{align*}
\end{theorem}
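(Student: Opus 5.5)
The plan is to condition on the event $E$ of \Cref{lem:mainerror}. With $t$ chosen as in that lemma (the ``suitable choice of $c$''), $E$ holds with probability at least $1-3\delta/5\ge 1-\delta$. Under $E$, every estimate $\weightElEst{j}$ satisfies
\[
(1-\varepsilon)\weightEl{j}\le \weightElEst{j}\le (1+\varepsilon)\weightEl{j}
\]
simultaneously for all $e\in V$ and $j\in[k]$. Both claimed inequalities will then follow deterministically, because $\mbox{Coh}$ and $\mbox{Sep}$ are ratios in which only the numerators involve the $W$ terms, and those enter linearly with nonnegative coefficients.

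\textbf{Cohesion.} Write $N=\sum_{j}\sum_{e\in C_j}\weightClustEl{j}{e}$ and $\widehat N=\sum_{j}\sum_{e\in C_j}\widehat{W}_{C_j}(e)$. The denominator $D=\sum_j\binom{|C_j|}{2}$ is deterministic and the same for both quantities. Summing the per-term bounds, which is legitimate because all terms are nonnegative, gives $(1-\varepsilon)N\le \widehat N\le (1+\varepsilon)N$. Since $\widehat{\rm Coh}=\widehat N/(2D)$ and ${\rm Coh}=N/(2D)$, this yields $|\widehat{\rm Coh}-{\rm Coh}|\le \varepsilon\,{\rm Coh}$.

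\textbf{Separation.} The argument is identical. The numerator is $\sum_{j_1<j_2}\sum_{e\in C_{j_1}}W_{C_{j_2}}(e)$, whose terms are each estimated within relative error $\varepsilon$ under $E$, because $E$ covers every pair (element, cluster) including $e\notin C_j$. The denominator $\sum_{j_1<j_2}|C_{j_1}||C_{j_2}|$ is exact.

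\textbf{Degenerate cases.} First, if ${\rm Coh}(\clust)=0$ or ${\rm Sep}(\clust)=0$, the relative error is undefined. This cannot happen in a metric space of distinct elements, except for Coh when every cluster is a singleton; then $D=0$ as well, and I would treat the statement as vacuous there. Second, clusters with $|C_j|\le t$ are handled exactly by the algorithm, since $\F_{C_j}=C_j$ and $p_e=1$, so they are trivially included in $E$.

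The only step needing care is confirming that \Cref{lem:mainerror} indeed gives the simultaneous bound for all $nk$ pairs, including cross-cluster ones. It does, by the union bound in its proof. Apart from that, there is no real obstacle.
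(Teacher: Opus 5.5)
Your proposal is correct and follows the paper's route exactly. The paper presents this theorem as an immediate consequence of \Cref{lem:mainerror}, and your argument (conditioning on $E$, then summing the two-sided relative bounds over the nonnegative numerators with exact denominators) is precisely that consequence.

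One refinement of your closing remark. The cross-cluster pairs needed for $\mbox{Sep}$ come from the per-pair tail bound in \Cref{lem:mainerror}, not from its union bound. That per-pair step is written only for $e\in C_j$. For $e\notin C_j$, the triangle inequality gives $d(e,e')/\weightEl{j}\le 4/|C_j|+2\,d(e^*,e')/\weightClustEl{j}{e^*}$, where $e^*$ minimizes $W_{C_j}(\cdot)$ over $C_j$. Combined with \Cref{eq:gamma} and $\gamma_{e'}\ge 1/|C_j|$, this only replaces $c$ by $2c+4$, which is consistent with the ``suitable choice of $c$'' in the statement.
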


\section{Reproducibility}\label{appsec:reprod}

\subsection{Datasets}\label{appsec:data}
We consider various datasets from the UCI and SNAP repositories, that are publicly available online.\footnote{\url{https://archive.ics.uci.edu/datasets/}, \url{https://snap.stanford.edu/data/}}
We provide a script to download and process each dataset with our code.\footnote{\codeRepo}
For the \kmeans objective we used the implementation from \texttt{scikit-learn},\footnote{\url{https://scikit-learn.org/}} which uses Lloyd's algorithm coupled with an initialization obtained through $k$-means++.
For the \kmed objective dataset we used the state-of-the-art approach by~\citet{schubert2022fast}.\footnote{\url{https://pypi.org/project/kmedoids/}}

\subsection{Setup}\label{appsec:missingsetup}
We implemented our algorithms and the baseline methods in C++20, and compiled it under gcc 9.5.0, 
with all optimization flags set. 
Except otherwise stated, all the experiments were performed single-threaded on a 72-core machine Intel Xeon Gold, running Ubuntu 20.04. 
To cluster the various datasets, we used Python 3.12.3. 
For the \kmeans objective, we used the default implementation from the Scikit-learn library,\footnote{\url{https://scikit-learn.org/stable/}} while for the \kmed objective we used the \kmed library~\citep{Schubert2021PAM}.\footnote{\url{https://pypi.org/project/kmedoids/}} Our distributed implementation of \algLocal is done using the OpenMP library for C++, given that our execution is on a single many-core machine.

\subsection{Parallel implementation}\label{appsec:parallelImp}
To parallelize our algorithm~\algLocal we rely on the OpenMP library. Our implementation adopts the following parallelization strategy:
\begin{itemize}
	\item We parallelize the \textbf{for} loops in~\Cref{line:initSampleW,line:forLoopPPS} of~\Cref{code:algorithm}, inspired by \textbf{Round 1} and \textbf{Round 2} of our Mapreduce algorithm;
	\item We parallelize the \textbf{for} loop in~\Cref{line:forLoopAllElements} of~\Cref{code:algorithm}, inspired by \textbf{Round 3} of our Mapreduce algorithm.
\end{itemize}
For each of the above parallel loop executions, we perform synchronization on the necessary data-structures, yielding some of the bottlenecks discussed in~\Cref{subsec:parExps}.

\section{Additional results}\label{appsec:additRes}
\begin{table}[t]
	\caption{Algorithms compared in our experimental evaluation in~\Cref{subsec:globalExps}, and their required parameters: 
		$t$ is used in Phase 1 of our~\Cref{code:algorithm};
		$m$ is used to evaluate the estimators in~\Cref{eq:est2,eq:est3}; while we denote with $m'$ the value of the sample size used to evaluate~\Cref{eq:est1}. Both parameters $t$ and $m$ apply to \PPS and its uniform Poisson-sampling variant \UNIbuck (see~\Cref{subsubsec:uniFails}).}\label{tab:paramsRecap}
	\centering
	\begin{tabular}{lccc}
		\toprule
		Algorithm name & $t$ & $m$ & $m'$\\
		\midrule
		\algGlobOne & \xmark & \xmark & \vmark\\
		\algGlobPPS, \algGlobUNI & \vmark & \xmark & \xmark\\
		\algGlobSub, \algGlobUNISub & \vmark & \vmark & \xmark\\
		\bottomrule
	\end{tabular}
\end{table}

\begin{table}[t]
	\caption{Average runtime (in seconds) required by the sequential execution of~\algLocal (i.e., $T_1$) under the setting of~\Cref{subsec:parExps}.}\label{tab:runtimesParallel}
	\centering
	\begin{tabular}{lccccc}
		\toprule
		Configuration & $k=2$ & $k=5$ & $k=10$ & $k=15$ & $k=20$\\
		\midrule
		\mtdata, $t=512$ & 35.8 &  72.5 &  254.3 &  392.6 &  506.8\\
		\mtdata, $t=1024$ & 72.8 &  223.9 &  525.7 &  702.9 &  984.8\\
		\powdata, $t=512$ & 44.4 &  109.6 &  356.8 &  561.0 &  736.9\\
		\powdata, $t=1024$ & 87.7 &  354.6 &  756.3 &  1075.2 &  1427.7\\
		\bottomrule
	\end{tabular}
\end{table}

\begin{table}[t]
	\caption{Values of $k$ ordered by $\sEl{\clust_k}$ (ordering over $\kvecEx$). The setting is from~\Cref{subsubsec:expClustIdentification}.}\label{tab:appclustid}
	\centering
	\begin{tabular}{lcccc}
		\toprule
		& \multicolumn{2}{c}{\crdata} & \multicolumn{2}{c}{\winedata} \\
		\cmidrule(lr){2-3} \cmidrule(lr){4-5}
		& $k$ & $\sEl{\clust_k}$ & $k$ & $\sEl{\clust_k}$ \\
		\midrule
		$\kvecEx_1$ & $k=2$ & 0.552   & $k=2$ & 0.51\\ 
		$\kvecEx_2$ & $k=4$ & 0.463   & $k=3$ & 0.504\\ 
		$\kvecEx_3$ & $k=3$ & 0.451   & $k=4$ & 0.45\\ 
		$\kvecEx_4$ & $k=5$ & 0.419   & $k=5$ & 0.418\\ 
		$\kvecEx_5$ & $k=7$ & 0.391   & $k=6$ & 0.392\\ 
		$\kvecEx_6$ & $k=6$ & 0.387   & $k=7$ & 0.367\\ 
		$\kvecEx_7$ & $k=8$ & 0.38    & $k=8$ & 0.357\\ 
		$\kvecEx_8$ & $k=10$ & 0.364  & $k=9$ & 0.342\\ 
		$\kvecEx_9$ & $k=9$ & 0.343   & $k=10$ & 0.334\\ 
		$\kvecEx_{10}$ & $k=11$ & 0.336  & $k=11$ & 0.329\\ 
		$\kvecEx_{11}$ & $k=15$ & 0.333  & $k=12$ & 0.326\\ 
		$\kvecEx_{12}$ & $k=14$ & 0.314  & $k=14$ & 0.325\\ 
		$\kvecEx_{13}$ & $k=12$ & 0.312  & $k=15$ & 0.325\\ 
		$\kvecEx_{14}$ & $k=13$ & 0.304  & $k=13$ & 0.323\\ 
		\bottomrule
	\end{tabular}
\end{table}

In this section we complement our extensive experimental evaluation, by presenting additional results to those of~\Cref{sec:exp}. 

\begin{itemize}
	\item In~\Cref{fig:globalEst2} we show, for all the missing large dataset of~\Cref{tab:data}. The results of the experiments in~\Cref{subsubsec:largeData} that compare the global estimates of the various randomized algorithms.
	\item In~\Cref{fig:bucketsLargeApp} we show additional results for the accuracy of \PPSbuck and \UNIbuck approaches over various buckets---extending results from~\Cref{subsubsec:exactBuckets}.
	\item In~\Cref{fig:clustErrMaxandAvgApp} we show additional results on the average and maximum error on each cluster \clust of \PPSbuck and \UNIbuck for the setting of~\Cref{subsubsec:clustsAcc}. 
	\item In~\Cref{tab:runtimesParallel} we show the average sequential runtime required by~\algLocal considering the setting of~\Cref{subsec:parExps}.
	\item In~\Cref{tab:appclustid} we show the values of $\sEl{\clust_k}$ for each $k\in \bm{k}$ under the setting of~\Cref{subsec:applicationsExps}.
\end{itemize}

\begin{figure*}[!tbp]
	\centering
	\captionsetup[subfigure]{labelformat=empty}
	\includegraphics[width=0.8\textwidth]{plots/media/largedata/loc-gowalla_totalCheckins_preprocessed/globalEsts/k2/label.pdf}\\
	\begin{tabular}{lr}
		\includegraphics[width=0.49\textwidth]{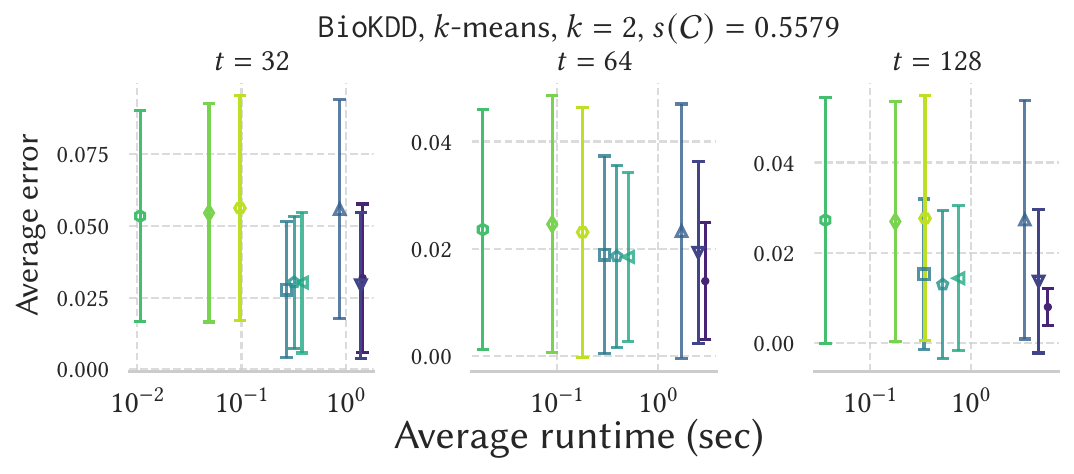} &
		\includegraphics[width=0.49\textwidth]{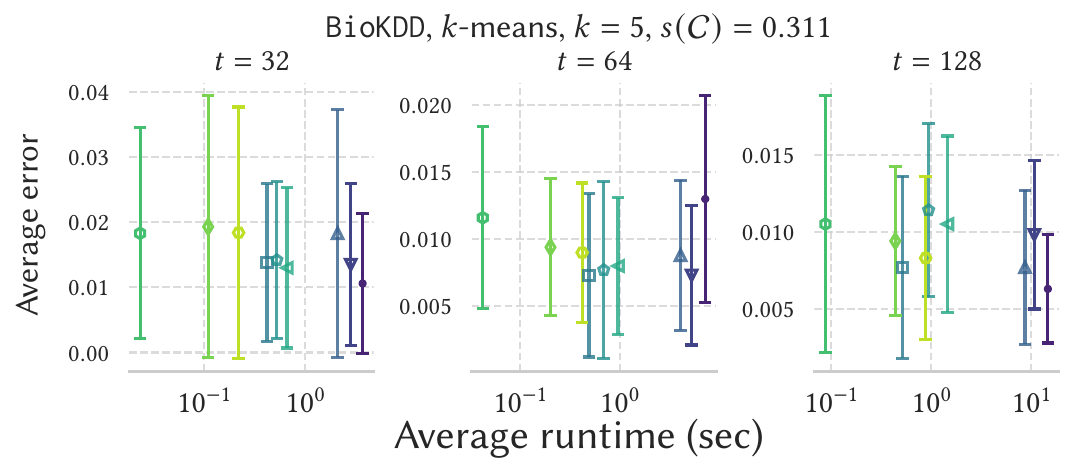}\\
		\includegraphics[width=0.49\textwidth]{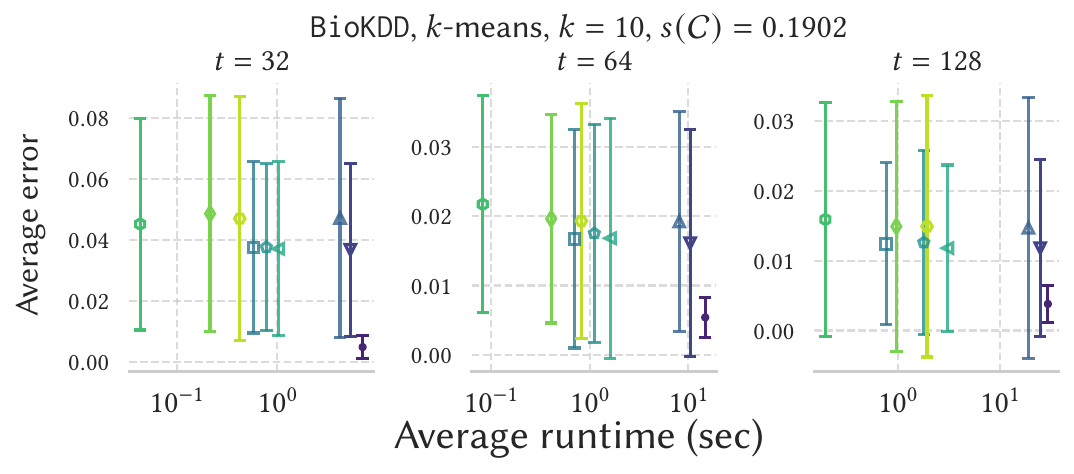} &
		\includegraphics[width=0.49\textwidth]{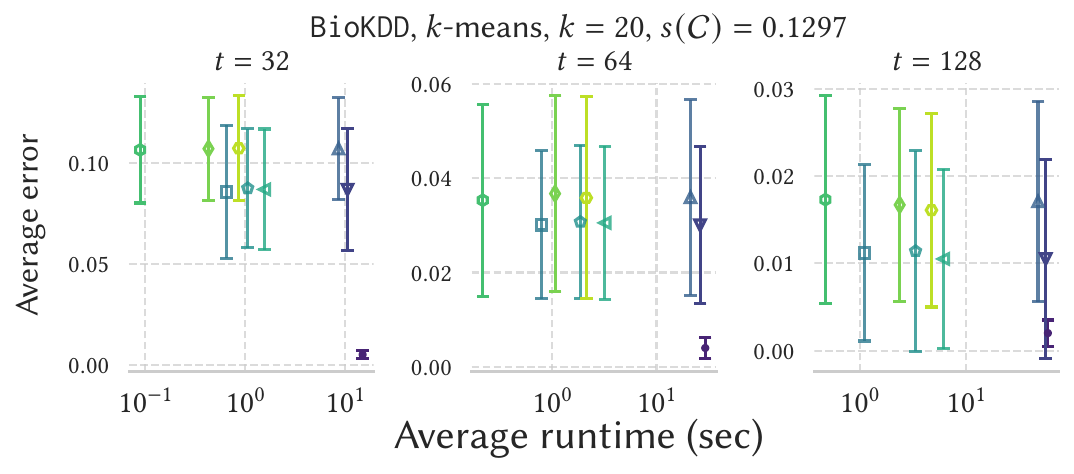}\\
		\includegraphics[width=0.49\textwidth]{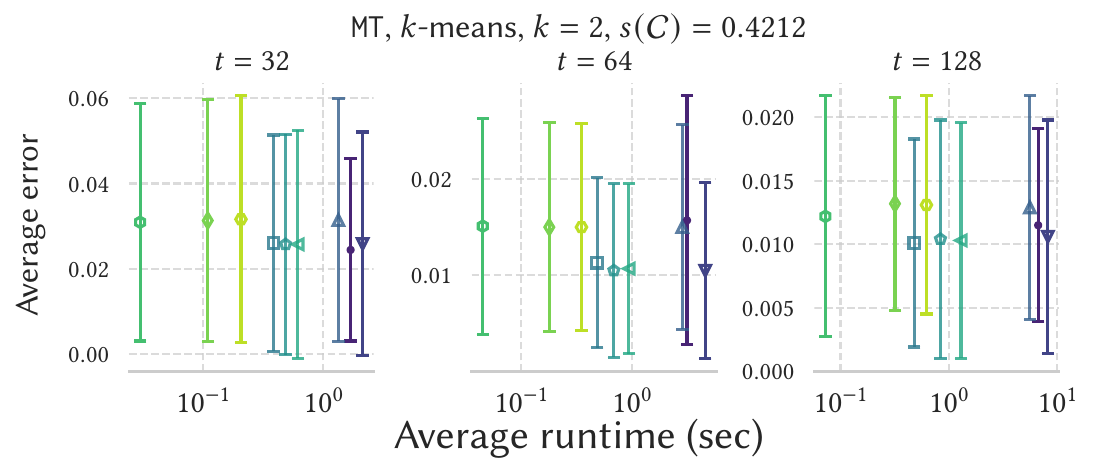} &
		\includegraphics[width=0.49\textwidth]{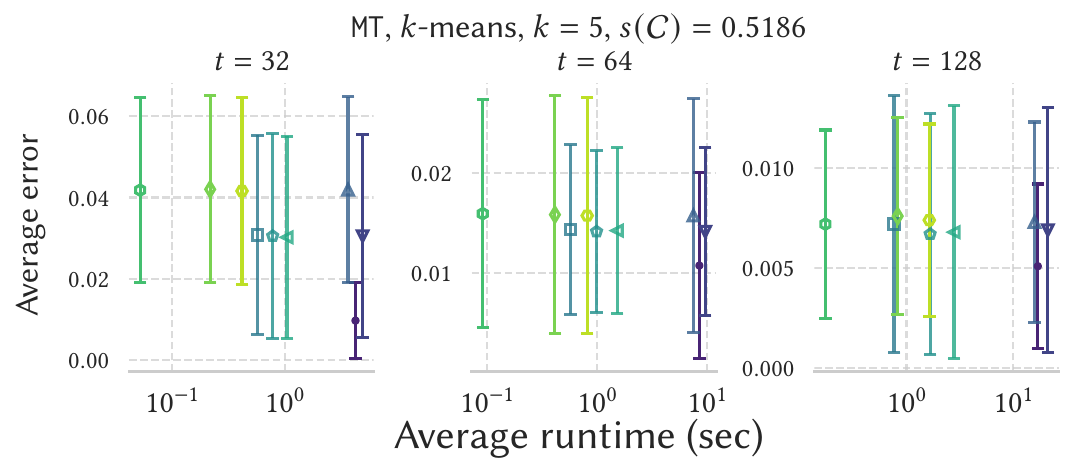}\\
		\includegraphics[width=0.49\textwidth]{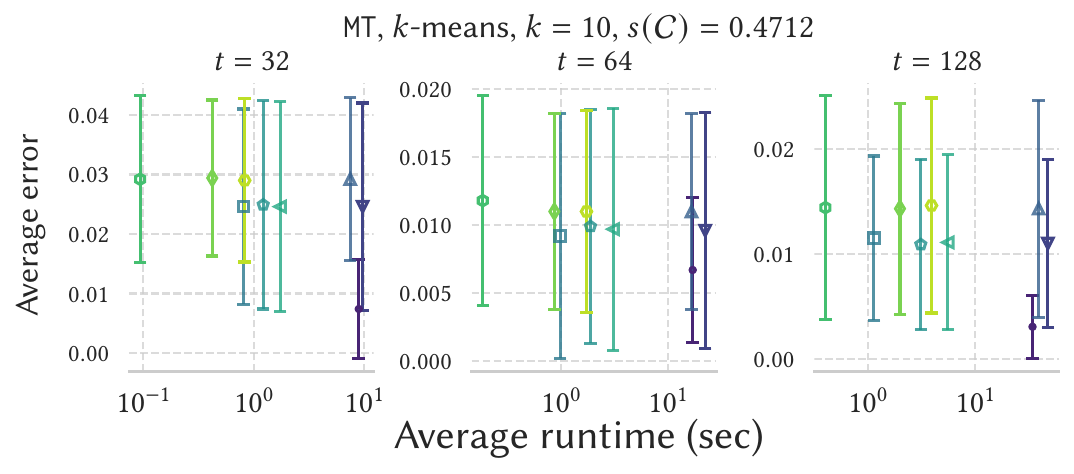} &
		\includegraphics[width=0.49\textwidth]{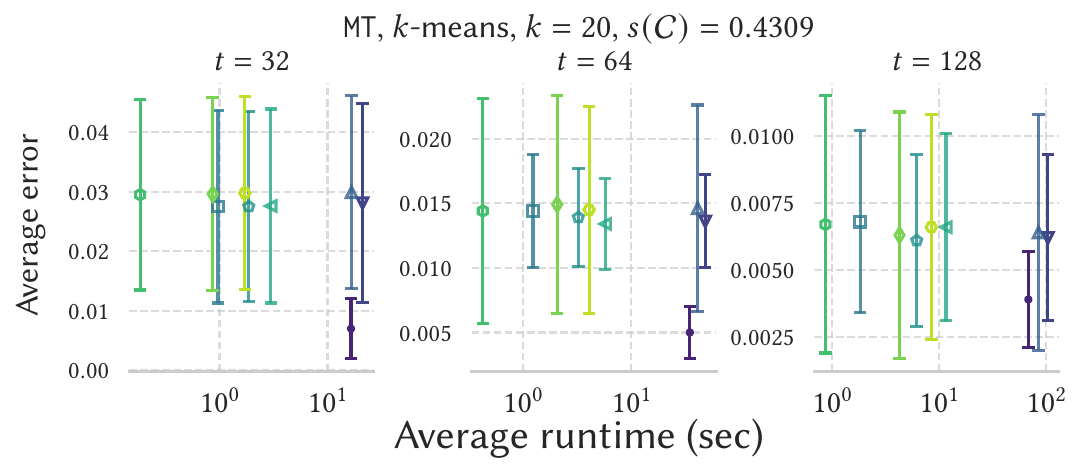}\\
		\includegraphics[width=0.49\textwidth]{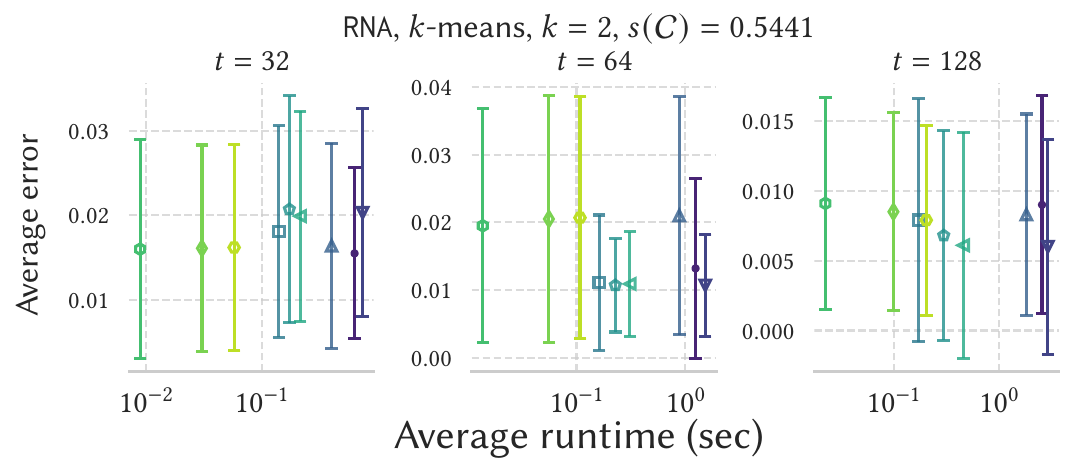} &
		\includegraphics[width=0.49\textwidth]{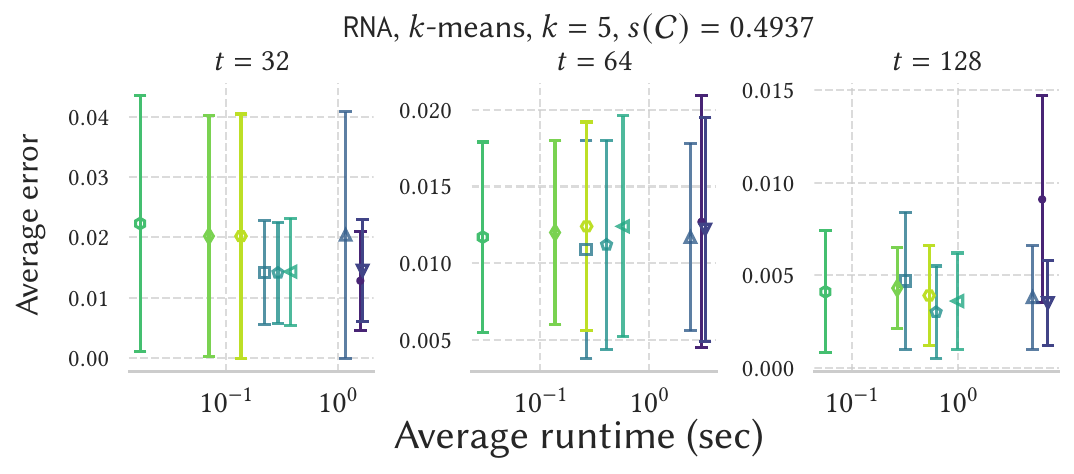}\\
		\includegraphics[width=0.49\textwidth]{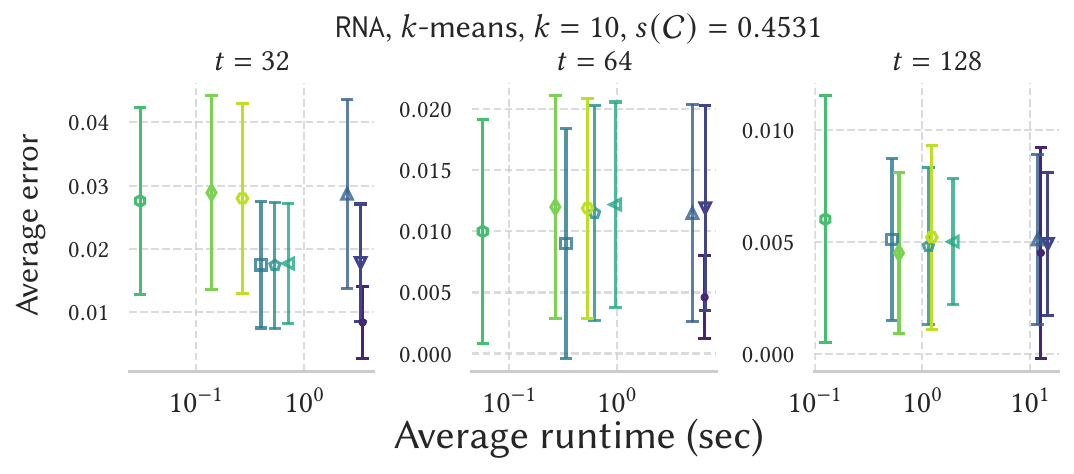} &
		\includegraphics[width=0.49\textwidth]{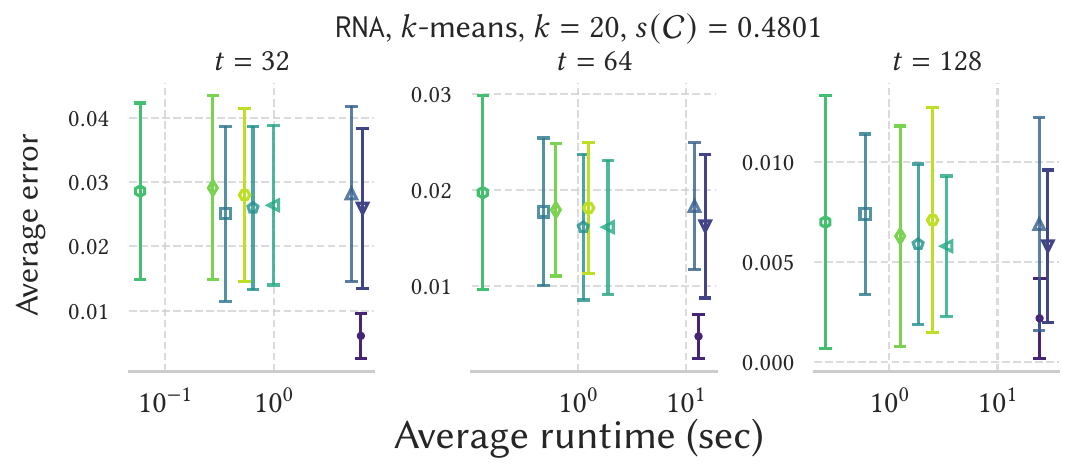}\\
	\end{tabular}
	\caption{Methods comparison. The $x$ axis is associated with the average runtime, and the $y$ axis with the average error (and its standard deviation) over 10 independent runs. For each clustered dataset we considered different values of the expected sample size $t\in \{32,64,128\}$, as illustrated.}
	\label{fig:globalEst2}
\end{figure*}

\begin{figure*}[!tbp]
	\centering
	\captionsetup[subfigure]{labelformat=empty}
	\includegraphics[width=0.25\textwidth]{plots/media/largedata/loc-gowalla_totalCheckins_preprocessed/buckets/k2/legend.pdf}\\
	\begin{tabular}{lc}
		\includegraphics[width=0.45\textwidth]{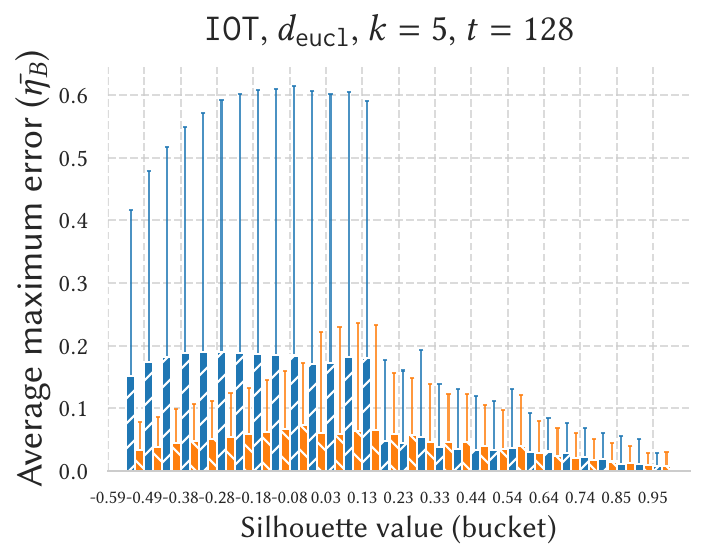} &
		\includegraphics[width=0.45\textwidth]{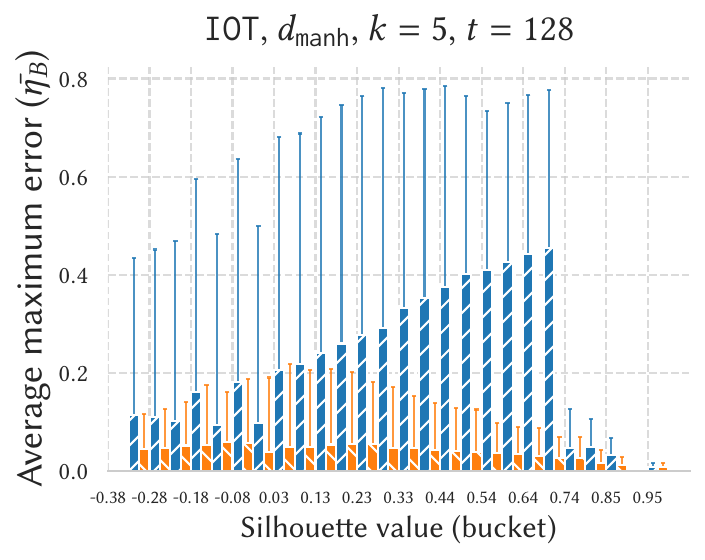} \\
		\includegraphics[width=0.45\textwidth]{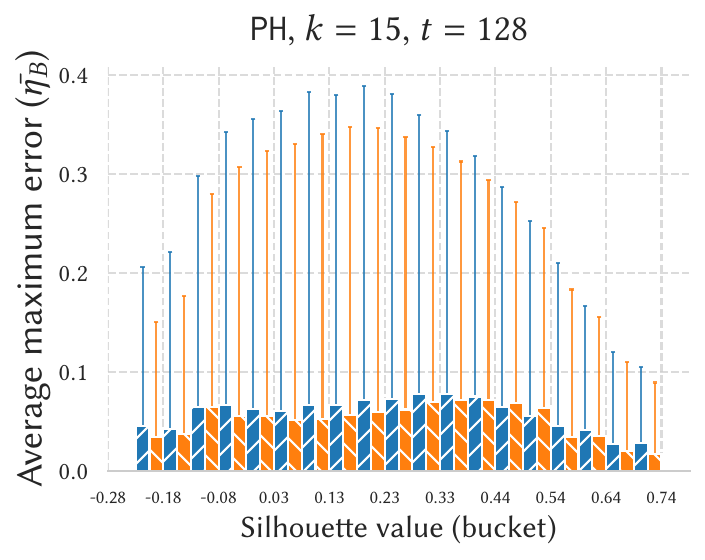} & \includegraphics[width=0.45\textwidth]{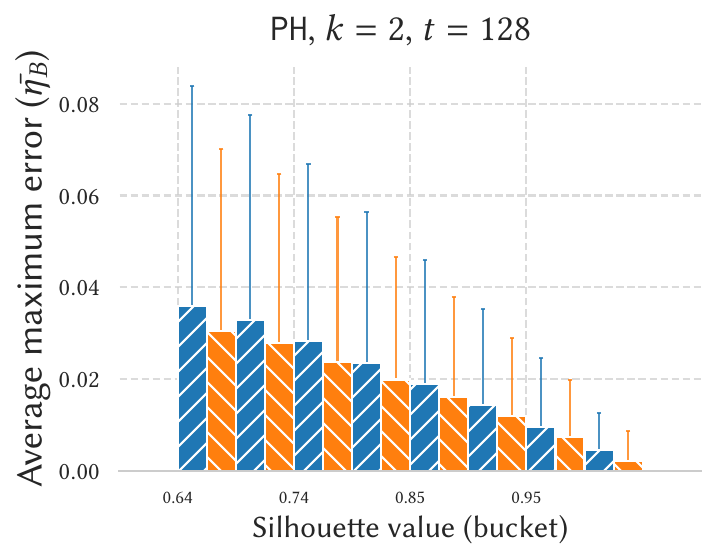} \\
		\includegraphics[width=0.45\textwidth]{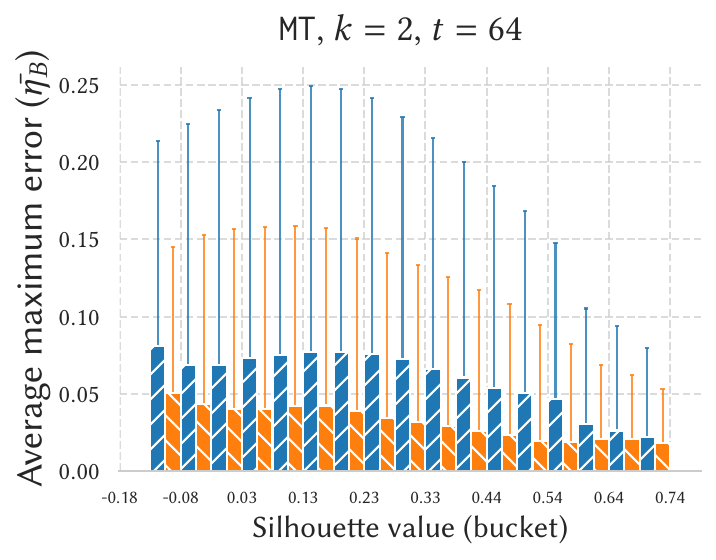} & \includegraphics[width=0.45\textwidth]{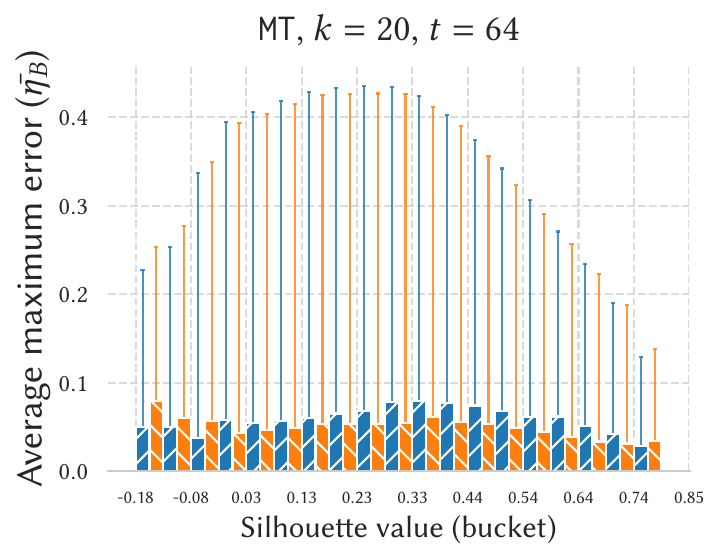} \\
	\end{tabular}
	\caption{
		Plots of the average maximum error and its standard deviation obtained on all elements $e\in V$, grouped by bucket, for \PPSbuck and \UNIbuck, over four configurations. When we report the distance, it indicates that the dataset has been clustered with \kmed. For ease of visualization we only display the average maximum error over non-empty buckets.}
	\label{fig:bucketsLargeApp}
\end{figure*}

\begin{figure*}[!tbp]
	\centering
	\captionsetup[subfigure]{labelformat=empty}
	\includegraphics[width=0.3\textwidth]{plots/media/largedata/loc-gowalla_totalCheckins_preprocessed/clustMaxAvg/k5/legend.pdf}\\
	\begin{tabular}{lr}
		\includegraphics[width=0.5\textwidth]{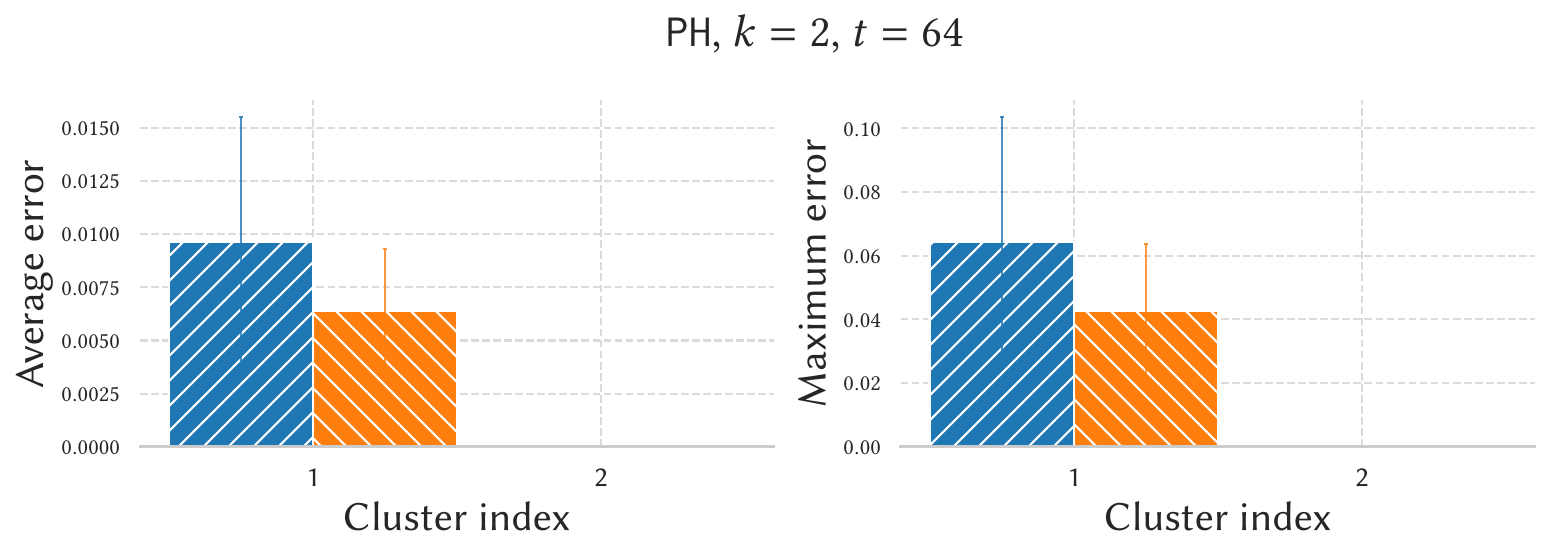} & 
		\includegraphics[width=0.5\textwidth]{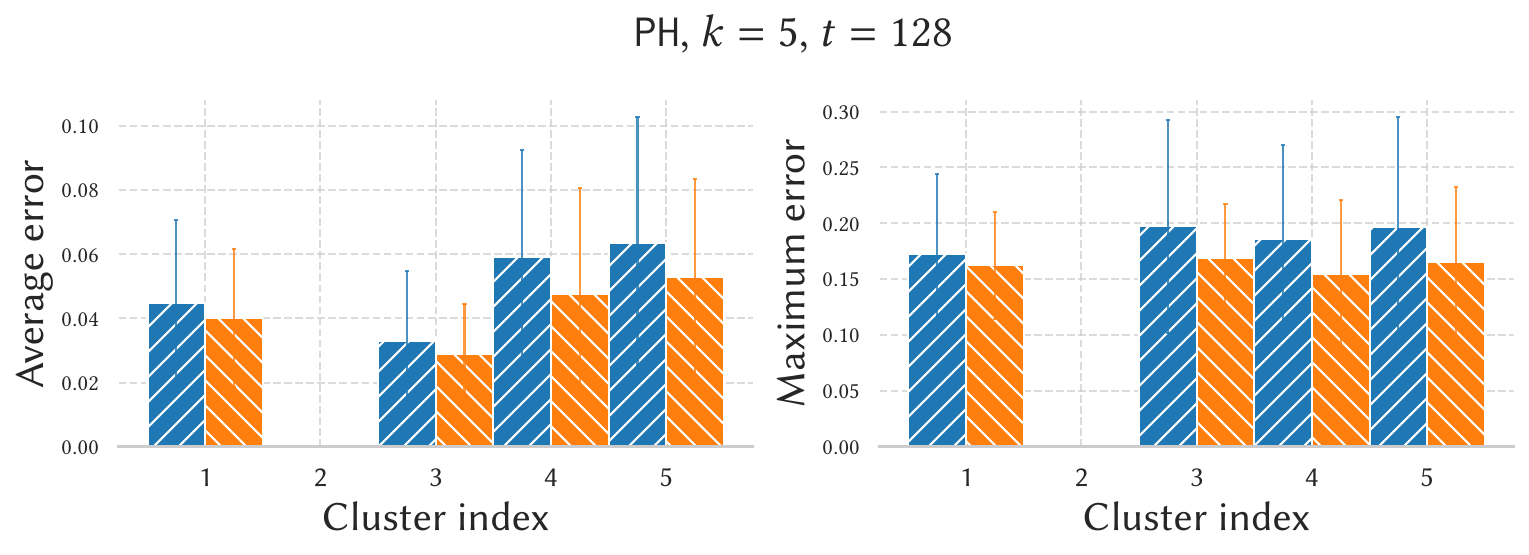}
		\\
		\includegraphics[width=0.5\textwidth]{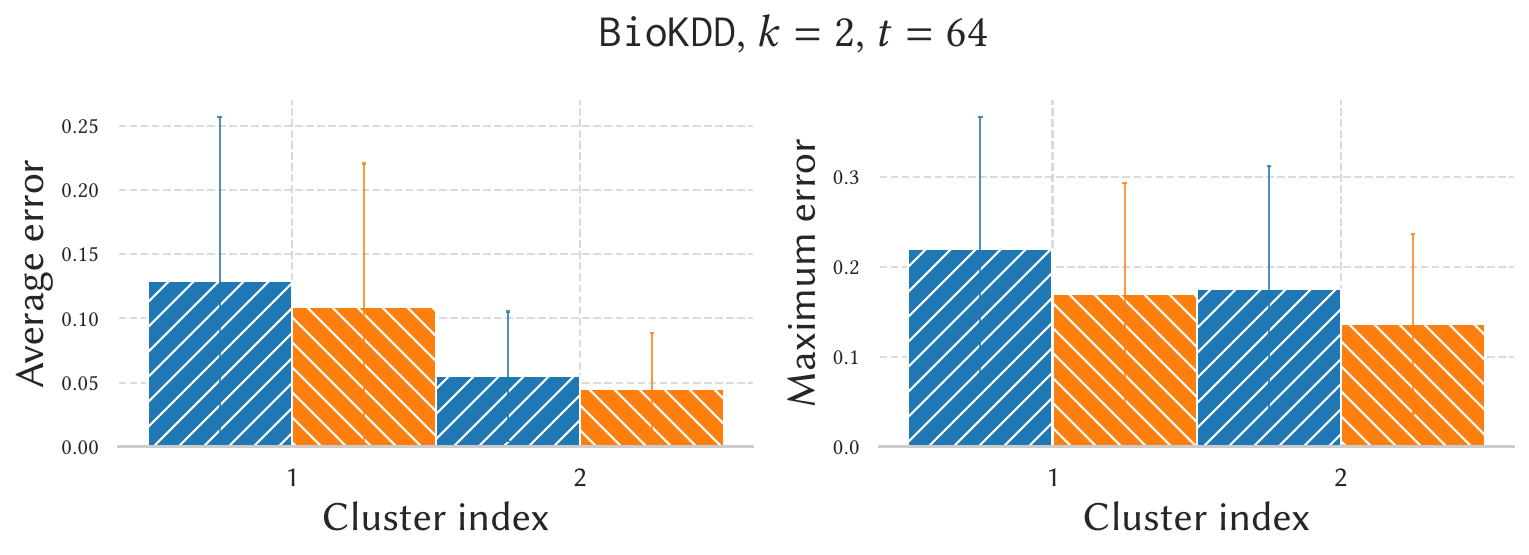} & 
		\includegraphics[width=0.5\textwidth]{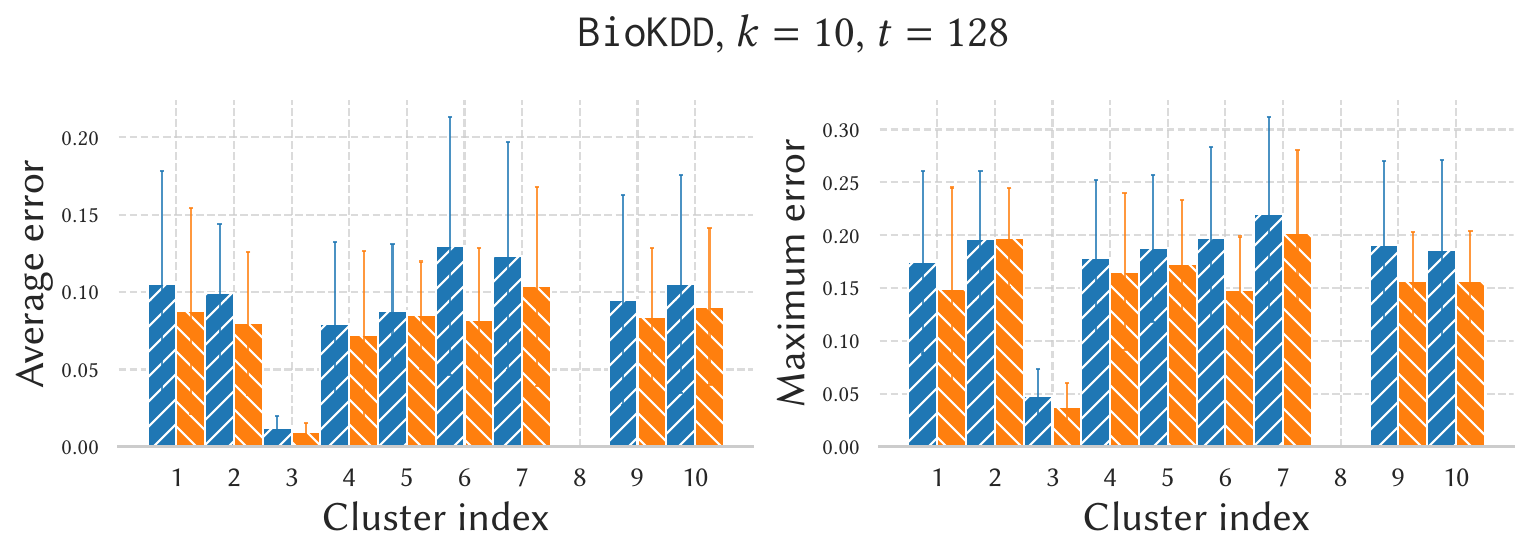}\\
		\includegraphics[width=0.5\textwidth]{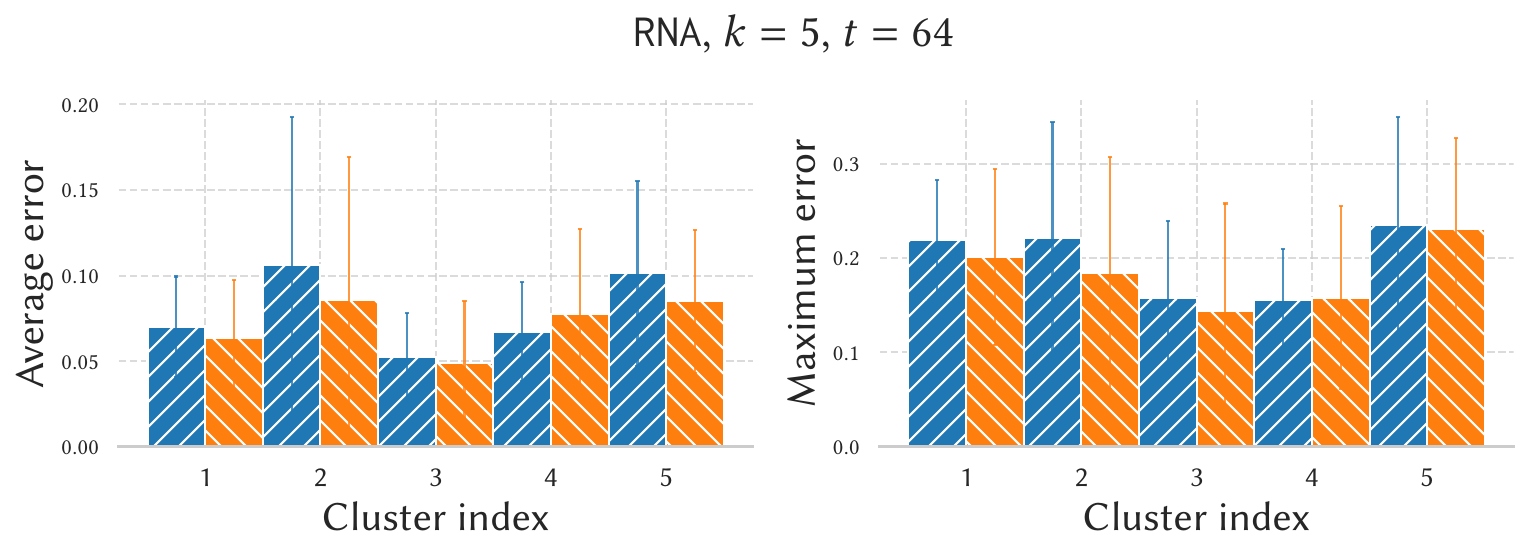} & 
		\includegraphics[width=0.5\textwidth]{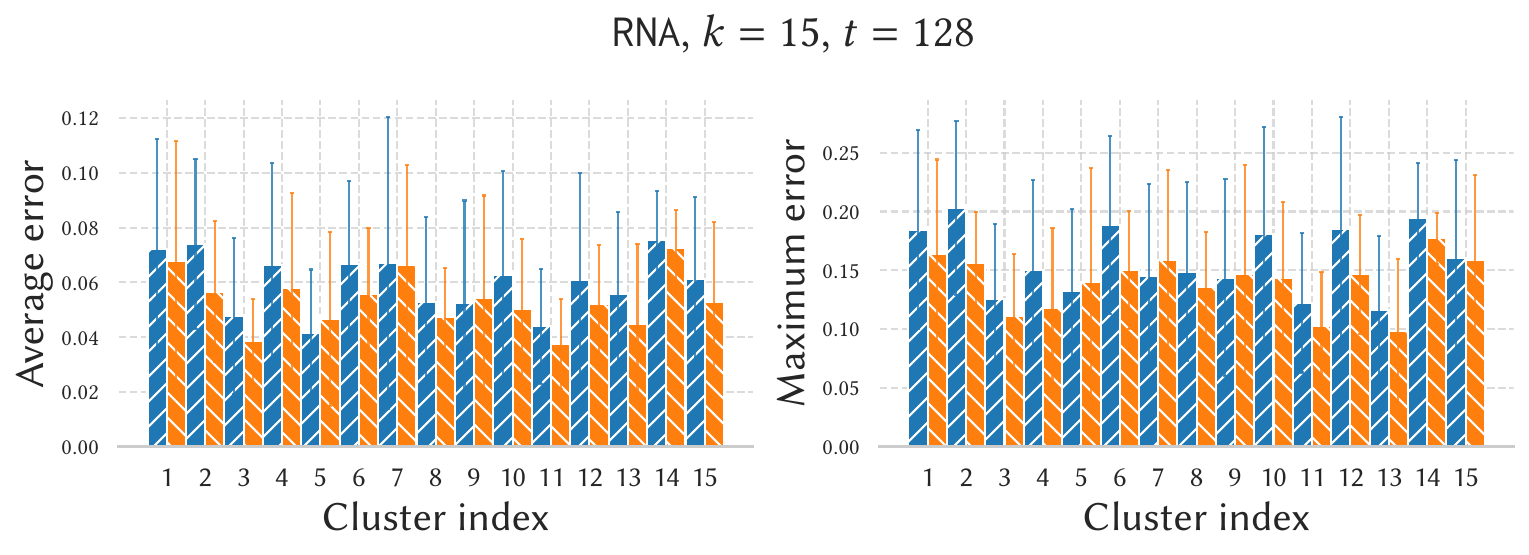}\\
	\end{tabular}
	\caption{For each cluster in \clust we show the average and maximum errors achieved by \algLocal (\PPSbuck) and the uniform sampling based approach from~\Cref{subsubsec:uniFails} (\UNIbuck). Each dot represent an independent run.}
	\label{fig:clustErrMaxandAvgApp}
\end{figure*}

\end{document}